\newcommand{\benny}[1]{{\texttt{\color{purple} (Benny)~ [{#1}]}}}
\newcommand{\shunit}[1]{{\color{violet}\bf (Shunit)~[#1]}{\typeout{#1}}}
\newcommand{\revb}[1]{{\leavevmode\color{black}{#1}}}
\newcommand{\revc}[1]{{\leavevmode\color{black}{#1}}}
\newcommand{\revcommon}[1]{{\leavevmode\color{black}{#1}}}
\newcommand{\red}[1]{{\color{red} {#1}}}
\newcommand{\cut}[1]{}
\DeclareMathOperator*{\argmax}{arg\,max}
\crefname{algocf}{alg.}{algs.}
\Crefname{algocf}{Algorithm}{Algorithms}
\newtheorem{theorem}{Theorem}[section]
\begin{document}

\def\scs{\mathcal{S}}
\def\agg#1{\mathord{\mathsf{#1}}}
\def\maxagg{$\agg{max}$\xspace}
\def\minagg{$\agg{min}$\xspace}
\def\sumagg{$\agg{sum}$\xspace}
\def\avgagg{$\agg{avg}$\xspace}
\def\medianagg{$\agg{median}$\xspace}
\def\countagg{$\agg{count}$\xspace}
\def\countdagg{$\agg{countd}$\xspace}

\def\emaxagg{\agg{max}}
\def\eminagg{\agg{min}}
\def\esumagg{\agg{sum}}
\def\eavgagg{\agg{avg}}
\def\emedianagg{\agg{median}}
\def\ecountagg{\agg{count}}
\def\ecountdagg{\agg{countd}}

\def\atts{\mathit{Att}}
\def\set#1{\mathord{\{#1\}}}
\def\dbr#1{\mathord{\llbracket #1 \rrbracket}}
\def\bottom{\underline{b}}
\def\top{\overline{b}}
\def\procaggname{\mathsf{WholePack}}
\def\procagg#1{$\procaggname\langle{\mathord{\mathsf{#1}}}\rangle$}
\def\procaggalpha{$\procaggname\langle{\mathord{\alpha}\rangle}$\xspace}
\def\tupleprob{$\mathsf{TTD}\langle{\scs,\alpha,\delta}\rangle$\xspace}
\def\predprob{$\mathsf{TPD}\langle{\scs,\alpha,\delta,\kappa}\rangle$\xspace}
\def\aod{AOD\xspace}

\def\dpalg{$\mathsf{CardRepair}$\xspace}
\def\greedyalg{$\mathsf{HeurRepair}$\xspace}
\def\crepair{C-repair\xspace}
\def\crepairs{C-repairs\xspace}

\def\addtuples{noise tuples\xspace}
\newcommand{\dom}{{\tt dom}}
\newcommand{\bigo}{\ensuremath{\mathcal{O}}}
\newcommand{\pattern}{\ensuremath{\mathbb{P}}}

\newcommand{\hide}[1]{}

\def\ne{{\nearrow}}
\def\se{{\searrow}}

\theoremstyle{definition} 
\newtheorem{remark}{Remark}

\title{Analyzing Deviations from Monotonic Trends \\ through Database Repair}

\def\shorttitle{Analyzing Deviations from Monotonic Trends through Database Repair}


\author{Shunit Agmon}
\orcid{0000-0001-9605-4131}
\affiliation{%
  \institution{Technion}
  \streetaddress{}
  \country{Israel}
}
\email{shunit.agmon@gmail.com}

\author{Jonathan Gal}
\orcid{0000-0002-3948-5266}
\affiliation{%
  \institution{Technion}
  \country{Israel}
}
\email{jonathan.gal@campus.technion.ac.il}

\author{Amir Gilad}
\orcid{0000-0002-3764-1958}
\affiliation{%
  \institution{Hebrew University}
  \country{Israel}
}
\email{amirg@cs.huji.ac.il}

\author{Ester Livshits}
\orcid{0000-0003-3485-9887}
\affiliation{%
  \institution{Technion}
  \country{Israel}
}
\email{esterlivshits@gmail.com}

\author{Or Mutay}
\orcid{0009-0006-2997-4775}
\affiliation{%
  \institution{Technion}
  \country{Israel}
}
\email{or.mutay@campus.technion.ac.il}

\author{Brit Youngmann}
\orcid{0000-0002-0031-5550}
\affiliation{%
  \institution{Technion}
  \country{Israel}
}
\email{brity@technion.ac.il}

\author{Benny Kimelfeld}
\orcid{0000-0002-7156-1572}
\affiliation{%
  \institution{Technion}
  \country{Israel}
}
\email{bennyk@cs.technion.ac.il}







\renewcommand{\shortauthors}{Anonymous authors}

\begin{abstract}
Datasets often exhibit violations of expected monotonic trends—for example, higher education level correlating with higher average salary, newer homes being more expensive, or diabetes prevalence increasing with age. We address the problem of quantifying how far a dataset deviates from such trends. To this end, we introduce Aggregate Order Dependencies (AODs), an aggregation-centric extension of the previously studied order dependencies. An AOD specifies that the aggregated value of a target attribute (e.g., mean salary) should monotonically increase or decrease with the grouping attribute (e.g., education level).

We formulate the AOD repair problem as finding the smallest set of tuples to delete from a table so that the given AOD is satisfied. We analyze the computational complexity of this problem and propose a general algorithmic template for solving it. We instantiate the template for common aggregation functions, introduce optimization techniques that substantially improve the runtime of the template instances, and develop efficient heuristic alternatives.
Our experimental study, carried out on both real-world and synthetic datasets, demonstrates the practical efficiency of the algorithms and provides insight into the performance of the heuristics. We also present case studies that uncover and explain unexpected AOD violations using our framework.
\end{abstract}


\begin{CCSXML}
<ccs2012>
   <concept>
       <concept_id>10002951.10002952</concept_id>
       <concept_desc>Information systems~Data management systems</concept_desc>
       <concept_significance>500</concept_significance>
       </concept>
 </ccs2012>
\end{CCSXML}

\ccsdesc[500]{Information systems~Data management systems}



\keywords{Database repair, aggregate order constraints, trend analysis}


\maketitle

\setcounter{page}{1}

\begin{figure}
    \centering
    \includegraphics[width=0.9\linewidth]{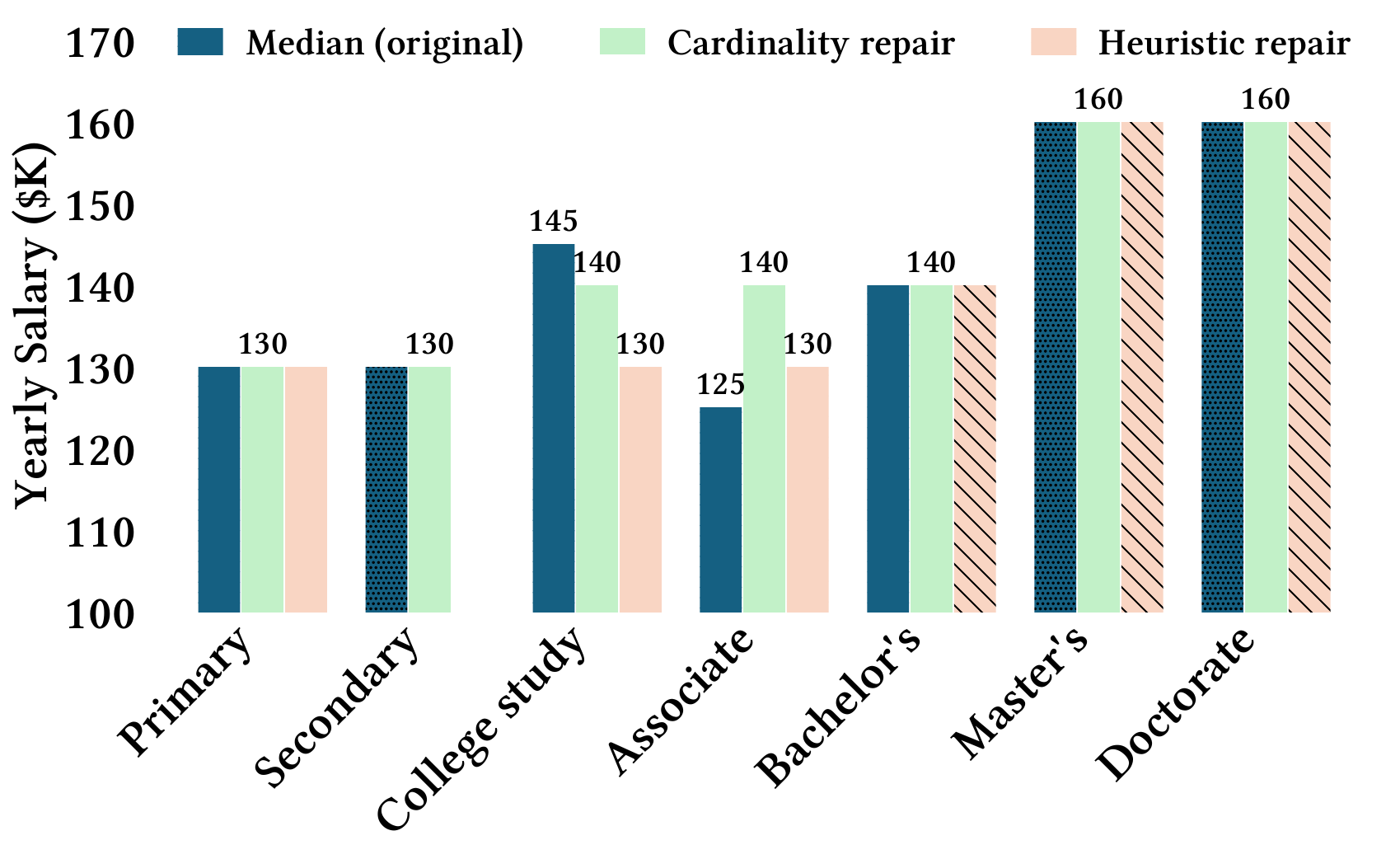}
    \vskip-1em
    \caption{Median salary versus education level in the USA, computed with the Stack Overflow dataset. ``College study'' refers to people who have studied in college or university without a degree. ``Doctoral'' includes professional doctoral degrees (e.g., MD, JD) and other doctoral degrees (e.g., PhD). 
    \label{fig:SO_example}}
\end{figure}

\section{Introduction}\label{sec:intro}

Real-world datasets frequently violate anticipated monotonic relationships, such as the expectation that the median salary increases with the education level, that newer homes are generally more expensive, and that the prevalence of diabetes rises with age. 
\revb{The detection and analysis of monotonic trends in data is a fundamental problem in statistics~\cite{puri1990recursive,ramsay1998estimating,brunk1955maximum}, with wide-ranging applications in economics, medicine, and the social sciences. For example, monotonicity constraints incorporated into predictive models have been shown to increase predictive accuracy for housing prices, disease prediction, student success, and more~\cite{zhou2016,ovchinnik2019,gonzalez2024}.}
A central task in this domain is to determine whether an underlying relationship---often modeled through noise-aware regression---is monotonic, either increasing or decreasing, with respect to an explanatory variable. 
The inherent challenge is distinguishing monotonic relationships from random fluctuations, especially in high-noise or sparsely sampled data; notable contributions include the work of \citet{ghosal2000testing} and of \citet{hall2000testing}, who developed reliable tests for monotonicity.
In this work, we propose a complementary database-centric view on the problem of analyzing monotonic trends and explaining their violation, namely \emph{measuring the amount of intervention} required to exhibit monotonicity. 

\revb{
\vspace{-1em}
\begin{example}\label{ex:so_example}
The Stack Overflow dataset\footnote{\label{so-url}\url{https://survey.stackoverflow.co/2022} (accessed April 2025)} consists of responses from developers to questions about their jobs, including demographics, education level, role, and salary. We focus on entries from USA (8682 tuples).
One might expect that, as the education level grows, so does the median yearly salary.
This, however, is not the case, as \Cref{fig:SO_example} shows (left bars). 
For example, the median salary of people with some college or university studies but without a degree (\$145K)
is higher than that of people with an Associate's degree (\$125K) or Bachelor's degree (\$140K). Additionally, the median salary of people with primary and secondary school education (\$130K each) is higher than that of people with an Associate degree (\$125K).
\qed
\end{example}
}

\revb{To analyze the violation of monotonic trends,}
we propose the \emph{aggregate order dependency} (AOD), stating a monotonic trend exhibited by an aggregate query over an ordered grouping attribute. More precisely, an AOD has the form $G \nearrow \alpha(A)$ where $G$ is a grouping attribute (column name) with a linearly ordered domain, $A$ is an aggregated attribute, and $\alpha$ is an aggregate function. The AOD $G \nearrow \alpha(A)$ states that whenever one grouping variable is greater than another, its aggregated value cannot be lower. For example, 
``$\mbox{education}\ne\emedianagg(\mbox{salary})$'' states that the median salary grows with the level of education. Put differently, the AOD $G \nearrow \alpha(A)$ states that the SQL query
\begin{align*}
\textsf{SELECT $G$, $\alpha(A)$ \; FROM $R$ \; GROUP BY $G$}
\end{align*}
should return a table that defines a monotonically non-decreasing function. One can analogously define the AOD $G \searrow \alpha(A)$ that requires a non-increasing trend. 

\revb{The computational challenge we study in this paper is that of finding a \emph{cardinality-based repair} (\crepair for short), that is, 
the largest subset of the table that satisfies the AOD.
While \crepairs have been traditionally studied as a \emph{prescriptive} tool in data cleaning~\cite{DBLP:series/synthesis/2012Fan,DBLP:conf/icdt/KolahiL09,DBLP:journals/tods/LivshitsKR20}, our work is better motivated by the \emph{descriptive} aspect of measuring 
how far the table is from satisfying the AOD and, consequently, the monotonic trend that it suggests.} From another angle, our work can be seen as applying the ``minimal repair'' measure of inconsistency~\cite{10.1145/3725397,DBLP:conf/lpnmr/Bertossi19,DBLP:conf/sigmod/LivshitsKTIKR21}; here, consistency is with respect to the AOD. \revb{Beyond their quantitative semantics, \crepairs can also serve as an explanatory tool of a qualitative nature:} analyzing the removed tuples can shed light on the nature of the trend violation, as we demonstrate through several use cases in \Cref{sec:experiments}.


\revb{
\vspace{-1em}
\begin{example}\label{ex:so_example_cont}
The dataset discussed in \Cref{ex:so_example}
violates the AOD $\mbox{education}\ne\emedianagg(\mbox{salary})$.
A \crepair (\Cref{fig:SO_example}, central bars), removing the minimal number of tuples, requires deleting 77 tuples, constituting 0.88\% of the data (17 people with college study and no degree and 60 with Associate degrees). 
In contrast, repairing for a \emph{decreasing} trend 
$\mbox{education}\se\emedianagg(\mbox{salary})$
requires the removal of a minimum of 566 tuples, which constitute 6.52\% of the dataset. This suggests that the dataset is much closer to exhibiting an increasing trend than a decreasing one. 
We later discuss the \emph{heuristic} repair that is referred to in the right bars of the figure.
\qed
\end{example}
}

\revb{Computing a \crepair is challenging, both theoretically and pragmatically, since there can be exponentially many ways of restoring the AOD through tuple deletions (as is the case for functional dependencies~\cite{DBLP:journals/tods/LivshitsKR20,DBLP:conf/icdt/KolahiL09}). Hence, we begin with a complexity analysis of the problem, showing that while the general case is NP-hard, several practically relevant aggregates admit polynomial-time solutions. 
We present a general algorithmic template that 
uses dynamic programming to reduce 
the problem to ones that involve a single group, which are arguably simpler. We show the instantiation of the template on a list of common aggregate functions $\alpha$, including max, min, count, count distinct, median, sum, and average. Hence, we establish a polynomial-time bound for each of these aggregates. However, in the case of sum and average, 
where the number of possible aggregation values can be exponential in the number of tuples, 
the resulting algorithm is only weakly polynomial, that is, it runs in polynomial time if we assume that the input represents the numbers in a unary encoding; we show that this weakening is necessary, as the problem is NP-hard otherwise.}

\revb{
Although the algorithmic template offers a polynomial-time guaranty, 
in practice, this time can be very long. For example, over an hour for 10K tuples with \medianagg, or for 1K tuples with \avgagg.
The algorithm is generic and applies to various aggregation functions under mild assumptions. While useful, this also means we are not using the specific properties of the aggregation functions, which could accelerate the computation.} 
We address this limitation through two complementary approaches. First, 
we develop optimization techniques tailored to the specific aggregation functions, significantly reducing execution costs. \revb{Second, we introduce a heuristic method that computes a repair satisfying the AOD 
by repeatedly removing the tuple with the most impact on the violation, 
albeit potentially with more deletions than the \crepair.} This heuristic serves two key purposes: 
(1) providing a practical alternative when computational resources are limited, and (2) yielding an upper bound on the number of deletions required. We demonstrate how this bound can be leveraged to substantially prune the search space of the dynamic program of the exact computation.

\begin{example}
As we demonstrate in \Cref{sec:experiments}, the heuristic algorithm performs well for several aggregate functions. However, in certain scenarios, it may yield substantially inferior repairs. For the use case described in \Cref{ex:so_example}, the heuristic algorithm computed a repair for the non-decreasing trend $\mbox{education}\ne\emedianagg(\mbox{salary})$ in 0.52 seconds, removing 494 tuples---5.7\% of the data. While this is 30$\times$ faster than the exact algorithm (15.9 seconds), it removed 6.4 times more tuples than necessary. Notably, it eliminated the entire secondary-school group.

As for the opposite AOD, namely 
$\mbox{education}\se\emedianagg(\mbox{salary})$, the heuristic algorithm removed 8187 tuples (94.3\% of the data) in 16.7 seconds. This number provides an upper bound on the number of tuples needed to be removed, but it does not give any information about the \emph{minimum} amount of deletion needed to fulfill the AOD;
hence, it cannot support a claim as the one closing \Cref{ex:so_example_cont}. \qed
\end{example}

We then describe an extensive empirical study on our implementation of the algorithms. The datasets include the German credit~\cite{german_credit}, Stack Overflow,\footref{so-url} H\&M~\cite{relbench}, Zillow,\footnote{\url{https://www.zillow.com/research/data/}} and Diabetes,\footnote{\url{https://www.kaggle.com/datasets/iammustafatz/diabetes-prediction-dataset}} as well as synthetic data that we generated. 
The empirical study demonstrates the effectiveness of the algorithms and optimizations and shows the quality of the heuristic solution, which is often highly competitive, even though it takes a small fraction of the execution time of finding a \crepair.

In summary, our contributions are as follows:
\begin{enumerate*}
\item We introduce the notion of an AOD to express a monotonic trend in the result of an aggregate query over a table;

\item We provide a complexity analysis of \crepairs for AODs, and present a general algorithmic framework that reduces the global repair problem to local problems on individual groups; 

\item We develop multiple optimization techniques to improve the practical performance of the algorithms; 

\item We propose a heuristic algorithm that efficiently computes approximate repairs and supports the exact algorithm by providing an upper bound for pruning the search space;

\item We conduct an experimental study on real and synthetic datasets.
\end{enumerate*}
%



The remainder of the paper is organized as follows.  After an overview of related work in \Cref{sec:related}, we present 
in \Cref{sec:formal}
the formal framework and define AODs. \Cref{sec:dp} provides a complexity analysis of the \crepair problem. In \Cref{sec:optimizations}, we describe our implementation and algorithmic optimizations, followed by heuristic repair strategies in \Cref{sec:greedy_tuple_del}. \Cref{sec:experiments} reports on our empirical evaluation, and we conclude in \Cref{sec:conclusions}.

\section{Related Work}
\label{sec:related}

\paragraph*{Database constraints}

A large body of work in the data management community has studied enforcing constraints on relational data via data repair, traditionally focusing on integrity constraints~\cite{DBLP:series/synthesis/2011Bertossi,
DBLP:series/synthesis/2012Fan}. This includes functional dependencies~\cite{DBLP:journals/tods/LivshitsKR20,DBLP:conf/icdt/KolahiL09,DBLP:journals/vldb/MiaoZLWC23,DBLP:conf/icdt/GiladIK23,DBLP:conf/sigmod/BohannonFFR05, DBLP:conf/icdt/CarmeliGKLT21}, conditional functional dependencies~\cite{bohannon2006conditional, geerts2013llunatic}, multi-valued dependencies~\cite{salimi2019interventional}, denial constraints~\cite{chomicki2005minimal, chu2013holistic,holoclean}, and inclusion dependencies~\cite{DBLP:conf/sigmod/BohannonFFR05,DBLP:conf/foiks/MahmoodVBN24,DBLP:journals/iandc/ChomickiM05,DBLP:conf/icdt/KaminskyKLNW25}. The most common repair operations include tuple deletion~\cite{DBLP:journals/tods/LivshitsKR20,DBLP:conf/icdt/CarmeliGKLT21,DBLP:journals/vldb/MiaoZLWC23,chomicki2005minimal,DBLP:conf/foiks/MahmoodVBN24,DBLP:journals/iandc/ChomickiM05,DBLP:conf/sigmod/GiladDR20}, tuple insertion~\cite{salimi2019interventional}, and value updates~\cite{DBLP:conf/icdt/KolahiL09,DBLP:conf/icdt/KaminskyKLNW25,DBLP:conf/icdt/GiladIK23,DBLP:conf/sigmod/BohannonFFR05,geerts2013llunatic,holoclean,chu2013holistic}. 
The AOD entails a fundamentally different repair problem, since it involves aggregation. Repairs of aggregation-centric constraints have been studied by
\citet{flesca2007preferred, flesca2010querying, flesca2011repairing}, but they have not considered monotonicity---a central aspect of our framework.

Closer to our work is the line of research on dependencies that involve ordering, specifically the \emph{order dependencies} (ODs)~\cite{dong1982,ginsburg1986,langer2016order_deps, consonni2019order_deps,DBLP:journals/tcs/GinsburgH83,OD_repair2018,jin2020}, also known as \emph{ordered functional dependencies}~\cite{ng1999,ng2001}. An OD of the form $X \rightarrow_{\succeq} Y$ asserts that $X$ functionally determines $Y$, and that the relationship is monotonic with respect to a specified order $\succeq$ over the domain. 
Of similar flavor are the
\emph{trend dependencies} (TD)~\cite{wijsen2001trends}, capturing changes over time (e.g., ``salaries do not decrease in time'').
These formalisms strengthen functional dependencies by capturing monotonic dependencies between attributes. In contrast, our AODs consider an aggregation level of the data and do not require or entail any underlying functional dependency.
In addition, past work in this category focused on logical analysis  (satisfiability and implication) and constraint mining, differently from the focus of this work---quantifying violation via repairs.







\paragraph*{Statistical perspectives on monotonicity}
Monotonic relationships between variables, such as the expectation that an outcome variable consistently increases or decreases with an explanatory factor, have been extensively studied in statistics and econometrics.
Much of that work focuses on detecting violations of monotonicity using nonparametric or regression-based methods, often under linearity assumptions~\cite{patton2010monotonicity,hall2000testing,ghosal2000testing}. These approaches are common in applications such as environmental trend analysis~\cite{hussian2005monotonic}, economics~\cite{patton2010monotonicity,lee2009testing}, and financial forecasting~\cite{patton2010monotonicity}. Other research takes a more theoretical view, using property testing to assess monotonicity over general domains, including partial orders~\cite{bhattacharyya2011testing}. 
While this field focuses on identifying monotonic models with noise, our work offers a complementary perspective: we quantify the deviation from the trend by the amount of required database intervention that creates the expected behavior. Moreover, our framework accommodates general aggregation functions and is not bound to any specific one.

\paragraph*{Cherry-picking} Another relevant line of research focuses on assessing the robustness of claims based on the results of aggregate queries~\cite{wu2014toward,wu2017computational,asudeh2020detecting,lin2021detecting,jo2019aggchecker,DBLP:journals/pvldb/AgmonGYZK24}. \citet{wu2014toward,wu2017computational} defined parameterized queries and analyzed how perturbations in input parameters affect the resulting claims, using measures of relevance and naturalness. Closer to our work,~\citet{asudeh2020detecting,asudeh2021perturbation} investigate the robustness of trendlines to selective endpoints or item selection, aiming to expose cherry-picking. While these approaches highlight the vulnerability of trend-based claims to manipulation, our work focuses on quantifying and repairing violations of expected trends in the data. The challenge of identifying misleading trends is, again, different from the focus of this work on the repair challenge.
\section{Formal Framework} \label{sec:formal}

We begin by presenting the formal framework of this work, starting with preliminary concepts on databases.

\subsubsection*{Database concepts}

A relation schema $\scs$ is a sequence $(A_1,\dots,A_k)$ of \emph{
attributes}. We denote by $\atts(S)$ the set $\set{A_1,\dots,A_k}$. 
We assume, without loss of generality, that all attributes are numeric.
In particular, a relation $r$ over $\scs=(A_1,\dots,A_k)$ is a finite set of tuples 
in $\mathbb{R}^k$.
If $t\in r$ is a tuple and $G\in\atts(\scs)$, then we denote by $t[A]$ the value of $t$ for the attribute $A$. We denote by
$r[A]$ the projection of $r$ to $A$, that is, the set $\set{t[A]\mid t\in r}$. We also write $r\dbr{A}$ to denote the bag-semantics projection of $r$ to $A$, that is, the bag of values that occur under $A$ without removing duplicates.

In this work, an \emph{aggregate function} $\alpha$ maps a bag $V$ of numbers to a single value $\alpha(V)$. We focus on the common aggregate functions \avgagg, \sumagg, \countagg, \countdagg, \minagg, \maxagg, and \medianagg. 

Let $G$ and $A$ be two attributes of $\scs$, let $r$ be a relation, let $\alpha$ be an aggregate function, and let $g\in\dom(G)$ be a constant. We denote by $\alpha_r(A\mid G=g)$ 
the value $\alpha((\sigma_{G=g}r)\dbr{A})$, where
$\sigma_{G=g}r$ is the relation that consists of all tuples $t\in r$ with $t[G]=g$.

\subsubsection*{Aggregate Order Dependencies}
Let $\scs$ be a relation schema. 
An \emph{aggregate order dependency} (\aod) is an expression of the form 
$G\nearrow \alpha(A)$ 
where $G$ and $A$ are attributes in $\atts(\scs)$.  We refer to $G$ as the \emph{grouping} attribute and to $A$ as the \emph{aggregate} attribute.
%
A relation $r$ \emph{satisfies} the \aod $G\nearrow \alpha(A)$, denoted $r\models G\nearrow \alpha(A)$, if the following holds: for all values $g$ and $g'$ in $r[G]$, 
if $g \leq g'$ 
then $\alpha_{r}(A\mid G=g) \leq \alpha_{r}(A\mid G=g')$.
(We focus on \emph{increasing} trends, but all our results
are applicable to \emph{decreasing} trends as well.)

\begin{example}\label{example:aod}
Consider the relation $r$ on the left of \Cref{tab:DB_example}. The tables on the right give the aggregate values 
$\alpha_r(\mbox{income}\mid \mbox{edu}=a)$ for $\alpha=\esumagg$ (top) and $\alpha=\eavgagg$ (bottom). We can see that $r$ satisfies the AOD
$\mbox{edu}\nearrow\esumagg(\mbox{income})$,
since the sum column is monotonically increasing. On the other hand, $r$ violates the AOD  $\mbox{edu}\nearrow\eavgagg(\mbox{income})$ since the average salary for education level 2 is higher than the average salary for education level 3.
\qed
\end{example}

\begin{figure}[t]
    \centering \small
{\begin{minipage}{1.5in}
\renewcommand{\arraystretch}{0.9}
    \begin{tabular}{lcc}
        \hline
        \rowcolor{lightgray}person & edu & income\\
        \hline
Ashley & 1 & 1 \\
Brandon & 1 & 2 \\
\hline
Chloe & 2 & 2 \\
\rowcolor{pink}Daniel & 2 & 5 \\
\rowcolor{pink}Emily & 2 & 6 \\
Faith & 2 & 5 \\
Gavin & 2 & 2 \\
\hline
Hanna & 3 & 8 \\
Isaac & 3 & 4 \\
Jerry & 3 & 3 \\
Katie & 3 & 2 \\
\rowcolor{green!15}Larry & 3 & 2 \\
\rowcolor{green!15}Marie & 3 & 1 \\
\rowcolor{green!15}Nathan & 3 & 1 \\
        \bottomrule
    \end{tabular}
\end{minipage}}
\quad
{
\begin{minipage}{1.2in}
  \begin{tabular}{cc}
        \hline
        \rowcolor{lightgray}edu & $\esumagg(\mbox{income})$ \\
        \hline
1 & 3 \\
2 & 20 \\
3 & 21 \\
        \bottomrule
    \end{tabular}
\vskip6.3ex
  \begin{tabular}{cc}
        \hline
        \rowcolor{lightgray}edu & $\eavgagg(\mbox{income})$ \\
        \hline
1 & 1.5  \\
2 & 4 \\
3 & 3 \\
        \bottomrule
    \end{tabular}   
\end{minipage}}
    \caption{Example relation $r$ (left) and statistics grouped by the education level (edu).}
    \label{tab:DB_example}
\end{figure}

\subsubsection*{\revb{Cardinality-based repair}}\label{sec:probdef}
\revb{The main problem we study is that of finding a \emph{cardinality-based repair} (\crepair for short) for an \aod. The formal definition is as follows.} 

\begin{definition}
A \emph{monotonic subset} of $r$ is a subset $r'$ of $r$ such that $r'\models G\nearrow \alpha(A)$.
\revb{A \emph{\crepair} of $r$ is a monotonic subset of maximum size.}
Hence, $r'$ is obtained from $r$ by deleting the minimum possible number of tuples needed to satisfy $G\nearrow \alpha(A)$. 
\end{definition}
We illustrate the definition with an example.
\begin{example}
Continuing \Cref{example:aod}, let us consider the AOD $\mbox{edu}\nearrow\eavgagg(\mbox{income})$. We can repair $r$ by removing the tuples of 
Larrie, Nathan, and Marie (highlighted in green), which would result in 
$\alpha_{r'}(\mbox{income}\mid \mbox{edu}=3)\;=\;4.25$
and, therefore, satisfy the AOD. 
Nevertheless, we can also repair $r$ by removing the tuples of Daniel and Emily (highlighted in pink), which would result in 
$\alpha_{r'}(\mbox{income}\mid \mbox{edu}=2)\;=\;3$.
In this example, removing any single tuple will not satisfy the AOD, so the repair requires removing at least two tuples. 
Hence, we conclude that removing the tuples of Daniel and Emily results in a \revb{\crepair} of size 12.
\qed 
\end{example}

\section{\revb{The Complexity of \crepairs} 
}\label{sec:dp}
\revb{In this section, we investigate the computational complexity of finding a \crepair for an AOD. The main result is the following.

\begin{theorem}\label{thm:ptime}
    Let $G\nearrow \alpha(A)$ be an AOD. 
    \begin{enumerate}
        \item For $\alpha \in \set{\emaxagg, \eminagg, \ecountagg, \ecountdagg, \emedianagg}$, a \crepair can be found in polynomial time. 
        \item For $\alpha\in\set{\esumagg,\eavgagg}$, a \crepair can be found in pseudo-polynomial time.
    \end{enumerate}    
\end{theorem}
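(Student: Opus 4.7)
My plan is to give a single dynamic-programming template that handles all seven aggregates, specializing only the per-group sub-problem. First, sort the distinct values of $r[G]$ as $g_1 < \cdots < g_m$ and set $r_i = \sigma_{G=g_i} r$. For each group $i$, define the \emph{achievable set} $V_i = \{\alpha(S\dbr{A}) : \emptyset \neq S \subseteq r_i\}$ and the \emph{tuple yield} $g_i(v) = \max\{|S| : S \subseteq r_i,\ \alpha(S\dbr{A}) = v\}$ for $v \in V_i$. The outer DP then maintains
\[
F(i, v) \;=\; \max\!\Bigl(\, F(i-1, v),\ \ \max_{v' \in V_i,\, v' \leq v}\!\bigl(g_i(v') + F(i-1, v')\bigr)\Bigr),
\]
with $F(0, v) = 0$ for all $v$. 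The first branch corresponds to deleting all of $r_i$ (which drops $g_i$ from the relation so the AOD imposes no constraint at this group), while the second keeps a max-size subset of $r_i$ whose aggregate $v'$ is at most the aggregate of the last previously kept group. The minimum number of deletions is $|r| - F(m, +\infty)$, and a \crepair can be reconstructed by tracing the choices back. With prefix-max preprocessing of $g_i(v') + F(i-1, v')$ over $v'$, the outer DP runs in time polynomial in $m$ and $\sum_i |V_i|$, plus the cost of computing the pairs $(V_i, g_i)$.

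The problem thus reduces, for each $\alpha$, to showing that $(V_i, g_i)$ admits an efficient description. For \maxagg and \minagg, $V_i \subseteq r_i[A]$ has size at most $|r_i|$ and $g_i(v) = |\{t \in r_i : t[A] \leq v\}|$ (respectively $\geq v$). For \countagg, $V_i = \{1,\ldots,|r_i|\}$ and $g_i(k) = k$. For \countdagg, $V_i = \{1,\ldots,|r_i[A]|\}$ and $g_i(d)$ is the sum of the $d$ largest frequencies of values in $r_i\dbr{A}$, obtained from a single sort. For \medianagg, every median of a non-empty multiset belongs to $r_i[A]$, so $|V_i| \leq |r_i|$; for a candidate median $v$ I partition $r_i$ into counts $(a, b, c)$ of tuples with $A < v$, $A = v$, $A > v$, and maximize $n = a' + b' + c'$ subject to $a' \leq a$, $1 \leq b' \leq b$, $c' \leq c$, and the median-index inequality (e.g., $a' < \lceil n/2 \rceil \leq a' + b'$ for the lower-median convention). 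A short case analysis shows the optimum is always attained at $b' = b$ with an extremal choice of $(a', c')$, giving $g_i(v)$ in $O(|r_i|)$ time per candidate. This establishes part~(1).

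For \sumagg and \avgagg, $V_i$ can be exponential in the bit-length of the input but is of size polynomial in the numeric magnitudes, which is exactly why the template only yields pseudo-polynomial running time. For each group $i$ I run a standard subset-sum DP to compute a table $d_i[k, s] \in \{0,1\}$ indicating whether some $k$-subset of $r_i$ has sum $s$, in time $O(|r_i|^2 \cdot M)$ where $M = \sum_{t \in r_i}|t[A]|$. For \sumagg take $g_i(s) = \max\{k : d_i[k,s] = 1\}$; for \avgagg index the outer DP by the pair $(k, s)$ (equivalently by the rational $s/k$) and set $g_i(k, s) = k$. Plugging these into the template yields part~(2). The main obstacle I anticipate is the \medianagg analysis: verifying that the optimal subset is always characterized by the local $(a', b', c')$ constraints just described, and then handling parity and the chosen median convention cleanly to avoid off-by-one errors. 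The other cases are conceptually routine, with the only subtlety being that the template must permit fully deleting a group (the $F(i-1, v)$ branch), since doing so can unlock strictly better solutions in the subsequent groups.
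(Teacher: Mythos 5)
Your proposal follows essentially the same route as the paper: the same decomposition into groups ordered by $G$, the same outer dynamic program indexed by (group, last aggregate value) with a prefix-max over smaller values, and the same per-group ``packing'' subproblem (your $(V_i,g_i)$ is exactly the paper's $V_\alpha(r_i)$ and $O_\alpha(r_i,\cdot)$), with identical instantiations for \maxagg, \minagg, \countagg, \countdagg, and the same subset-sum tables $d_i[k,s]$ for \sumagg and \avgagg.

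The one place you diverge is \medianagg, and there your argument does not cover the paper's definition. You restrict candidate medians to $r_i[A]$ by adopting a lower-median convention, whereas the paper takes the median of an even-size bag to be the average of the two middle elements; accordingly its candidate set is $\set{t[A]\mid t\in r}\cup\set{(t_1[A]+t_2[A])/2\mid t_1,t_2\in r}$, of size $O(n^2)$, and the packing step needs an extra case in which the two middle elements are a non-adjacent pair $(t_i,t_j)$ with $t_i[A]+t_j[A]=2x$, balanced by $\min\{i-1,n-j\}$ tuples on each side. Under the paper's convention your method can miss achievable values entirely (e.g., for $r_i\dbr{A}=\{1,1,3,3\}$ the value $2$ is achievable by the whole group but lies outside $r_i[A]$), so this is not merely an off-by-one in the parity analysis. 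If the theorem is read with the lower-median convention your argument goes through; to match the paper you need to enlarge $V_i$ to include pairwise averages and add the two-pivot case to your $(a',b',c')$ optimization, which still keeps everything polynomial. A minor further point: for \avgagg the outer DP must ultimately be ordered and merged by the rational $s/k$ rather than by the pair $(k,s)$, which your parenthetical acknowledges but should be made explicit.
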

}
Recall that \emph{pseudo-polynomial time} means that the running time is polynomial if we assume that the numbers (here, in the column $A$) are integers given in a unary representation. In \Cref{sec:tuple_del_hardness}, we will show that this weakening is necessary, since the problem is NP-hard if we assume a binary representation.

\revb{To prove the theorem, we present algorithms for computing \crepairs for different aggregate functions $\alpha$}. We begin with a general procedure that applies to all aggregate functions (and requires some sub-procedures to be implemented) and then proceed to specific ones. 
Throughout this section, we assume the \aod is $G\nearrow \alpha(A)$ and refer to the input relation as $r$.

\subsection{General Procedure}\label{sec:general_dp}

\revb{We first show a dynamic programming algorithm for finding a \crepair (described in \Cref{alg:dp})}.
The algorithm requires specific procedures that depend on the  
aggregation function $\alpha$. In the next sections, we discuss the realization of these procedures for specific aggregation functions. The procedures are as follows.

\begin{enumerate}[leftmargin=*]
\item The first procedure takes a relation $r$ as input, and produces a set $V_\alpha(r)$ that contains all possible values $\alpha_{r'}(A)$ 
that any nonempty subset $r'\subseteq r$ 
can take.
For example, if $\alpha$ is \maxagg or \minagg, then $V_\alpha(r)$ can be $r[A]$. 

\item The second procedure 
takes as input a relation $r$ with a single group (hence, it ignores the grouping attribute $G$) and a value $x\in V_\alpha(r)$, and it computes an optimal (maximum-size) relation 
$r'\subseteq r$ such that $\alpha(r'\dbr{A})=x$, that is, the aggregate value is exactly $x$. 
We denote $r'$ by $O_\alpha(r,x)$. 
If no such $r'$ exists, then we define $O_\alpha(r,x)=\emptyset$. 
(Later on, we will refer to this problem as \emph{aggregation packing}.)
\end{enumerate}


\begin{example}
Consider again the relation $r$ of  \Cref{tab:DB_example}, and let $\alpha$ be \avgagg. Then
$V_\alpha(r)$ should include (at least) every 
mean of every possible bag of values from the income column. If $r$ consists of the tuples of Daniel, Emily, and Faith, and $\alpha$ is \avgagg, then $O_\alpha(r,5)$ consists of the tuples of Daniel and Faith. \qed
\end{example}


We describe how to compute $V_{\alpha}(r)$ and $O_\alpha(r,x)$ efficiently for different aggregation functions in \Cref{sec:aggpack_instances}.
\revb{Before we do so, we show how $V_\alpha(r)$ and $O_\alpha(r,x)$ can be used for computing a \crepair}.


For convenience, we 
use the following definitions.
Let $r[G]=\set{g_1,\dots,g_m}$ where $g_i<g_j$ whenever $i<j$.
We denote by $r_i$ the relation $\set{t\in r \mid t[G] = g_i}$, which we refer to as a \emph{group}. For $i\in \set{1,\dots,m}$, we denote by $r_{\leq i}$ the relation $r_1\cup\dots\cup r_i$. 
As an example, referring to the relation $r$ of \Cref{example:aod}, the relation $r_2$ consists of the tuples of Chloe, Daniel, Emily, Faith, and Gavin, 
and the relation $r_{\leq 2}$ includes, in addition to the five, the tuples of Ashley and Brandon.

\begin{algorithm}[t]
\small
\DontPrintSemicolon
\SetKwInOut{Input}{Input}\SetKwInOut{Output}{Output}
\LinesNumbered
\newcommand\mycommfont[1]{\footnotesize\ttfamily\textcolor{blue}{#1}}
\SetCommentSty{mycommfont}
\Input{Relation $r$, \aod $G \nearrow \alpha(A)$} 
\Output{Maximum-size 
subset $r'\subseteq r$ such that $r'\models G \nearrow \alpha(A)$.}


\For {$x \in V_\alpha(r)$} {
    $H_0[x]\gets \emptyset$
}
\For {$i\in \set{1,\dots,m}$} {
    \For {$x \in V_\alpha(r_i)$} {
    \If{$O_\alpha(r_i,x)=\emptyset$}{
        $H_i[x] \gets H_{i-1}[x]$\\
    }
    \Else{
        $H_i[x] \gets O_\alpha(r_i,x)\cup
        H_{i-1}\left[
        \argmax_{y\leq x} |H_{i-1}[y]|
        \right].$
        }
    }
}
\Return $H_m[\max{(V_\alpha(r))}]$

\caption{\revb{Dynamic programming algorithm for \crepair: \dpalg. 
}}
\label{alg:dp}
\end{algorithm}

\Cref{alg:dp} 
applies dynamic programming by increasing $i=1,\dots,m$. For each $i$ and $x\in V_\alpha(r_i)$, the algorithm computes $H_i[x]$, which is a largest subset $r'\subseteq r_{\leq i}$ such that
$r'$ satisfies the AOD and, in addition, the value of the rightmost bar is $x$; that is, if 
$g$ is the maximal value in $r'[G]$, then 
$\alpha_{r'}(A\mid G=g) =x$.
For each $i$ and $x\in V_\alpha(r_i)$, the algorithm considers (in lines~4--8) two cases. 
\begin{enumerate}
    \item 
In the first case, $O_\alpha(r_i,x)$ is empty; that is, no subset of $r_i$ has the $\alpha$ value $x$. In this case, $H_i[x]$ is simply $H_{i-1}[x]$ (ignoring the group $r_i$). 
\item Otherwise, if $O_\alpha(r_i,x)$ is nonempty, then $H_i[x]$ is the union of $O_\alpha(r_i,x)$ and the best $H_{i-1}[y]$ over all $y\leq x$.
\end{enumerate}

\subsection{Instantiations} 
Next, we present instantiations of \Cref{alg:dp} to specific aggregate functions $\alpha$ by showing how to compute $V_\alpha(r)$ and $O_\alpha(r,x)$ in polynomial time.
\label{sec:aggpack_instances}

\subsubsection{Polynomial-Time Instantiations}\label{sec:aggpack_instances_poly}
We begin with polynomial-time instantiations that give rise to the first part of \Cref{thm:ptime}.
\paragraph{Max and min}
When $\alpha$ is \maxagg,
the set $V_\alpha(r)$ of possible values over $r$ is simply the set of values in the column $A$.
For $x\in V_\alpha(r)$, the set $O_\alpha(r,x)$ is simply the set $\set{t \in r \mid t[A] \leq x}$ of all tuples where the $B$ attribute is at most $x$. The case of \minagg is similar, except that  $O_\alpha(r,x)$ is the set $\set{t \in r \mid t[A] \geq x}$.

\paragraph{Count and count-distinct}
For \countagg, the set $V_\alpha(r)$ is 
$\set{0,\dots, |r|}$, and $O_\alpha(r,x)$ can be any subset of size $x$. For \countdagg, the set $V_\alpha(r)$ is $\set{0,\dots,|r[A]|}$; for $x\in V_\alpha(r)$, the set $O_\alpha(r,x)$ contains the tuples with the $x$ most frequent $A$ values in $r$. More precisely, for each $a\in r[A]$, denote $n_a=|\set{t\in r\mid t[A]=a}$. Let $a_1,\dots,a_m$ be the values of $r[A]$ sorted by decreasing $n_{a_i}$. Then
$$O_{\ecountdagg}(r,x)=\set{t\in r\mid t[A]\in\set{a_1,\dots,a_x}}\,.$$ 


\paragraph{Median}\label{sec:aggpack_instances_median}
The set $V_\alpha(r)$ of possible median values over $r$ is 
\[\set{t[A]\mid t\in r}\cup \set{(t_1[A] + t_2[A])/2 \mid t_1, t_2 \in r}.\] For $x\in V_\alpha(r)$, we denote by $n^>_x$, $n^=_x$, and $n^<_x$ the number of tuples $t\in r$ with $t[A]>x$, $t[A]=x$, and $t[A]<x$, respectively. Let $t_1,\dots,t_n$ be the tuples of $r$ sorted by increasing $t_i[A]$. Then, the set $O_\alpha(r,x)$ is the largest of the two candidate sets $r_{\text{include}}$ and $r_{\text{exclude}}$. The set $r_{\text{include}}$ is the largest subset of $r$ where $x$ is the median by being the single middle element. Hence, the number of values higher than and lower than $x$ should be equal (and the size of the set is odd):
\[r_{\text{include}}=\begin{cases}
        \emptyset & \text{if }n^=_x=0,\\
        r & \text{if }n^=_x\neq 0\text{ and }n^>_x=n^<_x,\\
        \set{t_1,\dots,t_{2n^<_x+n^=_x}} & \text{if }n^=_x\neq 0\text{ and }n^>_x>n^<_x,\\
        \set{t_{n^<_x-n^>_x+1},\dots,t_n} & \text{otherwise}.
    \end{cases}\]

The set $r_{\text{exclude}}$ is the largest subset where $x$ is the median because it is the average of two middle elements, which is only possible for an even-sized set. Let $S=\set{(i,j)\mid t_i[A]+t_j[A]=2x,i<j}$ and let $(i^*,j^*)=\argmax_{(i,j)\in S}\{\min\{i-1,n-j\}\}$. Then,

\[ r_{\text{exclude}}=
\begin{cases}
    \emptyset & \text{if } S=\emptyset;\\[0.4em]
    \begin{aligned}
        &\{t_1,\dots,t_{i^*-1}\} \cup \{t_{i^*},t_{j^*}\}\\
        &\cup \{t_{j^*+1},\dots,t_{j^*+i^*-1}\}
    \end{aligned} & \text{if } i^*-1\leq n-j^*;\\[1em]
    \begin{aligned}
        &\{t_{i^*-n+j^*},\dots,t_{i^*-1}\} \cup \{t_{i^*},t_{j^*}\} \\
        &\cup \{t_{j^*+1},\dots,t_n\}
    \end{aligned} & \text{otherwise.}
\end{cases}
\]
Note that the goal here is to construct the largest possible balanced subset around a pair $(t_i,t_j)$. The number of tuples that we can place on either side of this pair is at most $\min\{i-1,n-j\}$. The total size of the subset is $2 + 2 \cdot \min\{i-1, n-j\}$; hence, to maximize this size, we must choose the pair maximizing the value $\min(i-1, n-j)\}$.


\subsubsection{Weakly polynomial-time instantiations}\label{sec:aggpack_instances_sum_avg}
For sum and average,  we assume that values in $r[A]$ are integers given in a unary representation (hence, establish pseudo-polynomial time as stated in \Cref{thm:ptime}).
Throughout this section, we denote $r=\set{t_1,\dots, t_n}$.

\paragraph{Sum}
We begin with $\alpha=\esumagg$.
For the set $V_\alpha(r)$ we can use  
$V_{\esumagg}(r) = \set{\esumagg(r^{-}\dbr{A}),\dots,\esumagg(r^{+}\dbr{A})}$
where $r^{-}$ (resp.,~$r^{+}$) is the subset of $r$ consisting of the tuples with a negative (resp.,~positive) value in the $B$ attribute.  

For $x\in V_\alpha(r)$, the computation of $O_\alpha(r,x)$ can be done via a 
standard dynamic program for the knapsack problem, 
as follows. 
The program computes the value $M[j,s]$, for $j=0,\dots,n$ and $s\in V_\alpha(r)$,
that holds the maximal size of any subset of $\set{t_1,\dots, t_j}$ with the sum $s$ of $A$ values. 
As an example, considering the relation $r$ of \Cref{tab:DB_example}, the value $M[5,8]=3$, as witnessed by the subset that corresponds to Ashley, Brandon, and Daniel.

With the $M[j,s]$ computed, $O_\alpha(r,x)$ is simply 
$M[n,x]$. 
Beginning with $j=0$, we set $$M[j,s]= 0 \mbox{ if $s=0$, and }M[j,s]= -\infty\mbox{ otherwise}\,.$$ For $j>0$ and any $s\in V_\alpha(r)$, we check whether $s-t_j[A]\in V_\alpha(r)$. If so, we set
$$M[j,s]=\max\left(M[j-1,s],1+M[j-1,s-t_j[A]]\right)\,.$$  
If $s-t_j[A]\notin V_\alpha(r)$, then we set $M[j,s]= M[j-1,s]$. 

\paragraph{Average}
Consider now $\alpha=\eavgagg$.
For the set $V_\alpha(r)$ we can choose  
\[V_{\eavgagg}(r) = \set{
 s/\ell \mid s\in V_{\esumagg}(r)\land 
1\leq\ell\leq r
}\;.\]
For $O_\alpha(r,x)$, we construct the data structure $M'$ that has the three arguments $j$, $\ell$, and $s$ for $j=0,\dots,n$, $\ell=0,\dots,j$, and $s\in V_{\esumagg}(r)$. The value $M'[j,\ell,s]$ is Boolean, with $M'[j,\ell,s]$ being true if and only if $\set{t_1,\dots, t_j}$ has a subset with precisely $\ell$ tuples, where the $A$ values sum up to $s$. 
As an example, in \Cref{tab:DB_example}, both $M'[5,8,2]$ is true (due to Brandon and Emily) and $M'[5,8,3]$ is true (due to Ashley, Brandon, and Daniel).
To set $O_\alpha(r,x)$, we take the maximal value $\ell$ such that 
$M'[j,\ell,s]$ is true and $s=\ell\cdot x$. (If no such $s$ and $\ell$ exist, then we set  $O_\alpha(r,x)=\emptyset$ as per our convention.)
We build $M'$ as follows.
\begin{itemize}[leftmargin=*]
\item For $j=0$, we set $M'[j,\ell,s]$ to true if $\ell=0$ and $s=0$, and false otherwise.
\item For $\ell=0$ and every $j$, we set 
$M'[j,\ell,s]$ to true if $s=0$, and false otherwise.
\item For $j>0$ and $\ell>0$, we check whether $s-t_j[A]\in V_\alpha(r)$; if so:
\[
M'[j,\ell,s]= 
M'[j-1,\ell,s] \lor
M'[j-1,\ell-1,s-t_j[A]]\,,
\]
otherwise, if $s-t_j[A]\notin V_\alpha(r)$, then
$M'[j,\ell,s]=
M'[j-1,\ell,s]$.
\end{itemize}

\subsection{Hardness of Sum and Average} \label{sec:tuple_del_hardness}
\revb{The following theorem implies that the pseudo-polynomial weakening of \Cref{thm:ptime} is required, since finding a \crepair 
is intractable for $\alpha=\esumagg$ and $\alpha=\eavgagg$ under the conventional binary representation of numbers. The proof is available in \Cref{app:proofs}.}

\revb{
\begin{restatable}{theorem}{thmnphardness}
Let $G\nearrow \alpha(A)$ be an AOD where $\alpha$ is either \sumagg or \avgagg. It is NP-hard to find a \crepair 
(under the conventional binary representation of numbers).
\end{restatable}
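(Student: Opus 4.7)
The plan is to reduce from the standard (binary-encoded) Subset Sum problem: given positive integers $a_1, \dots, a_n$ and a target $T$, decide whether some subset sums to $T$. The goal is to build a relation over $(G, A)$ whose optimal \crepair size encodes the answer, by surrounding a ``middle'' group that holds the Subset Sum values with bounding groups designed so that, in any optimal repair, the aggregate of the middle group is forced to equal $T$ (respectively $\mu$ for $\alpha = \eavgagg$).

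For $\alpha = \esumagg$, I set $N := n+1$ and construct the relation consisting of one tuple $(g, T)$ for each $g \in \set{1, \dots, N} \cup \set{N+2, \dots, 2N+1}$ together with the tuples $(N+1, a_i)$ for $i = 1, \dots, n$. Each nonempty lower-bound group individually imposes the AOD constraint $\esumagg(r'_{N+1}) \geq T$, and symmetrically each nonempty upper-bound group imposes $\esumagg(r'_{N+1}) \leq T$; hence either constraint can be relaxed only by emptying the entire chain of $N$ bound groups on one side, at a cardinality cost of $N > n$ that strictly dominates any gain achievable in the middle group. A routine case analysis on the number of surviving bound groups on each side then shows that the optimal \crepair has size exactly $2N + |S^*_T|$, where $|S^*_T|$ denotes the maximum cardinality of a subset of $\set{a_1, \dots, a_n}$ summing to exactly $T$ (with $|S^*_T| := 0$ if no such subset exists, in which case the middle is left empty). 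Consequently, the optimum is at least $2N + 1$ iff Subset Sum is a yes-instance.

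For $\alpha = \eavgagg$, I reduce from signed-integer Subset Sum to zero, which is NP-hard via adjoining $-T$ to a positive Subset Sum instance. Given signed integers $c_1, \dots, c_m$, I set $\mu := 0$, $N := m+1$, and use three groups: a lower group with $N$ tuples each of value $\mu$, a middle group with the tuples $c_1, \dots, c_m$, and an upper group with $N$ tuples each of value $\mu$. Because the average of identical values is invariant under partial deletions, the middle constraint $\eavgagg(r'_2) = \mu$ (equivalently, $\sum_{i \in S}(c_i - \mu) = 0$ over the retained tuples) is enforced as long as both bound groups remain nonempty, and relaxing it requires emptying an entire bound group of $N > m$ tuples. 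The analogous case analysis yields an optimum of $2N + |S^*_\mu|$, completing the reduction.

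The main obstacle I anticipate is arranging the bound-group structure so that cheap partial deletions cannot relax the middle constraint. For $\eavgagg$ this is handled for free by the invariance of the average under deletions among identical values. For $\esumagg$ it is more delicate: a single bound group with $N$ tuples summing to $T$ would allow a continuous relaxation of the bound by deleting tuples one at a time, defeating the enforcement of equality. The chain of $N$ \emph{single-tuple} bound groups sidesteps this by imposing the constraint redundantly from each group, so that only a complete evacuation of the chain removes it.
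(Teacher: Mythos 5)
Your proposal is correct and follows essentially the same reduction as the paper: both constructions flank a middle group holding the Subset-Sum items with cheap-to-keep, expensive-to-destroy bound groups that pin its aggregate to the target (the paper likewise uses $n$ singleton groups of value $s$ per side for \sumagg, and handles \avgagg by placing $-s$ inside the middle group with zero-valued singleton bound groups). Your only departures are cosmetic: for \avgagg you use one bound group of $N$ identical tuples per side (exploiting the deletion-invariance of the average of equal values) and factor the $-T$ trick into an intermediate signed-Subset-Sum step, and you phrase the decision threshold via the exact optimum $2N+|S^*_T|$ rather than via ``fewer than $n$ deletions''---equally valid packagings of the same idea.
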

}

\section{Implementation and Optimizations}\label{sec:optimizations}
\revb{Next, we discuss the implementation of \dpalg
(\Cref{alg:dp}), along with its instantiations.} 
We also discuss practical optimizations of the implementation. 

The implementation uses a single data structure (dictionary) $H$ instead of the $m$ different $H_i$s of \Cref{alg:dp}. To do so, in line~4 we traverse the $x$s in decreasing order, thereby assuring that we do not run over values from the previous iteration (i.e., those belonging to $H_{i-1}$) that are needed for computing $H[x]$ at iteration $i$; this holds because the computation of $H_i[x]$ requires only values $H_i[y]$ such that $y\leq x$. By doing so, we can restrict line~4 to iterate over only \emph{feasible} values $x$ of $r_i$ (i.e., the ones satisfying $O(r_i,x)\neq\emptyset$), which we compute in advance. The rest of the values are set in a previous iteration (hence, due to line~6, need not be changed). We use lazy initialization of $H$, so we avoid lines~1--2 altogether.

\subsection{\revb{Pruning based on Upper Bounds}}\label{sec:opt_greedy_aggpack}

\def\bound{h}

\revb{
\dpalg computes a solution (subset) for every feasible aggregation value and subset size. We can prune the search space to ignore subsets that are too small. 
For example, consider an \aod over a database with 1K tuples, and assume we know of a monotonic subset satisfying the \aod by removing 100 tuples. \dpalg will compute a solution for every subset size, but solutions removing more than 100 tuples will never be used, as they cannot be completed into a \crepair, and therefore, can be avoided altogether.

In this work, we propose two methods of achieving a bound on the number of tuples to remove. The first is to compute an \emph{approximate} repair using a fast algorithm that makes the best effort to build a large (but possibly not the largest) monotonic subset. Such a heuristic algorithm is described in \Cref{sec:greedy_tuple_del}.
}


We utilize this pruning for the computation of $O(r_i,x)$ in the cases of \sumagg and \avgagg, which are the most computationally intensive. For that, we replace the computation of $M[j,s]$ (or, $M'[j,\ell,s]$ for \avgagg) with a new (yet conceptually equivalent) data structure $P[j, s]$ ($P'[j,\ell,s]$ for \avgagg), which holds the minimal number of tuples needed to be deleted from $\set{t_1,\dots,t_j}$ in order to satisfy the AOD, starting with $j=0$ (i.e., no tuples are removed from $r$) and $s=\alpha_r(A)$, where $P[j,s]=0$. \revb{This allows us to ignore the entries $P[j,s]$ and avoid their storage in the case where $P[j,s]>\bound$, where $\bound$ is the number of tuples removed in a given approximate repair. 
We elaborate on this optimization in \Cref{sec:heuristic_pruning}.}

\revb{
The second method exploits a dominance relation over partial solutions saved in $H$: a solution is said to dominate another if it retains more tuples and induces weaker constraints on the remaining groups. Dominated solutions can be pruned from the search space without affecting optimality. We elaborate more on this in \Cref{sec:dp_pruning}. In general, while the pruning strategies for $H$ we describe there substantially reduce $H$, their current overhead diminishes a runtime improvement.
}




 In the next section, we show how to compute $O(r_i,x)$ in a holistic way for all relevant $x$, and there we also show how the bound $\bound$ can be incorporated for \avgagg, \sumagg, and \medianagg.



\hide{
\subsubsection{Pruning the intermediate dictionary.}\label{sec:dp_pruning_1}
A notable drawback of \Cref{alg:dp} is that the size of the intermediate dictionary, $H_i$, can grow to be very large.

Let $F_i = \set{x \mid O_{\alpha}(r_i,x)\neq \emptyset}$ denote the set of feasible aggregate values of $r_i$. Each feasible value $x\in F_i$ is added to $H_i$ in line~8 of \Cref{alg:dp}. The size of $F_i$ may be exponential in the size of $r_i$ since, potentially, every subset of $r_i$ may lead to a different aggregate value. However, after computing $H_i$ (at the end of the loop of line~\red{3}), we can prune $H_i$, so that the number of keys saved in this dictionary is linear in the size of the dataset. 

\shunit{repair-> monotonic subset}
We say that a monotonic subset $r'\subseteq r$ 
\emph{uses} a value $x$ for group $i$ 
if it holds that $\alpha(r'_i\dbr{A})=x$.
An entry $x_1$ of $H_i$ is \emph{dominated} by entry $x_2$ of $H_i$ if $x_1 > x_2$ and for every monotonic subset $r'$ that uses $x_1$ for $r_i$, there exists another monotonic subset $r''$ such that $|r''|\geq|r'|$, 
$r''$ uses $x_2$ for $r_i$, 
and $\forall j\neq i: \alpha(r'_j\dbr{A})=\alpha(r'_j\dbr{A})$. That is, it is possible to use $x_2$ instead of $x_1$ as the aggregate value for group $i$, without compromising the optimality of the solution. 
Recall that $H_i[x]$ holds the largest subset $r'$ of $r_1\cup \dots \cup r_i$ such that $\alpha(r'_i\dbr{A})=x$.

\begin{restatable}{lemma}{lemhprune}
\label{lem:hprune}
Let $x_1$ and $x_2$ be two values in $H_i$ such that $x_1 > x_2$ and $|H_{i-1}[x_1]| \leq |H_{i-1}[x_2]|$. Then, $x_1$ is dominated by $x_2$.
\end{restatable}

Given \Cref{lem:hprune}, we can prune $H_i$ in the following manner: we iterate over the keys $x$ in ascending order and keep track of the largest solution size $k$ seen so far. When considering a value of $x$, keep it only if $|H[x]|>k$.
Since the solution sizes $|H_i[x]|$ must be strictly increasing, this means that the size of $H$ after this pruning is bounded by the size of the dataset.
Consider $r_1=\set{t\in r | t.\mbox{edu} = 1}$ from the example in \Cref{tab:DB_example}, with $r_1\dbr{A}=[1,2]$. Then $|H_1[1]| = 1, |H_1[2]| = 1, |H_1[3]| = 2$. In this case, 1 dominates 2, and we prune 2 from $H_1$.
}

\hide{
\subsubsection{Pruning the DP using a heuristic algorithm}\label{sec:dp_pruning_greedy}
To accelerate the search, we can use a lower 
bound on the size of the repair (i.e., an upper bound on the number of removed tuples). Such a bound can come from a heuristic 
algorithm that finds a repair, and we show an example of such an algorithm in \Cref{sec:greedy_tuple_del}. 
We first run the heuristic algorithm and reach a (possibly non-optimal) repair, that requires removing $m$ tuples. The number of \revb{tuple removals required to get a \crepair} is at most $m$.
Therefore, we prune the search so as not to consider repairs that remove more tuples than the heuristic repair. 

In the intermediate result dictionary $H$, we only keep solutions that remove up to $m$ tuples from the entire relation $r$. Specifically, in line \red{8} of \Cref{alg:dp}, if $|H_{i-1}[x]\cup F_i[x]|\leq |r|-m$, then $H_{i-1}[x]\cup F_i[x]$ cannot be completed into an optimal solution, and therefore we can avoid saving $x$ in $H_i$.
}

\subsection{Holistic Packing}\label{sec:holistic}
\subsubsection{\revb{
Naive \procaggalpha}}\label{sec:holistic_naive}
\Cref{alg:dp} requires a packing procedure to compute the set $O_\alpha(r_i,x)$ for each value $x\in V_\alpha(r_i)$. This computation is costly, especially in the cases of \sumagg, \avgagg and \medianagg. 
\revb{For \sumagg and \avgagg, recall that we populate data structures containing the optimal subset size for each aggregation value (\Cref{sec:aggpack_instances_sum_avg}). In \Cref{alg:dp}, these data structures need to be populated from scratch for every $x\in V_\alpha(r_i)$. We can optimize the algorithm by populating these data structures once, before the iteration in line~3 and without repeating their shared computation. For \medianagg, instead of searching for a tuple (or pair) that yields a specific median $x$ for each $x\in V_\alpha(r_i)$,
we can iterate once over all tuples and tuple pairs and store the resulting medians.} 

Computing the $O_\alpha(r_i,x)$ in a holistic manner for all $x\in V_\alpha(r_i)$ enables further optimizations for specific aggregation functions. 
Next, we devise such optimizations for each of the three aggregates $\alpha$; we refer to the optimized procedures uniformly as $\procaggname\langle\alpha\rangle$.
As we show in \Cref{sec:experiments}, the holistic computation gives considerable acceleration of the computation of a repair.

\subsubsection{\revb{
Optimizing \procagg{median}. 
}}\label{sec:opt_histogram_aggpack}

\revb{
We next describe an algorithm to compute $O_{\emedianagg}(r_i,x)$ holistically for all feasible medians $x$. This allows us to avoid the search for the specific tuples that yield each median. Furthermore, we can also iterate over unique values in $r_i$ instead of a costly iteration over tuples.
Due to its length, the pseudocode is omitted and can be found in \Cref{sec:app:holistic-median}. 
}

As described in \Cref{sec:aggpack_instances_median}, a median $x$ is either the $A$ value of a single middle tuple (if the number of tuples if odd) or the average of the $A$ values of two middle tuples (if it is even). In what follows, we refer to the middle tuple(s) as pivot(s).

The algorithm uses a histogram to keep track of the number of tuples on each side of the pivot(s). 
It maintains two data structures: a mapping $M[x]$, representing the size of the largest subset of $r_i$ with median $x$, and a mapping $D[x]$ for reconstructing this subset, containing the (one or two) pivots and the number of tuples on each side to include in the subset. 
It addresses three cases:

\begin{enumerate*}
    \item A single middle tuple: the algorithm iterates over the tuples $t_i$ sorted by $t_i[A]$ within a pruned range $i\in [n/2-\bound/2, n/2+\bound/2]$ determined by the repair size bound $\bound$. For each tuple, it calculates the size of the largest balanced subset that can be formed around it as the single median and updates $M$ and $D$ if a new maximum is found.
    \item Two adjacent middle tuples: A similar computation is performed for medians formed by pairs of consecutive tuples.
    \item Two nonadjacent middle tuples: the algorithm iterates over pairs of unique values from the histogram that stores, for each $A$ value, its number of appearances in $r_i$. For each pair $(a_1,a_2)$, the histogram is used to compute $k_{\text{left}}$ 
    (the number of tuples with $A$ value at most $a_1$) and 
    $k_{\text{right}}$
    (the number of tuples with $A$ value at least $a_2$). 
    The maximum size of a balanced subset around this pair is $2 \cdot \min\set{k_{\text{left}}, k_{\text{right}}}$. The structures $M$ and $D$ are updated if a new maximum is found. This method avoids the costly process of checking all pairs of nonadjacent tuples.
\end{enumerate*}

\subsubsection{\revb{
Optimizing \procagg{sum}
}}\label{sec:opt_knapsack_aggpack}
\revb{
We next describe an algorithm to compute $O_{\esumagg}(r_i,x)$ holistically for all feasible sums $x$, to avoid the repeated population of $M[j,s]$ (described in \Cref{sec:aggpack_instances_sum_avg}). We additionally apply a strategy of iterating over unique values instead of individual tuples (as in \Cref{sec:opt_histogram_aggpack}). We next describe an additional optimization, based on existing optimized solutions for the knapsack problem~\cite{martello1984mixture,poirriez2009}. For the latter optimization, we assume that all $A$ values are positive.
}


Let $v_1,\dots,v_n$ denote the unique $A$ values in ascending order and stored in the histogram.
\Cref{alg:aggpack_sum_opt} maintains two data structures: \emph{(1)} an array $M$ where an index $s$ represent a possible sum, and $M[s]$ is the maximum size of a subset with sum $s$, and \emph{(2)} a matrix $D$, where $D[j,s]$ represents the number of instances of the unique value $v_j$, used in the maximum subset for sum $s$. $D$ can also be used to reconstruct the subset for a given feasible sum $s$.

The algorithm builds these data structures by iteratively processing each unique value $v_j$ from the histogram in ascending order. 
The algorithm considers two distinct cases for each sum $s$:
\begin{enumerate}[leftmargin=*]
    \item No instances of $v_j$ left (lines~12-17): This is the case if the solution for the sum $s-v_j$ has already used every available copy of $v_j$, i.e., $D[j, s-v_j]=\mathsf{Hist}[v_j]$. The algorithm must therefore reevaluate how to form the sum $s$ by checking different combinations involving $v_j$ and the previous unique values.
    \item There are available instances of $v_j$ left (lines~19-21): This is the case if the solution for the sum $s-v_j$ did not require all copies of $v_j$. Meaning, $D[j,s-v_j]<\mathsf{Hist}[v_j]$. Here, we rely on an insight (stated in \Cref{lemma:dmatrix}), that the maximum subset for a sum $s$ can be found by comparing just two scenarios: the best solution for $s$ without using $v_j$, and the best solution formed by adding $v_j$ to the best subset for sum $s - v_j$.
\end{enumerate}

\begin{restatable}{lemma}{lemmaDmatrixons}\label{lemma:dmatrix}
If $D[j, s-v_j] \neq \mathsf{Hist}[V_j]$, then either $D[j,s] = 0$ or $D[j,s]=D[j,s-v_j]+1$. 
\end{restatable}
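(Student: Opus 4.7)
The plan is to establish that, in the ``available-copies'' regime $D[j,s-v_j]<\mathsf{Hist}[v_j]$, the only two configurations the algorithm needs to consider for $D[j,s]$ are ``take no copy of $v_j$'' and ``take one more copy of $v_j$ than the saved subset for $s-v_j$''. The argument will be a matching upper/lower bound pair, mirroring the standard knapsack correctness proof but carried out with histogram capacities.

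First, I would establish an upper bound on the size of any feasible (capacity-respecting) subset $S$ of $\set{v_1,\dots,v_j}$ that sums to $s$ and contains at least one copy of $v_j$. Removing one such copy from $S$ produces a feasible subset $S'$ of sum $s-v_j$ with $|S'|=|S|-1$, so $|S|\le M[s-v_j]+1$. Next I would obtain the matching lower bound from the hypothesis: since $D[j,s-v_j]<\mathsf{Hist}[v_j]$, the subset $T$ that $D[j,\cdot]$ points to at sum $s-v_j$ has at least one unused copy of $v_j$, and thus $T\cup\set{v_j}$ is a feasible subset summing to $s$, of size $M[s-v_j]+1$, using exactly $D[j,s-v_j]+1$ copies of $v_j$. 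Combining the two, any optimal subset for $s$ that uses $v_j$ has size exactly $M[s-v_j]+1$, and this size is attained by $T\cup\set{v_j}$.

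I would then conclude by examining the two candidate subsets the algorithm actually compares when assigning $D[j,s]$ in lines~19--21: the optimum inherited without using $v_j$ (giving $D[j,s]=0$) and the augmented subset $T\cup\set{v_j}$ (giving $D[j,s]=D[j,s-v_j]+1$). By the previous paragraph, no subset using $v_j$ can beat the augmented candidate, so every optimum for $s$ is captured by one of these two cases, and the algorithm's choice of $D[j,s]$ must equal $0$ or $D[j,s-v_j]+1$, as claimed.

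The main obstacle I anticipate is making rigorous that $T$---the subset $D[j,\cdot]$ points to at sum $s-v_j$---is truly optimal among all feasible subsets of $\set{v_1,\dots,v_j}$ summing to $s-v_j$, and not merely optimal among those inherited before $v_j$ was introduced. This should follow by an inductive invariant stating that after the $j$th outer iteration finishes, $M[s]$ and $D[j,s]$ correctly describe an optimal subset over $\set{v_1,\dots,v_j}$ for every feasible sum $s$. Processing the unique values in ascending order, together with visiting the sums in the appropriate order inside the inner loop, is what makes the invariant survive both branches of the algorithm; I would want to state and verify these ordering assumptions explicitly before invoking the upper/lower bound argument above.
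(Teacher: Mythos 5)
Your proposal is correct and follows essentially the same route as the paper's proof: both rest on the exchange argument that an optimal subset for sum $s$ either avoids $v_j$ entirely (giving $D[j,s]=0$) or, since a copy of $v_j$ remains available by the hypothesis $D[j,s-v_j]<\mathsf{Hist}[v_j]$, can be obtained by adding one copy of $v_j$ to an optimal subset for $s-v_j$ (giving $D[j,s]=D[j,s-v_j]+1$). The only cosmetic difference is that the paper pins down $D[j,s]$ in the second case via the tie-breaking convention that $D$ records an optimum using the fewest copies of $v_j$, whereas you argue directly from the two candidates the algorithm compares; the paper likewise leaves the inductive invariant on $M$ and $D$ implicit, so the obstacle you flag is no heavier a burden than what the published argument already assumes.
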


The proof of the lemma is in \Cref{app:proofs}, and a detailed running example can be found in \Cref{sec:sum_opt_example_full}. 
As the algorithm builds the set of feasible sums from smallest to largest, it cannot be combined with the dictionary pruning optimization from \Cref{sec:opt_greedy_aggpack}. 



\begin{algorithm}[t]
\small
\DontPrintSemicolon
\SetKwInOut{Input}{Input}\SetKwInOut{Output}{Output}
\LinesNumbered
\newcommand\mycommfont[1]{\footnotesize\ttfamily\textcolor{blue}{#1}}
\SetCommentSty{mycommfont}
\Input{A bag of values $r_i\dbr{A}$.}
\Output{Array $M$ mapping sums to maximum subset sizes, and an array $D$ mapping unique value ids $j$ and sums $s$ to the number of usages of $v_j$ in the construction of $s$.}

\renewcommand{\baselinestretch}{0.85}\selectfont

$\mathsf{Hist} \gets$ a histogram of $r_i\dbr{A}$, in ascending order by values $v_j$ \\
$T \gets \esumagg(r_i\dbr{A})$
$M\gets$ Array of size $T+1$ containing $-1$ in each cell and $0$ in $M[0]$ \\ 
$D \gets$ Empty array of size $|\mathsf{Hist}|\times (T+1)$\\
$Z\gets 0$ \tcp*[l]{Maximal sum reached so far.}
\For {$v_j \in \mathsf{Hist}$}{
    $Z \gets Z + v_j \cdot \mathsf{Hist}[v_j]$ \\
    $M' \gets M$ \tcp*[l]{Avoid changing $M$ directly.}
    \For{$s \gets v_j$ \KwTo $Z$}{
        $u \gets D[j,s-v_j]$ \\ 
        $s_{\text{prev}} \gets s-(u+1)\cdot v_j$\\
        \tcp{All instances of $v_j$ are used.}
        \If{$u = \mathsf{Hist}[v_j]$} {
            \For{$u' \gets 1$ \KwTo $\mathsf{Hist}[v_j]$} {
                $\mathsf{size} \gets M[s - u' \cdot v_j] + u'$ \\
                \If{$M[s - u' \cdot v_j]\geq 0$ and $\mathsf{size} > M'[s]$}{
                    $M'[s] \gets \mathsf{size}$\\
                    $D[j,s] \gets u'$ \\
                    }
                }
            }
            \tcp{Additional instances of $v_j$ are available.}
            \If{$u < \mathsf{Hist}[v_j]$ \textbf{and} $M[s_{\text{prev}}]\geq 0$ \textbf{and} $M[s_{\text{prev}}] + u + 1 > M[s]$}{
                 $M'[s] \gets M[s_{\text{prev}}] + u + 1$ \\
                 $D[j,s] \gets u + 1$
            }
        }
    $M \gets \mathsf{M'}$
    }
\Return{$(M, D)$}
\caption{\procagg{sum}}
\label{alg:aggpack_sum_opt}
\end{algorithm}

\hide{
\begin{figure*}[t]
\definecolor{mygreen}{RGB}{209, 238, 205}
\definecolor{myblue}{RGB}{203, 233, 247}
\definecolor{mypeach}{RGB}{251, 222, 201}

\begin{center}
\scriptsize
\begin{tabular}{|c|*{21}{c|}}
\hline
$s$ & \textbf{0} & \textbf{1} & \textbf{2} & \textbf{3} & \textbf{4} & \textbf{5} & \textbf{6} & \textbf{7} & \textbf{8} & \textbf{9} & \textbf{10} & \textbf{11} & \textbf{12} & \textbf{13} & \textbf{14} & \textbf{15} & \textbf{16} & \textbf{17} & \textbf{18} & \textbf{19} & \textbf{20} \\

\hline\hline
$M$ &
\cellcolor{mygreen}0 & -1 & \cellcolor{mygreen}1 & -1 & \cellcolor{mygreen}2 &
\cellcolor{mypeach}1 & -1 & \cellcolor{mypeach}2 & -1 &
\cellcolor{mypeach}3 & \cellcolor{mypeach}2 & -1 &
\cellcolor{mypeach}3 & -1 &
\cellcolor{mypeach}4 & -1 &
-1 & -1 & -1 & -1 & -1 \\
\hline
\begin{tabular}[c]{@{}c@{}}temp $M$\end{tabular} &
0 & -1 & 1 & -1 & 2 &
1 & \cellcolor{myblue}1 & 2 & \cellcolor{myblue}2 &
3 & \cellcolor{myblue}\st{2} 3 & \cellcolor{myblue}2 &
3 & \cellcolor{myblue}3 &
4 & \cellcolor{myblue}4 &
\cellcolor{myblue}3 & -1 &
\cellcolor{myblue}4 & -1 & \cellcolor{myblue}5 \\
\hline \hline
\multirow{1}{*}{$D_j\!: 2$} &  &  & \cellcolor{mygreen}1 &  & \cellcolor{mygreen}2 &  &  &  &  &  &  &  &  &  &  &  &  &  &  &  &  \\
\hline
\multirow{1}{*}{$D_j\!: 5$} &  &  &  &  &  & \cellcolor{mypeach}1 &  & \cellcolor{mypeach}1 &  & \cellcolor{mypeach}1 &   \cellcolor{mypeach}2 &  & \cellcolor{mypeach}2 &  & \cellcolor{mypeach}2 &  &  &  &  &  &\\
\hline
\multirow{1}{*}{$D_j\!: 6$} &  &  &  &  &  &  & \cellcolor{myblue}1 &  & \cellcolor{myblue}1 &  & \cellcolor{myblue}1 &  \cellcolor{myblue}1 &  & \cellcolor{myblue}1 &  & \cellcolor{myblue}1 &   \cellcolor{myblue}1 &  & \cellcolor{myblue}1 &  & \cellcolor{myblue}1\\
\hline
\end{tabular}
\end{center}

 \caption{Example for \Cref{alg:aggpack_sum_opt}. Empty cells in $D_j$ contain the default value $0$. 
 }
    \label{fig:sum_opt_example}
\end{figure*}
}

\hide{
\begin{example}
Consider the group $r_2$ defined by $\mbox{edu}{=}2$ from \Cref{tab:DB_example}. The values of the aggregate attribute are $2,2,5,5,6$, creating the input histogram $\mathsf{Hist}=[(2,2), (5,2), (6,1)]$ (sorted in ascending $\mbox{edu}$ order). The sum of all $r_2[A]$ values is $T=20$; hence, the size of $M, tempM$, and each $D_j$ is $22$. The cell $M[0]$ is initialized to $0$, and all other cells to $-1$.

\Cref{fig:sum_opt_example} shows the point in the algorithm where we have already performed the first two iterations of the loop in line~4 of \Cref{alg:aggpack_sum_opt}. That is, we have processed $v_j=2$ (green cells) and $v_j=5$ (orange cells). 
For example, using two instances of $v_j=5$ we reached $s=10$, so $M[10]=2$ and $D_j[5][10]=2$. Using one instance of $5$ and two instances of $2$ we have reached $s=9$, so $M[9]=3$, $D_j[5][9]=1$, and $D_j[2][9-5]$ has already been set to $2$ in the first iteration.
At this point, $Z=2\cdot 2+5\cdot 2=14$.
The last pair in the histogram is $(v_j,c_j)=(6,1)$. We increase $Z$ to $20$ (line 6). Then, $M$ is copied into $\mathsf{tempM}$, which is shown in \Cref{fig:sum_opt_example} in the white cells of $\mathsf{tempM}$. 
We iterate over $s$ values from $v_j=6$ to $Z=20$ (line~8). For example, consider $s=10$. In the previous iteration, we had $M[10]=2$ using two instances of $5$, and no instances of $6$, so $u=D_j[10-6]=0$. We now consider $M[\mathsf{noVj}]=M[s-(u+1)\cdot v_j]=M[10-6]=M[4]=2$, update $\mathsf{tempM}[10]=2+0+1$ (line~18), and $D_j[10]=1$ (line~19). 
An additional example for the use of lines~11-16 of the algorithm can be found in \Cref{sec:sum_opt_example_full}.
\qed
\end{example}
}



\subsubsection{\revb{
Optimizing \procagg{avg}
}}\label{sec:opt_histogram_aggpack_avg}

\def\mytable{P}

\revb{
As in the previous section, a holistic computation of $O_{\eavgagg}(r_i,x)$ for all feasible averages $x$ enables skipping repeated computations. In the following, we describe several additional optimizations.

First, we modify the data structure to a value-based semantic (instead of tuple-based), which greatly reduces the size of the data structure.}
Recall that for \avgagg, we used a dynamic programming table  $M'[j,\ell,s]$ that stores true if there is a subset of $\set{t_1,\dots,t_j}$ of size $\ell$ and sum $s$ of $A$ values.
We modify the algorithm to maintain a table $\mytable[s,\ell]$ of sums and sizes. 

To enable the pruning optimization (described in \Cref{sec:opt_greedy_aggpack}), we use the semantics that $\mytable[s,\ell]$ stores true if there is a subset $r'\subseteq r$ of size $\ell$ such that the sum of $A$ values of $r\setminus r'$ is $s$.

Furthermore, we apply a similar approach to \procagg{median} and \procagg{sum}, and avoid iterating on individual tuples. Instead, we iterate on unique $r_i[A]$ values using a histogram, to find possible sums and sizes (thereby, possible averages). We store backtracking information in a separate data structure $D[s,\ell]$, to reconstruct the removed subset of tuples.

\begin{algorithm}[tbp]
\small
\DontPrintSemicolon
\SetKwInOut{Input}{Input}\SetKwInOut{Output}{Output}
\LinesNumbered
\newcommand\mycommfont[1]{\footnotesize\ttfamily\textcolor{blue}{#1}}
\SetCommentSty{mycommfont}
\KwIn{A bag of values $r_i\dbr{A}$, and an upper bound $\bound$ on the number of tuples to remove.}

\KwOut{An array $F$ mapping average values to maximum subset sizes.
}

\renewcommand{\baselinestretch}{0.85}\selectfont

$\mathsf{Hist} \gets$ a histogram of $r_i\dbr{A}$, in ascending order by values $v_j$ \;
$T \gets \esumagg(r_i\dbr{A})$ \;
$\mytable \gets$ an empty mapping with default false\;

$\mytable[T][0] \gets$ true \tcp{Removing 0 tuples yields the sum $T$.}
$D \gets$ an empty mapping \;
\For{$v_j \in \mathsf{Hist}$}{
    $\mytable' \gets$ an empty mapping with default false\;
    $D' \gets D$\;

    \For{$s \in \mytable$}{
        \For{$\ell \in \mytable[s]$}{
            \For{$k \gets 0$ \KwTo $\mathsf{Hist}[v_j]$}{
                \tcp*[l]{Add $v_j$ to each previous (removed) subset.}
                $s' \gets s - k \cdot v_j$ \;
                $\ell' \gets \ell + k$ \;

                \If{$\ell' > \bound$}{
                    \textbf{break}\tcp*[l]{Prune using bound $\bound$.}
                }
                $\mytable'[s'][\ell'] \gets$ true\;

                \If{k > 0 and ($s' \notin D'$ \textbf{or} $\ell' \notin D'[s']$)}{
                    $D'[s'][\ell'] \gets (s, \ell, v_j, k)$
                }
            }
        }
    }

    $\mytable \gets \mytable' $\; 
    \For{$s \in D'$}{
        \For{$\ell \in D'[s]$}{
            $D[s][\ell] \gets D'[s][\ell]$ \;
        }
    }
}
\tcp*[l]{$F$ Maps \avgagg values to maximal subset sizes.}
$F \gets$ an empty mapping \;
\For{$s \in \mytable$}{
    \For{$\ell \in \mytable[s]$}{
        $c \gets |r_i|-\ell$ \;
        $x \gets s/c$ \;
        \If{$x\notin F$ \textbf{or} $F[x] < c$}{
            $F[x] \gets c$
        }
    }
}
\Return{F} 
\caption{\procagg{avg}}
\label{alg:aggpack_avg_opt}
\end{algorithm}


In \Cref{alg:aggpack_avg_opt}, 
we iterate over $r_i[A]$ values $v_j$ stored in the histogram (line~6). For each previously obtained sum $s$ and removed subset size $\ell$, and each possible number $k$ of instances of $v_j$, we try to remove $k$ instances of $v_j$ from $s$ (lines~13-14), to obtain a new sum $s'$ with a new removed subset size $\ell' = \ell+k$. If a bound $\bound$ is given on the number of removed tuples and  $\ell'>\bound$, we prune this option (lines~15-16).
If this is indeed a new combination of sum and subset size, we update the data structures (lines~17-19). 
We then update the original data structures with the sums and sizes  achieved using $v_j$ (lines~20-23).
Note that unlike the algorithms for \procagg{median} and \procagg{sum}, the data structure used in the algorithm ($\mytable$) does not map from aggregation values (averages) to maximal subset sizes. Therefore a new mapping is built (lines~25-31), based on the sums and sizes of removed subsets.

\section{Heuristic Algorithm}\label{sec:greedy_tuple_del}
\revb{\dpalg (\Cref{alg:dp}) guarantees a \crepair, 
but its runtime can be prohibitively long for some aggregation functions. This is the case especially for \sumagg and \avgagg that take pseudo-polynomial time and depend on the size of $V_{\alpha}(r)$ (the set of possible values that $\alpha_{r'}(A)$ can take for any $r'\subseteq r$). In this section, we present a polynomial-time greedy algorithm that serves two purposes. First, it provides a fast, inexact alternative to \dpalg. Second, it accelerates \dpalg via pruning, by providing an upper bound on the number of tuples to remove
in both the generic \dpalg algorithm (as described in \Cref{sec:dp_pruning_greedy}) and the implementation of \procaggalpha (described in \Cref{sec:opt_greedy_aggpack}), without sacrificing the optimality guarantee. Using this bound, we can avoid the computations in \dpalg searching for subsets that are too small.}



\revb{
\revb{\dpalg} iterates over each feasible aggregation value, which may correspond to every possible subset of tuples. \revb{\greedyalg} avoids this heavy iteration. Instead, it greedily selects the tuple(s) with the largest impact on the violations of the \aod. This process is repeated until the violations are eliminated, and a monotonic subset is reached.}
We next define the notions of violation and impact.

Denote the \emph{measure of violation} of the groups $r_i$ and $r_{i+1}$ by:
\begin{equation}\label{eq:mvi}
\mathsf{MVI}(r_i,r_{i+1}) = \max(0, \alpha(r_i\dbr{A})-\alpha(r_{i+1}\dbr{A}))\,.  
\end{equation}
The \emph{sum of violations} over all groups $r_i$ is defined as $S_{\mathsf{MVI}}(r) = \sum_{i=1}^{|r[G]|-1}\mathsf{MVI}(r_i,r_{i+1})$.
The \emph{impact} of a tuple $t$ on the sum of violations is defined as $S_{\mathsf{MVI}}(r) - S_{\mathsf{MVI}}(r\setminus\set{t})$; this is called the ``leave-one-out'' influence and is inspired by causal counterfactuals~\cite{pearl2009causality}. 
That is, a tuple $t$ has a positive impact if removing it results in a decrease in the sum of violations.



\begin{algorithm}[t]
\small
\DontPrintSemicolon
\SetKwInOut{Input}{Input}\SetKwInOut{Output}{Output}
\LinesNumbered
\newcommand\mycommfont[1]{\footnotesize\ttfamily\textcolor{blue}{#1}}
\SetCommentSty{mycommfont}
\SetKwProg{Fn}{Function}{}{end}
\Input{Relation $r$, \aod $G \nearrow \alpha(A)$.}
\Output{Subset $r'\subseteq r$ such that $r'\models G \nearrow \alpha(A)$.}

\renewcommand{\baselinestretch}{0.9}\selectfont

\Fn{ComputeMVI($r$, $A$, $G$, $\alpha$)}{
    \Return $[\max\left(0,\alpha(r_i\dbr{A})-\alpha(r_{i+1}\dbr{A})\right) \text{ for } i\in\{1,\dots,|r[G]|-1\}]$\;
}
$\mathsf{MVI} \gets \texttt{ComputeMVI}(r, A, G, \alpha)$\;
\While{$\esumagg(\mathsf{MVI}) > 0$}{
        \For {$t \in r$}{
            $r'\gets r\setminus\{t\}$ \\
            $\mathsf{MVI}'\gets
            \texttt{ComputeMVI}(r', A, G, \alpha)$\;

            $\mathsf{impacts}[t]\gets \esumagg(\mathsf{MVI})-\esumagg(\mathsf{MVI}')$
        }
     $t^*\gets \argmax_{t} \set{\mathsf{impacts}[t]}$ \\
     $r'\gets r\setminus \set{t^*}$ \\
     $\mathsf{MVI}\gets \texttt{ComputeMVI}(r', A, G, \alpha)$\;
}

\Return{$r'$}
\caption{\revb{Heuristic algorithm for repair: \greedyalg.}}
\label{alg:greedy}
\end{algorithm}


\revb{\greedyalg (\Cref{alg:greedy}) iteratively removes the tuple with the highest impact, 
until all violations are resolved  (\(S_{\mathsf{MVI}} = 0\)).}
The algorithm begins by computing the initial $\mathsf{MVI}$ for every pair of consecutive groups (line~3), based on \Cref{eq:mvi}. 
As long as the sum of violations remains positive (line~4), the algorithm calculates the impact of each tuple (lines~5-8).
The tuple with the maximum impact is selected for removal (line~9). Once this tuple is removed, the dataset is updated accordingly (line~10), and the measures of violations are recomputed (line~11). 
Finally, the algorithm returns the modified relation $r'$, which satisfies the \aod (line~12). 


\begin{remark}
The algorithm may choose to remove a tuple with a non-positive impact. 
In such cases, it still selects the tuple with the highest impact, which would be the least negative one.  
Restricting the algorithm to remove only tuples with a positive impact might prevent it from guaranteeing a valid solution, as it could get stuck in a situation where no tuples with a positive impact exist, yet violations remain. This can happen if removing a tuple from $r_i$ reduces the violation between $r_i,r_{i+1}$ but causes another violation to increase. We next show an example of this phenomenon.
\qed
\end{remark}


\begin{example}\label{example:non-pos-impact}
Consider the relation $r(G,A)$ composed of the following tuples: $(1,3),(1,4),(2,2),(2,3),(2,4),(3,1),(3,2)$,
and the \aod $G\nearrow \emaxagg(A)$. 




The relation consists of three groups $r_1, r_2$, and $r_3$, where $r_i=\set{t\in r\mid t[G] = i}$.
Any tuple $t\in r_i$ with $t[A]<\emaxagg(r_i\dbr{A})$ 
has impact $0$, as removing it does not change the aggregation value $\emaxagg(r_i\dbr{A})$.
As for the tuples with the maximum $A$ value in each group: 
in $r_1$, the tuple $(1,4)$ has impact $0$ since $\emaxagg(r_1\dbr{A}) = \emaxagg(r_2\dbr{A})$ (hence, $\textsf{MVI}(r_1,r_2)=0$), and removing this tuple only decreases the value $\emaxagg(r_1\dbr{A})$ (hence, $\textsf{MVI}(r_1,r_2)$ remains $0$). 
The tuple $(2,4)$ in $r_2$ has impact $0$ since removing it will decrease $\textsf{MVI}(r_2,r_3)$ by 1 but increase $\textsf{MVI}(r_1,r_2)$ by 1.

Finally, the tuple $(3,2)$ in $r_3$ has impact $-1$, as removing it increases $\textsf{MVI}(r_2,r_3)$ by $1$. If the algorithm was restricted to removing only tuples with a positive impact, it would stop at this point and fail to find a solution. In contrast, the algorithm will remove one of the tuples with a maximum impact ($0$). Following the tie-breaking rule (lowest group index, then highest value), the algorithm removes $(1,4)$. Now the impact of $(1,3)$ is $0$, the tuple $(2,4)$ has impact $1$, and the tuple $(3,2)$ has impact $-1$. The algorithm removes the tuple $(2,4)$. The process continues with the removal of $(1,3)$, followed by $(2,3)$, and at this point, no violations remain. \qed


\end{example}

\hide{
\paragraph{Complexity} 
The greedy algorithm runs in polynomial time in the size of the relation. Computing the impact of each tuple (lines~6-8) is done by comparing the outcomes of two group-by aggregation queries (with and without the tuple). \red{Let $T(\alpha)$ be the time complexity of computing the group-by-aggregation query for $\alpha$. }
In the worst case, this calculation is performed for every tuple in each iteration. There are at most $n$ iterations of the while loop of line~2, where $n$ is the number of tuples, since a tuple is removed in each iteration. \red{Therefore, the greedy algorithm's runtime is at most $O(T(\alpha)\cdot n^2)$. If $T(\alpha)$ is linear in $n$, which is the case for the aggregation functions considered in this paper, then the greedy algorithm's runtime is $O(n^3)$.}
\benny{I'm not sure we need that... this is a trivial complexity calculation.}


However, for specific aggregation functions, we designed optimizations to compute the impact of a tuple more efficiently. These optimizations avoid the need to compute two group-by queries for each tuple impact calculation. We describe these next. \benny{Something here is inconsistent. You say that the next section about aggregation-specific optimizations, but it starts with non-specific optimizations. }
}

For every aggregation function $\alpha$, the impact of a tuple $t$ (computed in lines~6-8 of \Cref{alg:greedy}) can be calculated by running the query 
``\textsf{SELECT $\alpha($A$)$ GROUP BY G}'' twice: once with $t$ and once without it. Then, $S_{\textsf{MVI}}$ is calculated for each of the queries, and the impact is the difference between them. This process can be optimized to avoid the need to run these two queries for every tuple. We describe such optimizations in \Cref{sec:greedy_optimize}.

\revcommon{
The following proposition states that the approximation ratio of \greedyalg can be arbitrarily high. 

\begin{restatable}{proposition}{thmGreedyRatio}\label{thm:greedy_ratio}
For $\alpha \in \set{\eavgagg,\esumagg,\emedianagg}$, there are classes of instances where the approximation ratio of \greedyalg is $\Omega(n)$.
\end{restatable}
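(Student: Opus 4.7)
The plan is to exhibit, for each $\alpha \in \{\eavgagg, \esumagg, \emedianagg\}$, a family of relations $r^{(n)}$ where the \crepair deletes only $2$ tuples while \greedyalg deletes $n-2$, yielding an approximation ratio of $(n-2)/2 \in \Omega(n)$. The construction is uniform across the three aggregates: pick any positive constant $c$, and let $r^{(n)}$ consist of two groups---$r_1$ containing $n-2$ tuples with $A$-value $c$, and $r_2$ containing $2$ tuples with $A$-value $0$, with $G$-values chosen so that $r_1$ precedes $r_2$ under the linear order on $G$. For each of the three aggregates one obtains $\alpha(r_1\dbr{A}) > 0 = \alpha(r_2\dbr{A})$, so $\mathsf{MVI}(r_1,r_2) > 0$ and the AOD is violated.

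I would first show that the \crepair removes exactly the two tuples of $r_2$: removing both empties $r_2$ and leaves $r_1$, which trivially satisfies the AOD, whereas any single deletion leaves both groups nonempty with their aggregates unchanged (since each group consists of duplicates of a single value), so the violation persists. Hence the minimum number of deletions is exactly $2$.

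Next I would analyze \greedyalg iteration by iteration, showing it always deletes from $r_1$. For $\esumagg$, removing a $c$ from $r_1$ has impact $c > 0$, while removing a $0$ from $r_2$ has impact $0$; so greedy strictly prefers $r_1$. For $\eavgagg$ and $\emedianagg$, removing any single tuple from either group leaves its aggregate unchanged, so every single-step impact is $0$; the tie-breaking rule (lowest group index, then highest $A$-value) then picks a tuple of $r_1$. In every iteration but the last, the MVI is unaffected; only the deletion of the final remaining tuple of $r_1$ drives the MVI to zero. Greedy thus performs $n-2$ deletions in total.

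The main subtlety is the tie-breaking analysis for $\eavgagg$ and $\emedianagg$: although every single-step impact is $0$, the while loop of \greedyalg continues while the MVI is positive and forces one deletion per iteration. Establishing that the stated tie-breaker consistently directs these forced deletions into $r_1$, and maintaining the invariants that $\alpha(r_1\dbr{A}) = c$ and $\alpha(r_2\dbr{A}) = 0$ throughout, completes the argument.
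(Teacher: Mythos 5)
Your proposal is correct under the tie-breaking rule that the paper itself adopts, but it takes a genuinely different route from the paper's proof. The paper gives three separate, aggregate-specific constructions, each engineered so that the tuples \greedyalg should \emph{not} remove are precisely the ones with strictly positive impact: for \avgagg it pads a third group with $n-5$ copies of a low value so that each such removal strictly shrinks the violation; for \sumagg it plants a decoy tuple $(2,2n)$ with the strictly largest impact, after which greedy must grind through $\Omega(n)$ unit-valued tuples; for \medianagg every below-median tuple of the second group has strictly positive impact. Your construction is more economical and uniform---a single two-group instance ($n-2$ copies of $c>0$ followed by two zeros) covering all three aggregates, with a clean optimality argument: since $\alpha$ of any nonempty sub-bag of $r_1$ is positive while $\alpha$ of any nonempty sub-bag of $r_2$ is zero, one group must be emptied, so the \crepair deletes exactly $2$ tuples. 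That uniformity is a real gain. What it costs you is robustness to tie-breaking: for \esumagg your instance forces greedy into $r_1$ via strictly positive impacts, but for \eavgagg and \emedianagg \emph{every} single-tuple impact is $0$ until the penultimate step, so the $\Omega(n)$ behavior hinges entirely on the ``lowest group index, then highest value'' tie-break quoted from \Cref{example:non-pos-impact}; under the opposite tie-break, greedy would delete the two zeros of $r_2$ and hit the optimum in two steps. Since line~9 of \Cref{alg:greedy} leaves $\argmax$ ties unspecified, your \avgagg and \medianagg cases establish the proposition only for that particular deterministic instantiation, whereas the paper's instances force the bad behavior under any tie-breaking. Either adopt instances where the misleading removals have strictly positive impact, or state explicitly that your bound is relative to the implemented tie-breaking rule.
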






The proof is in \Cref{app:proofs}.
For each of the three aggregations, we show a database with $n$ tuples where \greedyalg removes $\Omega(n)$ tuples while 
$O(1)$ tuples suffice for a \crepair.

Despite this large worst-case approximation ratio, the use of \greedyalg to prune the search of \dpalg can still lead to a reduction in the runtime of finding a \crepair, as we show in \Cref{sec:exp_optimizations}. For example, pruning has improved the runtime of \dpalg by up to 426 times for \avgagg over 1K tuples and by up to 15 times for \medianagg over 10K tuples.} 
\section{Experimental Evaluation}\label{sec:experiments}
In our experiments, we aim to answer the following questions:
\begin{enumerate*}
    \item How do the repairs perform, and what do they delete in case studies? (\Cref{sec:exp_case_studies})
    \item How do the optimizations of the DP algorithm (\Cref{sec:optimizations}) affect the runtime across various aggregation functions? (\Cref{sec:exp_optimizations})
    \item How do the algorithms perform in terms of runtime, with varying data sizes and varying amounts of noise? (\Cref{sec:exp_agg_compare})
    
    \item What is the quality of the heuristic algorithm (\Cref{sec:greedy_tuple_del}), compared to the \crepair? (\Cref{sec:exp_greedy_quality}) 
    
    \item Can outlier removal methods be used to find an \aod repair or to accelerate \aod repairing? (\Cref{sec:outlier_exp}) 
\end{enumerate*}

\paragraph{Result summary:}
We found that aggregation-specific optimizations of \procaggalpha in combination with pruning by a heuristic algorithm whenever possible consistently provided the largest runtime improvement. Over all real world datasets, the largest improvement was recorded for the stack overflow dataset with \avgagg, with a speedup of over 3 orders of magnitude. 
\revb{When comparing the algorithm runtimes for various aggregations, we found that in line with \Cref{thm:ptime}, the fastest aggregations were \maxagg, \countagg and \countdagg, and that the runtime advantage of the heuristic algorithm depends on the number of tuple removals necessary for a \crepair (an optimal repair). 
In terms of quality, the heuristic solution found \crepairs in all tested datasets and AODs except for Zillow (\medianagg aggregation) and Diabetes (\avgagg aggregation).} In these specific case studies, we found that the diabetes dataset deviated from the \aod $\mbox{age}{\nearrow}\eavgagg(\mbox{diabetes})$ by 0.84\% of tuples, and the Zillow dataset deviated from the \aod $\mbox{construction\ year}{\nearrow}\emedianagg(\mbox{listing\ price})$ by 2.6\% of the tuples.

\revb{While outlier removal as a preliminary step before \revb{\dpalg} was sometimes useful in speeding up the computation, it also resulted in many unnecessary tuple removals (over 10 times more than the \crepair).}
This suggests that repairing for an \aod typically requires more than just applying an outlier removal algorithm.

\subsection{Experimental Setup}
The implementation is in Python 3.11 and publicly available.\footnote{\url{https://anonymous.4open.science/r/aod-C28C/}}
The experiments were done on a $2.50$GHz CPU PC with $512$GB memory. 

\paragraph{Datasets and AODs}
We analyze several publicly available datasets that contain intuitive near-trends.

\begin{itemize}[leftmargin=*] 
\item
\emph{German credit~\cite{german_credit}} (1K tuples): This dataset classifies people as low or high credit risks. We focused on the probability of being classified as low-risk over the time a person has been in their current job. We expect to observe an increasing trend, as longer job tenure may reflect greater financial stability, which is typically associated with a higher credit score.

\item \emph{Stack Overflow\footnote{\url{https://survey.stackoverflow.co/2022} (accessed April 2025)}} (31,301 tuples): This dataset consists of responses from developers worldwide to questions about their jobs, including demographics, education level, role, and salary. We focused on the average salary, grouped by the education level (from elementary school to PhD). 
We expect to see an increasing trend, suggesting that with higher education, the average salary increases.
We binned the salary attribute using equi-width bins of size \$1000, and truncated it at \$2,250,000, which is larger than 99\% of the salary values in the data.

\item  \emph{H\&M~\cite{relbench}} (1,877,674 tuples): This dataset contains transactions of customers in H\&M in 2019-2020, with features like the purchased item, the price, the age of the customer, etc. When analyzing the purchase prices by customer age, we found that overall, there was a clear trend where customers aged 25 spent the most money in H\&M, with a clear decrease until the age of 40, where the sum of transaction prices begins rising again. 
The trend is violated from May to July 2020. We focused on this time period. We experimented with the sum and max aggregations.
\item \emph{Zillow\footnote{\url{https://www.zillow.com/research/data/}}} (629926 tuples): This dataset contains information on real estate sales in three California counties in 2016. We focus on listing prices, grouped by the year of construction, and expect listing prices to rise with construction year. 


\item \emph{Diabetes}\footnote{\url{https://www.kaggle.com/datasets/iammustafatz/diabetes-prediction-dataset}} (100K tuples): This dataset contains information about diabetes diagnosis, risk factors, and complications of individuals. We examine the trend of diabetes prevalence across age groups, anticipating higher rates among older individuals. To that end, we divided the age attribute into equal-width bins, 5 years each.
\end{itemize}

\paragraph{Synthetic datasets.}\label{sec:synthetic_data_gen}  The synthetic data generation process is controlled by the following parameters: 
number $n$ of tuples, number of groups, fraction $f$ of \addtuples, and the number of violating groups. In our experiments, we used 10 groups in total and 4 violating groups. 
We first generate $n\cdot(1-f)$ tuples, by uniformly sampling values for the aggregated attribute 
between 1 and 100, and uniformly dividing them into groups. 
Next, $f\cdot n$ \addtuples are generated by uniformly sampling aggregate values from (100, 120) and uniformly dividing them into the specified number of violating groups.  This process means that the number of tuples that need to be removed to find a repair will be correlated with $n\cdot f$. 


\paragraph{Algorithms}

\revcommon{
We evaluate two algorithms: \dpalg (\Cref{sec:dp}), which is guaranteed to return a \crepair, and \greedyalg (\Cref{sec:greedy_tuple_del}), a heuristic algorithm that returns a monotonic subset. We also use a third alternative, that combines \dpalg with outlier removal methods. We bring this alternative to understand the role of outliers in the violation of AODs, and in particular, to understand if they capture a significant part of the violation. Specifically, since outlier removal is not guaranteed to return a monotonic subset, we first apply outlier removal methods, followed by \dpalg. We measure their effect on the size of the repair and the runtime. For the combined algorithm (\emph{Outlier removal + \dpalg}), we experimented with three methods of outlier removal, of different types: Z-score~\cite{kaliyaperumal2015outlier}, Local outlier factor~\cite{breunig2000lof} (a K-Nearest-Neighbors based method), and Isolation forest~\cite{liu2008isolation} (based on decision trees).}

\subsection{\revb{Evaluation of \dpalg Optimizations}}\label{sec:exp_optimizations}
Next, we compare the \procagg{\alpha} variations with regard to the effect of optimizations on the runtime. \revb{
We evaluate the following variations: \emph{(1)} Naive \procaggalpha---computing $O_{\alpha}(r_i,x)$ for all $x\in V_{\alpha}(r_i)$ holistically, 
without optimizations (\Cref{sec:holistic_naive}), 
\emph{(2)} aggregation-specific optimization of \procaggalpha (as described in \Cref{sec:opt_histogram_aggpack,sec:opt_knapsack_aggpack,sec:opt_histogram_aggpack_avg}), 
\emph{(3)} pruning \procaggalpha using a bound from a heuristic algorithm (\Cref{sec:opt_greedy_aggpack}), 
and \emph{(4)} a combination of pruning and aggregation-specific optimizations.}
Experiments with dictionary pruning (\Cref{sec:opt_greedy_aggpack}) can be found in \Cref{sec:dp_pruning}.

For each aggregation, we compare the optimizations over increasing size samples from the datasets, as well as on synthetic datasets we generated.
For the scale experiments shown later, we created samples of the Stack Overflow, H\&M and Zillow datasets. For Zillow and H\&M we used sample sizes of different orders of magnitude. For SO, we used smaller samples (up to 1K) since the \avgagg aggregation function used with this dataset is the hardest variation of the problem. For each sample size (and each synthetic dataset size), we performed the sampling (or generation) three times.

The most effective optimization for \medianagg was the combination of the value-based iteration and dictionary pruning (\Cref{sec:opt_histogram_aggpack}): for real world data, the optimized algorithm was up to 860$\times$ faster than the naive algorithm (which timed out at 50K tuples). 
This combination of optimizations, adjusted for \avgagg (\Cref{sec:opt_histogram_aggpack_avg}) was 2282 times faster than the naive algorithm for 1K rows of real world data.
For \sumagg, the knapsack-based optimization of \Cref{sec:opt_knapsack_aggpack} led to the greatest improvement on real world data - up to 92$\times$ faster than the naive algorithm (which timed out at 100K tuples).

\subsubsection{\procagg{sum} optimizations}
For \sumagg we compare variations (1)-(3), since a combination of the two optimizations is not possible. 
The three variations are compared over increasing sample sizes from the H\&M dataset (\Cref{fig:sum_opt_hm_rows_runtime}), and increasing number of generated synthetic rows (\Cref{fig:sum_opt_synth_rows_runtime}).

\begin{figure}[tbp]
    \centering
    \subfloat[Samples of the H\&M dataset.\label{fig:sum_opt_hm_rows_runtime}]{
        \includegraphics[width=0.48\linewidth]{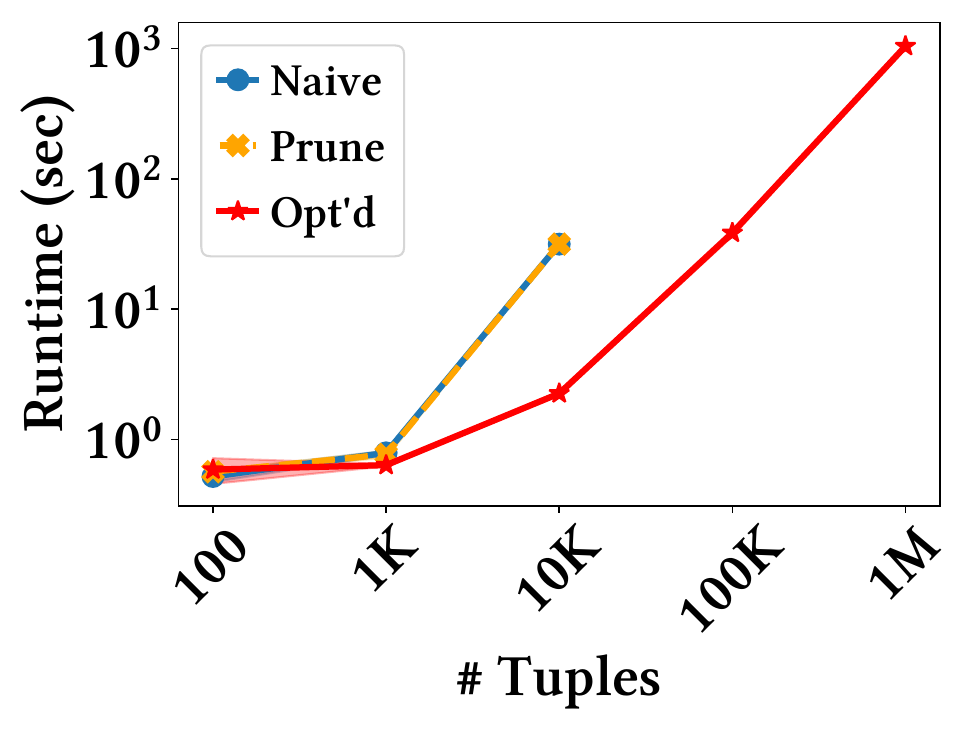}
    }
    \hfill
    \subfloat[Synthetic data.\label{fig:sum_opt_synth_rows_runtime}]{
        \includegraphics[width=0.48\linewidth]{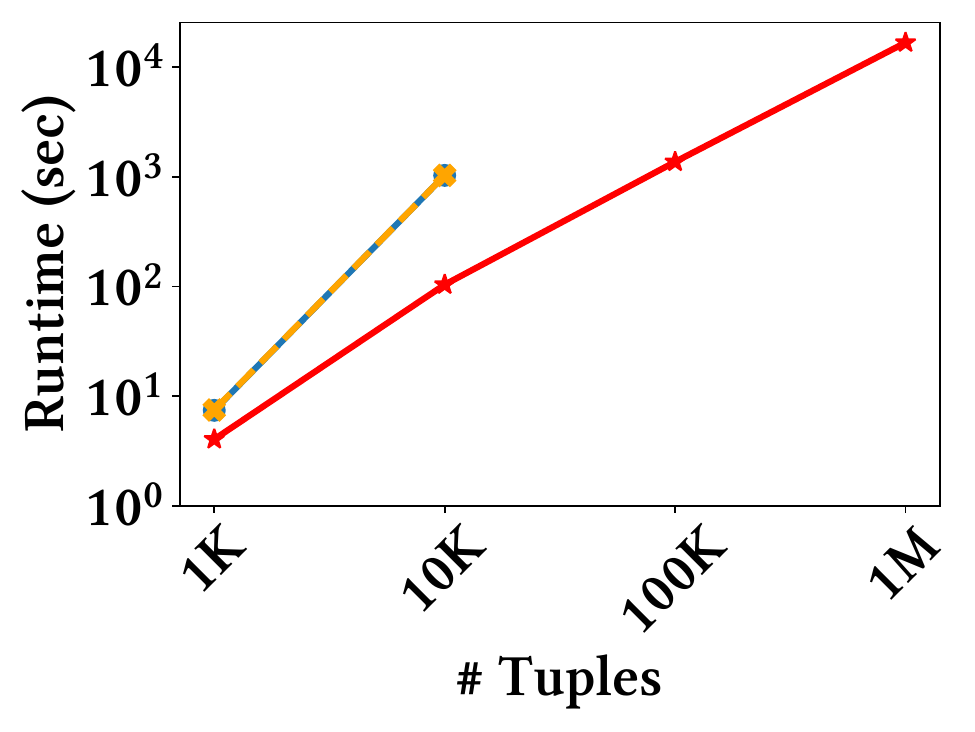}
    }
    \vskip-1em
    \caption{Comparison of runtime performance for \procagg{sum} optimizations over increasing number of tuples. In \Cref{fig:sum_opt_synth_rows_runtime}, the timeout was 20 hours. }
    \label{fig:opt_sum_rows_runtime}
\end{figure}

The most effective optimization was the knapsack-based algorithm (red in \Cref{fig:opt_sum_rows_runtime}). Pruning the search by a bound on the number of removed tuples had almost no effect on the search time. This is likely because the main bottleneck of the \procaggalpha procedure for \sumagg is in combining each tuple with all of the previously reached subsets. \revb{Additionally, the \crepairs in the H\&M samples required removing 30 tuples on average. Therefore, the bounds yielded by \greedyalg 
were not small enough to make a significant reduction to the runtime. Yet, for \avgagg and \medianagg, we observed a runtime decrease when pruning by the \greedyalg 
result.}

\subsubsection{\procagg{median} optimizations}
\revb{For \medianagg, we compare four variations: the naive \procagg{median}, pruning by a bound from \greedyalg 
(\Cref{sec:opt_greedy_aggpack}), optimizing it using value-based iteration (\Cref{sec:opt_histogram_aggpack}), or combining the two optimizations.}
We compared them over increasing samples from the Zillow dataset (\Cref{fig:median_opt_zillow_rows_runtime}), and increasing number of rows generated by the synthetic dataset generator (\Cref{fig:median_opt_synth_rows_runtime}).

In both real and synthetic data, the most effective optimization was the combination of pruning and value-based iteration. Iterating on unique values instead of tuples reduced the number of iterations. 
Additionally, pruning the search allowed us to consider fewer options per value. 
\revb{For the heuristic pruning in the synthetic data experiment, we saw a high variance in the runtime measurements. The runtime of this optimization greatly depends on the number of tuples removed by \greedyalg,
which is farthest from optimal for the \medianagg aggregation (more on this in \Cref{sec:greedy_optimize}).} 

\subsubsection{\procagg{avg} optimizations}
\revb{For \avgagg (as for \medianagg), we compare four variations: the naive \procagg{avg}, pruning the procedure by a bound from \greedyalg 
(\Cref{sec:opt_greedy_aggpack}), optimizing it using value-based iteration (\Cref{sec:opt_histogram_aggpack_avg}), or combining the two optimizations.}
We compared them over increasing samples from the SO dataset (\Cref{fig:avg_opt_so_rows_runtime}), and increasing number of rows generated by the synthetic dataset generator (\Cref{fig:avg_opt_synth_rows_runtime}). In both cases we sampled up to 1K rows, since \procagg{avg} still poses a computational challenge, even with the optimizations.

\begin{figure}[t]
    \centering
    \subfloat[Samples of the Zillow dataset.\label{fig:median_opt_zillow_rows_runtime}]{
        \includegraphics[width=0.48\linewidth]{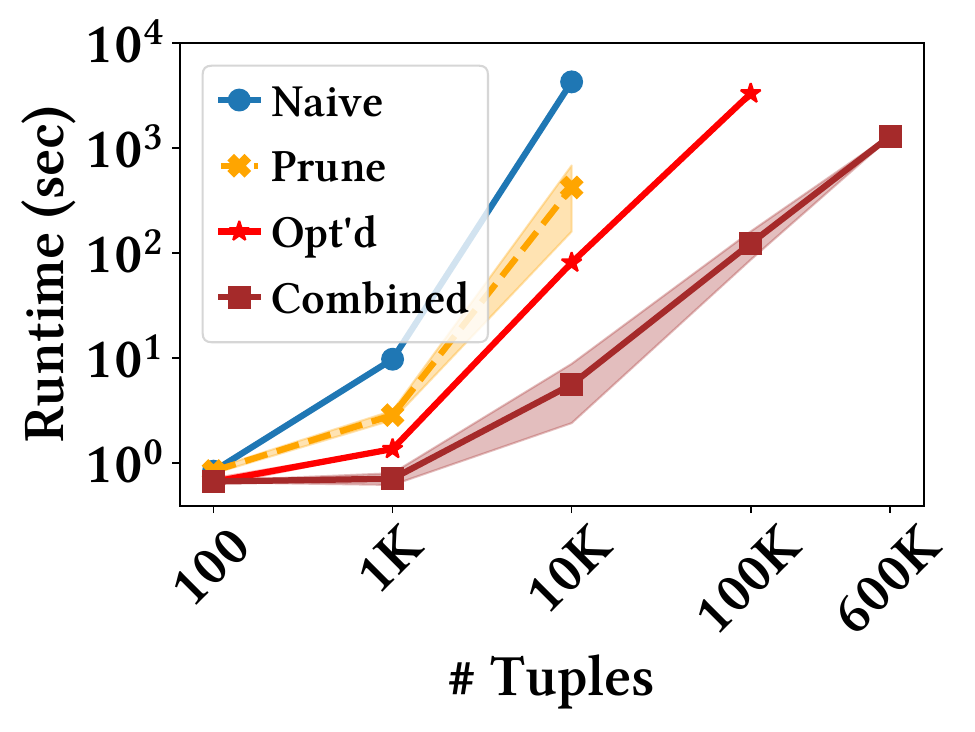}
    }
    \hfill
    \subfloat[Synthetic data.\label{fig:median_opt_synth_rows_runtime}]{
        \includegraphics[width=0.48\linewidth]{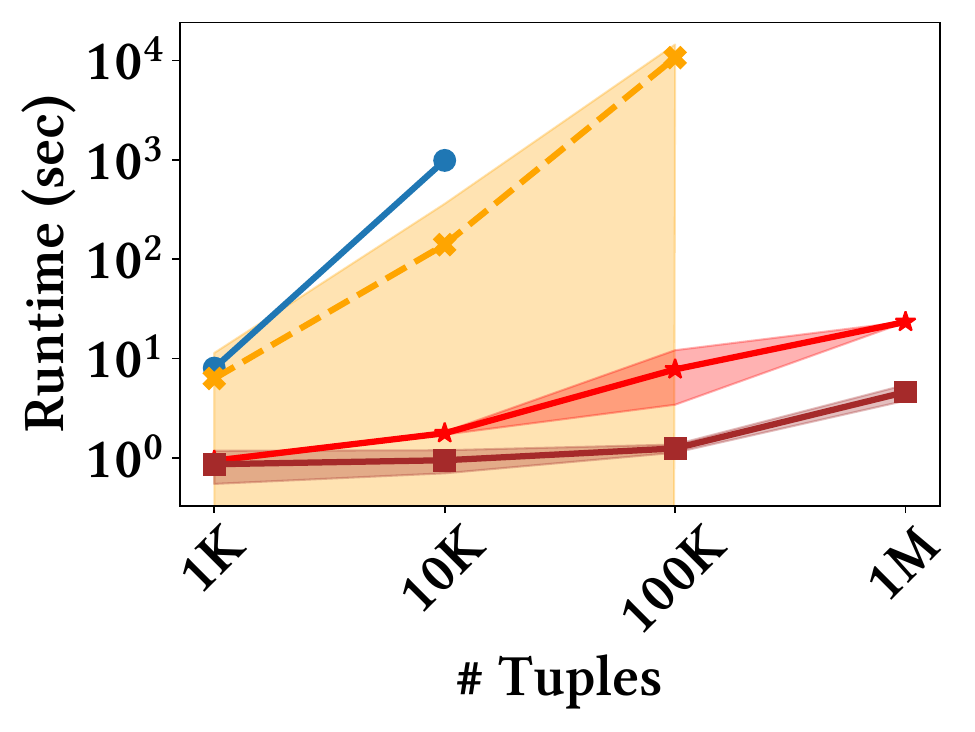}
    }
    \vskip-1em
    \caption{Runtime performance for \procagg{median} optimizations for different numbers of tuples.}
    \label{fig:opt_median_rows_runtime}
\end{figure}

\begin{figure}[b]
    \centering
    \subfloat[Samples of the SO dataset.\label{fig:avg_opt_so_rows_runtime}]{
        \includegraphics[width=0.48\linewidth]{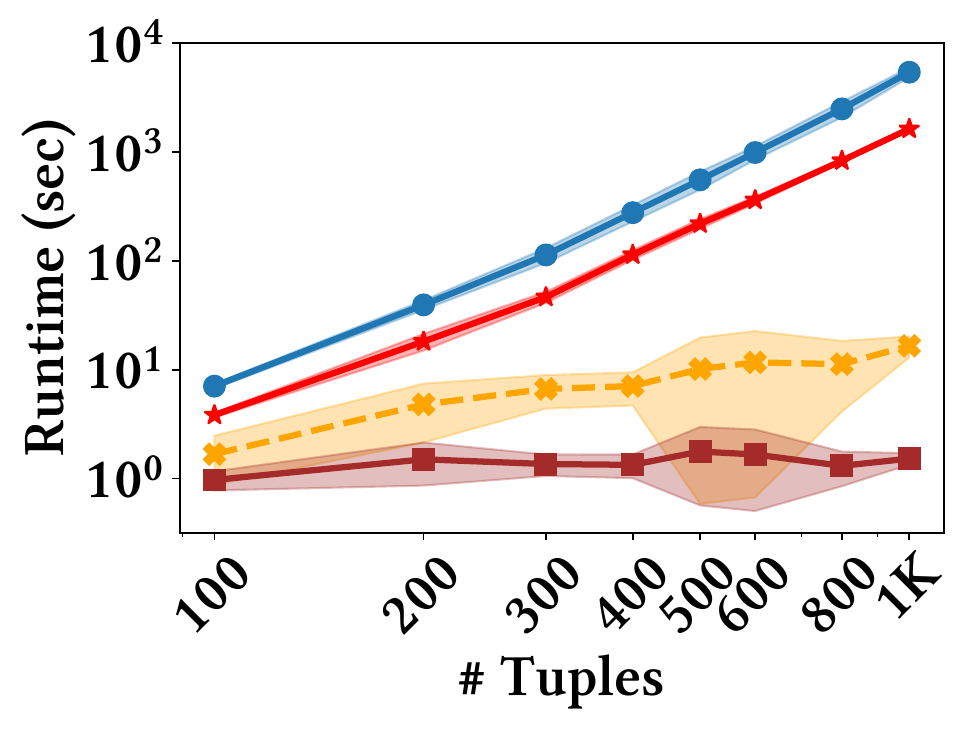}
    }
    \hfill
    \subfloat[Synthetic data.\label{fig:avg_opt_synth_rows_runtime}]{
        \includegraphics[width=0.48\linewidth]{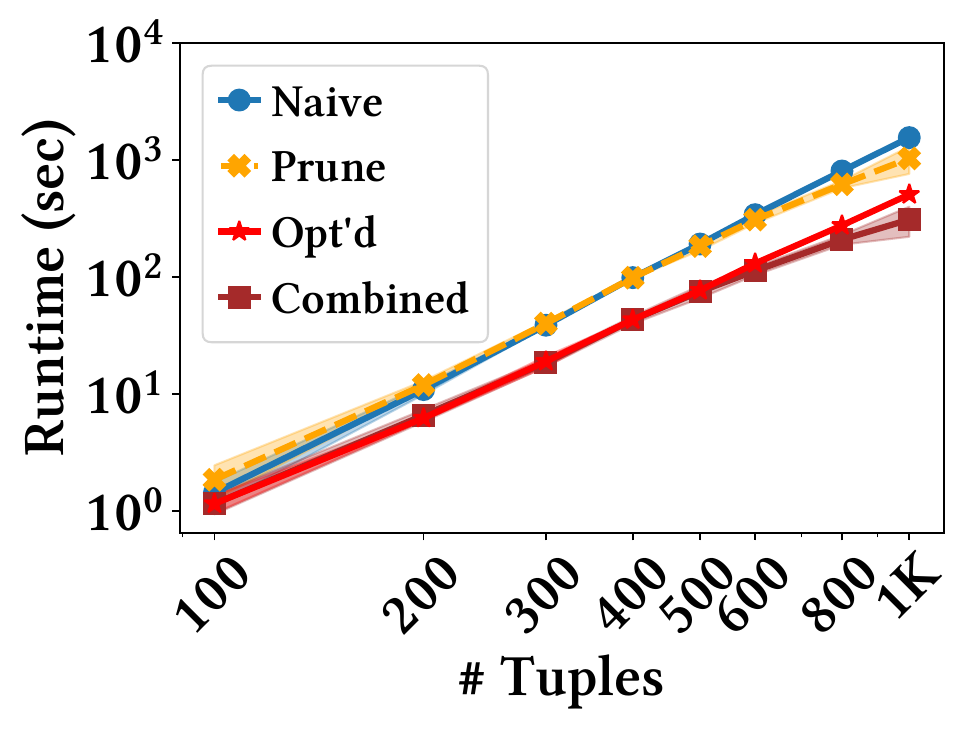}
    }
    \vskip-1em
    \caption{Runtime performance for \procagg{avg} optimizations over increasing number of tuples. Both axes in log-scale.}
    \label{fig:opt_avg_rows_runtime}
\end{figure}

For \avgagg, the combination of both optimizations worked best. Pruning by the heuristic bound had a greater effect on real data than it did on the generated synthetic data, due to the number of tuples that need to be removed (and therefore the size of the bound): in the real dataset, removing a small number of tuples (0.1\%) yields a repair. In the synthetic dataset, the repair requires removing 10\% of the data (10\% of the generated tuples were \addtuples). 

\subsection{Comparing Between Aggregations}\label{sec:exp_agg_compare}
\revb{We compare the runtimes \revb{\dpalg and \greedyalg} 
on different aggregation functions. 
\dpalg 
runs substantially longer for \medianagg, \sumagg, and \avgagg.
For each aggregation function, we chose the version of \dpalg that was fastest according to the experiments in \Cref{sec:exp_optimizations}.} 
To compare all aggregations on the same dataset, we generated synthetic datasets (as described in \Cref{sec:synthetic_data_gen}) ranging from 10K to 1M tuples, with 10\% being \addtuples. 

\revb{To evaluate the effect of the amount of violation on the algorithm runtimes, we generated 100K-tuple datasets with varying numbers of \addtuples (1K to 15K).
In both experiments, for \avgagg we included only \revb{\greedyalg}, 
since \revb{\dpalg}
with \avgagg remains a scalability challenge. For \countagg and \countdagg, we included only \revb{\dpalg} 
results; since \revb{\dpalg} 
is fast for these aggregations, we did not implement an instantiation of \revb{\greedyalg} 
for them.}

\begin{figure}[t]
    \centering
    \subfloat[Increasing \# tuples.\label{fig:scale_rows}]{
        \includegraphics[width=0.54\linewidth]{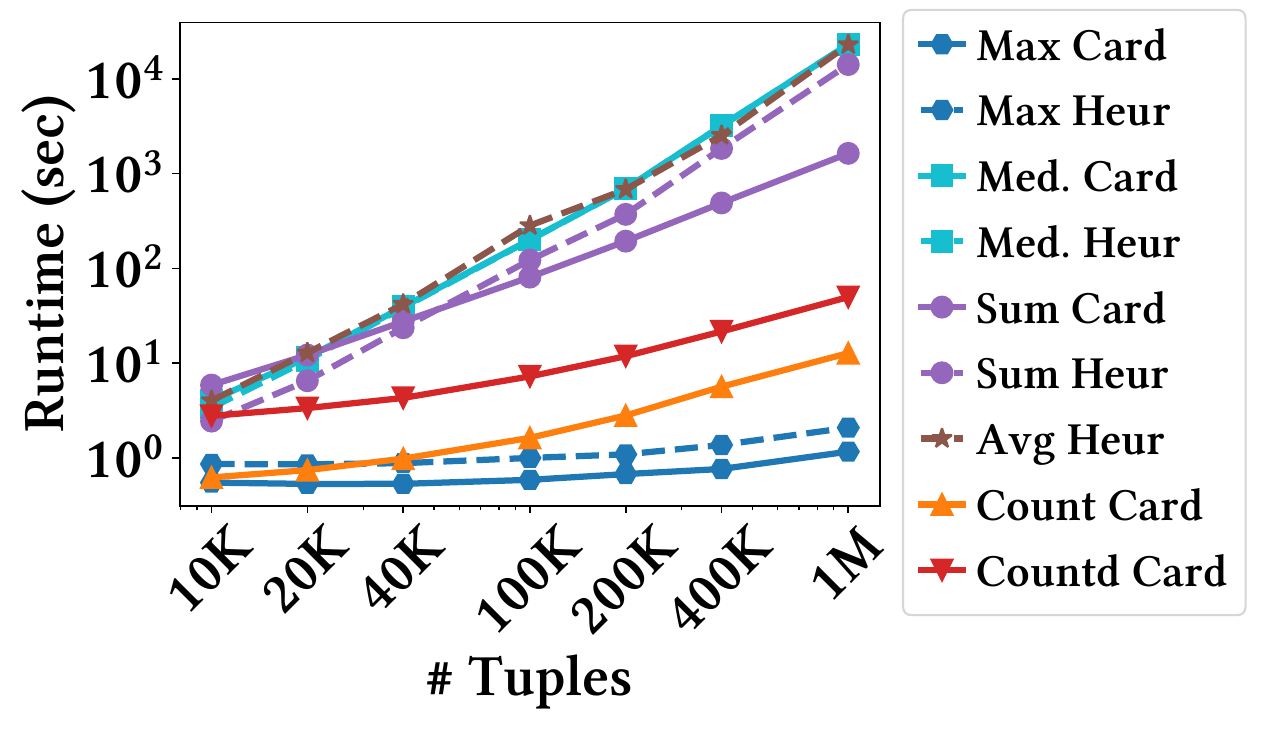}
    }
    \subfloat[Increasing \# \addtuples.\label{fig:scale_violations}]{
        \includegraphics[width=0.42\linewidth]{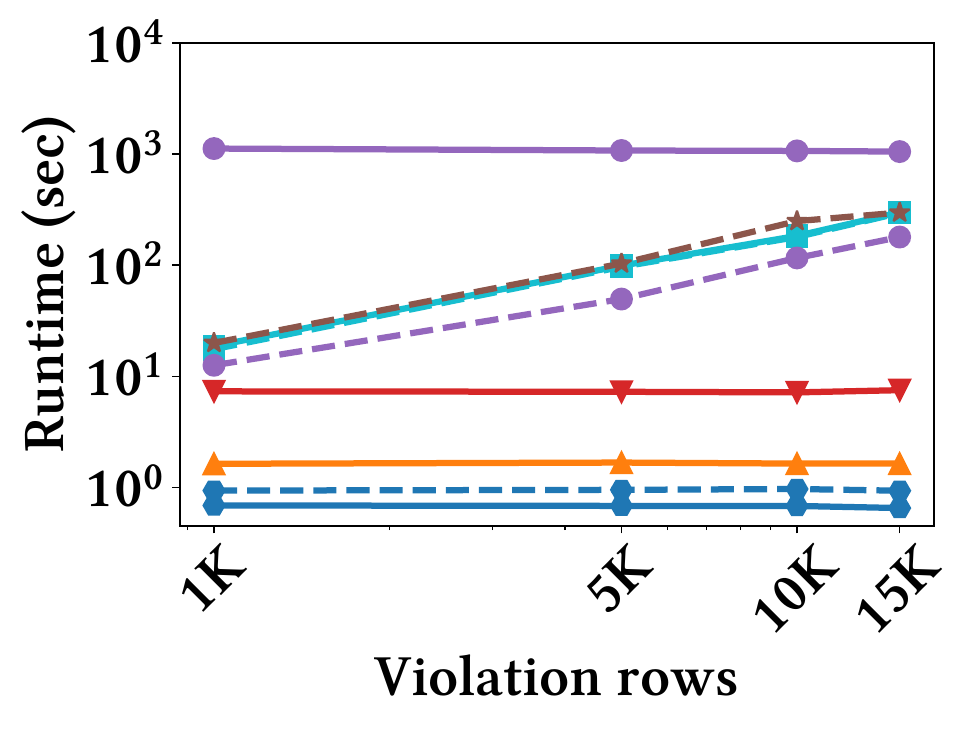}
    }
    \vskip-1em
    \caption{\revb{Run times of \dpalg and \greedyalg with various aggregations over synthetic data.}}
\label{fig:runtime_per_aggregation_synth}
\end{figure}

\revb{The runtimes versus the increasing number of tuples are shown in \Cref{fig:scale_rows} (A version of \Cref{fig:runtime_per_aggregation_synth} with confidence intervals is available in \Cref{sec:graphs_with_CIs}.)
As expected, \revb{\dpalg} 
was fastest for \maxagg, \countagg, and \countdagg, where the computation of $O_{\alpha}(r_i,x)$ is linear in the number of tuples in $r_i$. 
\revb{\dpalg} 
was slower for the non-linear aggregations: \medianagg, \sumagg and \avgagg. 

Interestingly, for \sumagg, \revb{\greedyalg} 
was faster than \revb{\dpalg}
only for tables with fewer than 80K tuples, since \revb{\greedyalg}
removes one tuple at a time. As the number of tuples to remove increases with the size of the dataset, \revb{\greedyalg} takes longer. This can also be observed in \Cref{fig:scale_violations}, showing the runtimes versus the number of \addtuples: the time of \revb{\greedyalg} 
as well as \revb{\dpalg} 
for \medianagg that uses greedy-based pruning, increase with the number of \addtuples, while other times are not affected.}

\subsection{Evaluation of the Heuristic Algorithm}\label{sec:exp_greedy_quality}
\revcommon{We next evaluate \greedyalg compared to \dpalg 
in terms of quality (number of tuples removed) and their runtime. 

\underline{Result Summary:}
\greedyalg outperformed \dpalg
in terms of runtime across all tested cases. For \avgagg over Stack Overflow, it was more than 2000 times faster, while for \maxagg over H\&M, the speedup was more modest at 2.5 times. In terms of solution quality, \greedyalg 
reached a \crepair for most cases, except \medianagg for Zillow and \avgagg for the Diabetes dataset. For these two cases, we include an analysis of the differences in the repairs found by the two algorithms in \Cref{sec:exp_case_studies}.}


\revb{
We ran the best variation of \revb{\dpalg} 
per each aggregation function as found in the previous experiments (\Cref{sec:exp_optimizations}). For \revb{\greedyalg}, 
we applied the relevant optimizations described in \Cref{sec:greedy_optimize}.
The results can be seen in \Cref{tab:tuple_del_real_datasets} and \Cref{fig:quality_full_data_rel}.}

\begin{table}[b]
\small
\centering
\caption{\revb{\dpalg and \greedyalg 
over real datasets.} 
}
\vskip-1em
\label{tab:tuple_del_real_datasets}
\renewcommand{\arraystretch}{0.95}
\begin{tabular}{|l|c|c|c|c|c|}
\hline

\textbf{Dataset} & $\alpha$ & \multicolumn{2}{c|}{\textbf{\revb{\greedyalg}}} & \multicolumn{2}{c|}{\textbf{\revb{\dpalg}}} \\
\cline{3-6}
& & \textbf{time (s)} & \shortstack{\\[0ex]\textbf{\#tuples}\\[-0.5ex]\textbf{removed}} & \textbf{time (s)} & \shortstack{\\[0ex]\textbf{\#tuples}\\[-0.5ex]\textbf{removed}}  \\

\hline
Zillow & \medianagg & 4912  & 22481 & 59904 & 16361 \\
H\&M & \sumagg & 57.53    & 1464  & 4441 & 1464  \\
H\&M & \maxagg & 0.7522   & 43    & 1.91     & 43    \\
German & \avgagg & 0.0126   & 16    & 0.19     & 16    \\
SO & \avgagg & 0.61     & 37    & 1082  & 37    \\
Diabetes & \avgagg  & 4.95	& 834	& 306	& 518 \\

\hline
\end{tabular}
\end{table}

\revb{\greedyalg 
was faster than \dpalg 
in all tested cases. The largest difference was for $\mbox{edu}{\nearrow}\eavgagg(\mbox{salary})$ in Stack Overflow---over 2000$\times$ faster; the smallest was for $\mbox{age}{\searrow}\emaxagg(\mbox{price})$ in H\&M (2.5$\times$ faster). For most of the tested AODs, \revb{\greedyalg} 
found a \revb{\crepair}. For $\mbox{age}{\nearrow}\eavgagg(\mbox{diabetes})$ and $\mbox{year}{\nearrow} \emedianagg(\mbox{value})$, it removed more tuples than necessary (1.37$\times$ and 1.61$\times$ more tuples than \revb{\dpalg},
respectively).}
%
A possible reason could be that for \avgagg and \medianagg we lack monotonicity---the removal of a tuple can either increase or decrease the aggregate value of the group.

Another issue is that for \medianagg, the impact of the removed tuple on the aggregate value of the group (and on the sum of violations) depends on the other tuples in the group (specifically those in proximity to the current median). \revb{In these cases, \greedyalg 
that only sees a single step ahead is likely to make suboptimal choices.} An example of this issue is provided in \Cref{sec:median_difficulty}.

\begin{figure}[t]
    \centering
    \includegraphics[width=0.8\linewidth]{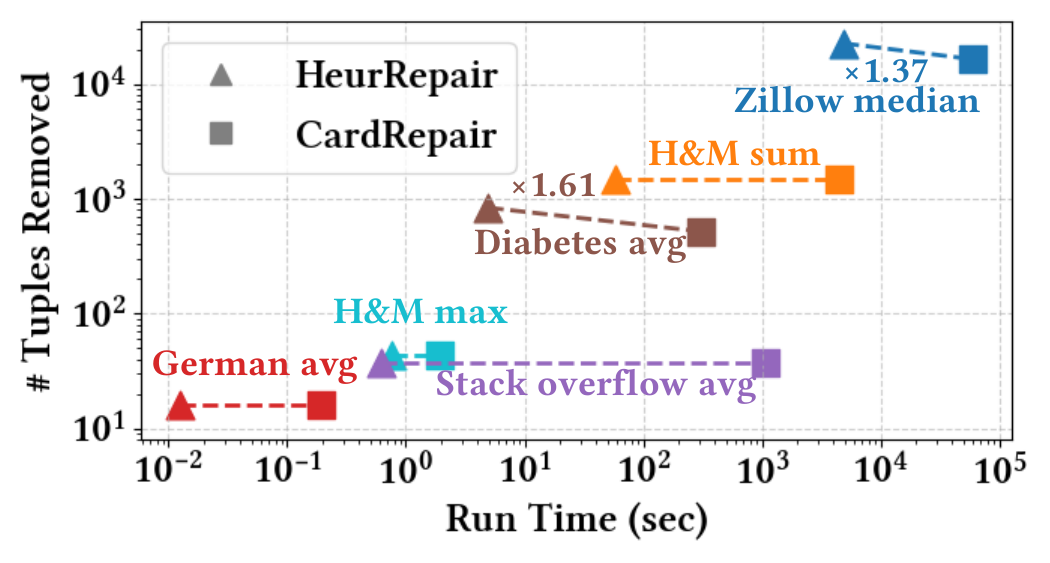}
     \vskip-1em
    \caption{\revb{Runtime and number of tuples removed (log scale) by \revb{\greedyalg and \dpalg}. 
    The number near the dashed line represents the removal overhead of \revb{\greedyalg} 
    (triangles) relative to \revb{\dpalg} (squares).} 
    }
    \label{fig:quality_full_data_rel}
\end{figure}

\subsubsection{\revcommon{
Drill Down on Removed Tuples}}\label{sec:exp_case_studies}
\revcommon{We next present two in-depth analyses, for the datasets where the \greedyalg 
reached a different solution size than \dpalg 
(analyses for additional datasets are included in \Cref{sec:case_studies_plus}). Our goal is to highlight the differences between
\dpalg 
and 
\greedyalg 
in terms of the tuples they remove, and to demonstrate how our approach can help explain why a trend does not fully hold.} 

\paragraph{Diabetes}
\underline{Result Summary:} 
\revb{\greedyalg 
removed 834 (0.83\%) tuples in 4.95 seconds, while \dpalg 
removed 316 fewer tuples (0.52\% removed) but took 306.18 seconds - over 61 times more than \greedyalg, 
as shown in \Cref{tab:tuple_del_real_datasets}}. 

\underline{Initial \aod violations.}
Consider the average diabetes prevalence over age as depicted in \Cref{fig:diabetes_age_query_res} (blue bars). There is a visible trend: as age increases, the prevalence of diabetes is higher. However, there are some violations to this \aod: $\eavgagg_r(\mbox{diabetes}| \mbox{age}{=}15{-}19)=0.009$ while $\eavgagg_r(\mbox{diabetes}| \mbox{age}=20{-}24)=0.007$, 
and there is a decrease among the groups $70-74, 75-79$ and $80+$. 

\begin{figure}[t]
    \centering
    \includegraphics[width=\linewidth]{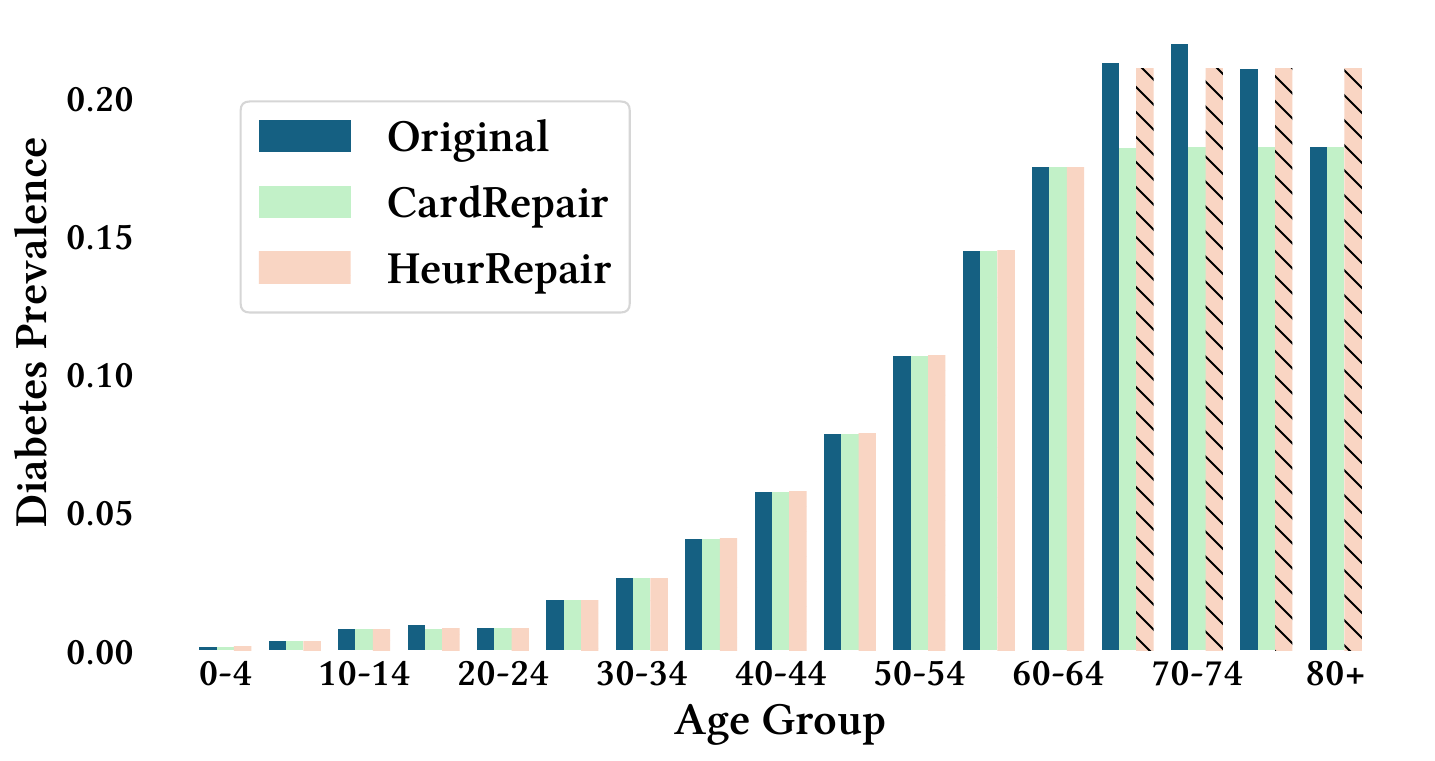}
    \caption{Diabetes prevalence by age. 
    }
    \label{fig:diabetes_age_query_res}
\end{figure}

\underline{Comparison of removed tuples.}
The data is close to satisfying the AOD. 
\revb{\greedyalg 
(\Cref{fig:diabetes_age_query_res}, orange bars) executed in 4.95 seconds, removing 834 tuples (0.84\%):} 
8 diabetes patients aged 15-19, to fix the first violation, as well as 60 diabetes patients aged 65-75, and 766 healthy individuals aged over 75, to fix the second violation. 
\revb{\dpalg 
(green bars) executed in 
306.18 seconds and removed 518 diabetes patients (0.52\%). 
Of these, it removed 8 patients aged 15-19, as \greedyalg 
did, and 510 patients aged 65-79.

The greedy nature of \greedyalg, 
which performs the most impactful single-tuple removal at each iteration, resulted in a sub-optimal repair; it removed 1.61 more tuples. However, it had a less drastic impact on the aggregation query result, as can be seen in \Cref{fig:diabetes_age_query_res}. The aggregation values of the heuristic repair remained closer to the original values compared to those of \dpalg}. 


\paragraph{Zillow}
\underline{Result Summary:} 
\revb{\greedyalg 
removed 22,481 (3.57\%) tuples in 39.8 minutes. In comparison, \dpalg 
removed 6120 fewer tuples (2.6\% ), but took 16.64 hours to run (12.2 times more than \greedyalg), 
as shown in \Cref{tab:tuple_del_real_datasets}}.


\underline{Initial \aod violations.}
We focus on the trend of house prices by the year they were built. The years were binned to 5-year terms: 1985-1989 (inclusive) until 2010-2014 (inclusive). We observed (\Cref{fig:zillow_median}, blue bars) that the median of house prices, grouped by these 5-year terms, is increasing but with an outlier in the years 2005-2009 (\$487K, versus \$565K in 2000-2004). That is, the \aod $\mathsf{year}\nearrow \emedianagg(\mathsf{price})$ almost holds.

\underline{Comparison of removed tuples.}
\revb{\greedyalg 
(\Cref{fig:zillow_median}, orange bars) terminated in 39.8 minutes, 
removing 22,481 (3.6\%) tuples, most (22,120) from the 2005-2009 group and the rest from 2000-2004.
\dpalg
(\Cref{fig:zillow_median}, green bars) took 16.64 hours and removed 16,361 tuples (2.6\% of the dataset), all from the 2005-2009 group. 

Comparing the tuples removed by \dpalg 
with those that remained, we observed notable differences in land use distribution.} Among the removed tuples, 62.7\% were condominium properties and 26.91\% were single-family residences. In contrast, the remaining tuples showed an almost reverse pattern: only 34.36\% were condominium properties, while 57.31\% were single-family residences. This suggests a connection between condominium apartments built between 2005 and 2009 and lower property values, which may be related to the housing bubble in those years. We also analyzed the geographic distribution of the removed versus remaining properties. The most significant difference was observed in Los Angeles, which accounted for 29.2\% of the removed tuples but only 13.17\% of the remaining ones. A possible reason for this is that the Los Angeles housing market was one of the most rapidly deflating markets at the end of the housing bubble in 2008~\cite{baker2008housing}.

\begin{figure}
    \centering
    \includegraphics[width=0.9\linewidth]{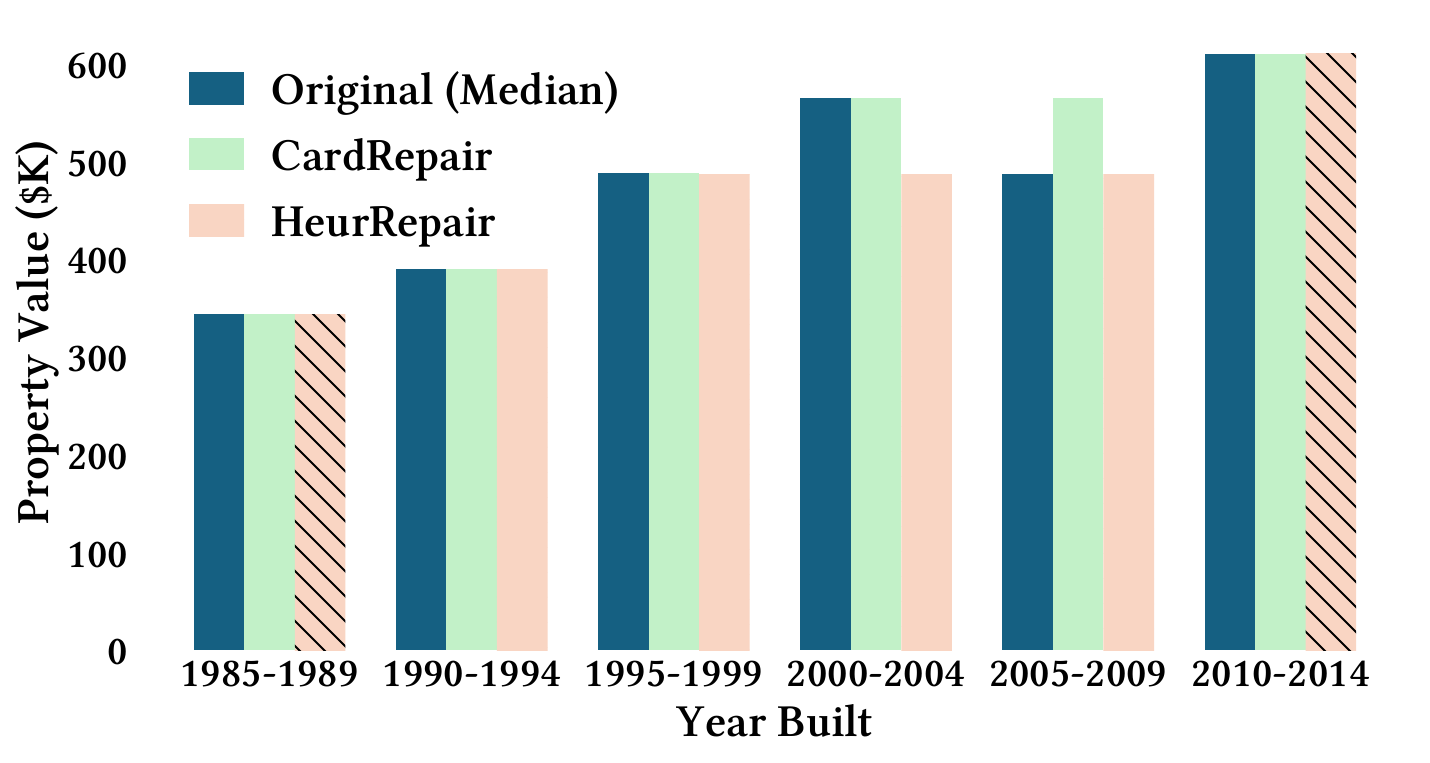}
    \caption{Property value median vs. year built on Zillow.
    }
    \label{fig:zillow_median}
\end{figure}

\underline{Distance from trend.}
\revb{As mentioned previously, a \crepair for an \aod can serve as a measure of the distance of a dataset from satisfying a trend.
When applying \greedyalg 
to find a repair for the opposite trend (a decrease in median house prices with increasing year), it removed 405K 
 tuples (64.3\%), 18.03 times more than
 for the increasing trend. Hence, the distance from the increasing trend is significantly smaller than the distance from the decreasing one. Note that \greedyalg 
 does not provide any guarantees on the minimum number of tuples to be removed to satisfy each trend. For such guarantees, we need \dpalg}. 


\revb{When applying \dpalg 
to find a repair for the decreasing trend, the repair required removing $146.6K$ 
tuples (23.3\%).} Due to the optimality of the algorithm, it provides a lower bound on the number of tuples to be removed to satisfy the trend in each direction. Therefore, the dataset is 8.96 times closer to satisfying an increasing trend than to satisfying a decreasing trend.

\revcommon{ 
\smallskip

From \Cref{fig:diabetes_age_query_res,fig:zillow_median}, it may seem that \dpalg 
only equates the aggregation values of consecutive groups.
Note, however, that in \Cref{fig:diabetes_age_query_res}, the values are close but not identical, while in \Cref{fig:zillow_median} (with \medianagg aggregation) they became equal. 
%
This behavior (making two consecutive groups have close values) is an inherent property of the problem. When each tuple has a small effect on the aggregation value of the group, a \crepair will result in close aggregation values. 
The heuristic algorithm \greedyalg aims to utilize this property by trying to match the value of a group with its predecessor group. Yet, as we saw in the experiments, this approach may be overly simplistic and may be considerably suboptimal (see \Cref{sec:exp_case_studies} and \Cref{fig:quality_full_data_rel}).
}

\subsection{\revcommon{Comparison to Outlier Detection Methods}}
\label{sec:outlier_exp}
\revcommon{ 
As an alternative to repairing for an \aod, we compare against methods that remove statistically identified outliers. This choice reflects a common strategy in data cleaning, where violations of expected trends are attributed to noisy or erroneous records~\cite{xiong2006enhancing,borrohou2023data}, rather than to fundamental structure in the data. Like our algorithms, outlier removal methods work by deleting a subset of tuples, but they rely on statistical rarity rather than explicitly targeting monotonicity violations. This comparison allows us to quantify the benefit of using \aod-aware repair algorithms over generic noise-removal approaches.
Therefore, we next aim to answer our 5th research question from the beginning of \Cref{sec:experiments}: 
can existing outlier removal methods be used to find a repair for an \aod?

We consider three settings: the SO dataset with the \aod $\mbox{edu}\nearrow\eavgagg(\mbox{salary})$, and H\&M with two AODs: $\mbox{age}\searrow\emaxagg(\mbox{price})$ and $\mbox{age}\searrow\esumagg(\mbox{price})$. Out of the real-world scenarios, these are most likely to be affected by outliers: for Diabetes and German credit, all values of the aggregated attribute are 0 or 1 (there are no outliers to be found); and the \medianagg function used with Zillow is less affected by outliers than \avgagg, \maxagg and \sumagg.

\begin{table}[t]
\caption{\revcommon{Outlier removal on 3 settings. For each method, the best variation 
is shown for global/per-group outlier removal.} 
}
\label{tab:outlier_removal_res}
\vskip-0.5em
\footnotesize
\begin{tabular}{llllll}
\toprule

\textbf{Setting} & \textbf{Method} & \makecell{\textbf{Best}\\\textbf{Param}} &  \makecell{\textbf{\# Removed}\\\textbf{outliers}} & \makecell{\textbf{\# Removed}\\\textbf{total}} \\

\midrule



SO \avgagg & Z score & 6/6 & 369/373  & 369/373 \\
SO \avgagg & LOF & 3/8 & 318/972 & \textbf{356}/990\\
SO \avgagg & Iso.Forest & 0.05/0.05 & 1519/1556 & 1526/1561\\
\hline
\revcommon{H\&M \maxagg} & \revcommon{Z score} & \revcommon{6/2}   & \revcommon{1356/83664}  &  \revcommon{\textbf{1356}/84934} \\
\revcommon{H\&M \maxagg} & \revcommon{Iso.Forest} & \revcommon{0.005/0.01} & \revcommon{8489/16657}  & \revcommon{8489/21258} \\



\hline 

\revcommon{H\&M \sumagg} & \revcommon{Z score} & \revcommon{1/1.5}  & \revcommon{462973/117349} & \revcommon{465770/118930} \\ 
\revcommon{H\&M \sumagg} & \revcommon{Iso.Forest} & \revcommon{0.2/0.01} & \revcommon{367546/16657} & \revcommon{369816/\textbf{18083}} \\ 


\bottomrule
\end{tabular}
\vskip-1em
\end{table}

\revcommon{


}

We experimented with three outlier removal methods using the scikit-learn implementations with hyper-parameter tuning.
\begin{enumerate*}
\item \emph{Z-score}~\cite{kaliyaperumal2015outlier} models $r\dbr{A}$ as samples from a normal distribution, and removes points whose distance from the mean is over $\tau$ standard deviations. We experimented with $1\leq \tau \leq 6$.
\item \emph{Local outlier factor (LOF)}~\cite{breunig2000lof} detects samples with substantially lower density compared to the $k$ closest neighbors are considered outliers. We tried $k$ values in $\set{3,5,8,10}$.
\item \emph{Isolation forest}~\cite{liu2008isolation} trains random decision trees to isolate observations; easily isolated samples (with shorter path lengths in the decision tree) are considered outliers. A contamination parameter controls the number of expected outliers. We tried contamination values between 0.001 and 0.2.
\end{enumerate*}
For each method, we also examined a variation where the outlier removal is done per group $r_i$ rather than the full dataset. 

For each dataset setting, we applied all combinations of outlier removal method and hyper-parameters as a preliminary step, selected the combinations that were closest to a repair according to the sum of violations after removal, and applied \dpalg after the removal to find a \crepair. 



The results of the best parameters for each combination of scenario and outlier removal method are shown in \Cref{tab:outlier_removal_res} (the full tables are available in \Cref{sec:outlier_exp_full}). For each outlier removal method, we report the number of removed outliers and the total number of tuple removals including those removed by \dpalg. 
Note that due to the size of the dataset, the LOF method timed out at 2 hours for the H\&M dataset (for comparison, \dpalg took under 1 minute for this dataset). 

Most variations of outlier removal methods did not yield a monotonic subset (namely, a valid solution). 
For the \sumagg scenario, none of the methods yielded a monotonic subset, all of them removed over 30 times more tuples than necessary, and \dpalg had to remove a similar number of tuples as without the preliminary outlier detection.
For the \avgagg and \maxagg scenarios, some of the outlier removal methods did return a monotonic subset satisfying the \aod, but they required removing 10 times more tuples than the \crepair.

Still, outlier removal can occasionally accelerate \dpalg. For example, for SO with \avgagg, Isolation forest with 0.05 contamination (in both the group-wise and the non group-wise settings), there were only a few remaining tuples to remove. In both cases, \greedyalg
yielded tight bounds, thus greatly shortening the runtimes of \dpalg (60.5 seconds and 115.9 seconds, respectively).

Overall, we conclude that \emph{repairing for an \aod 
typically requires more than just applying outlier detection, 
as the tuples that obscure the trend are not necessarily statistical outliers}. They may appear structurally standard, but still interfere with the expected trend.
}

\section{Concluding Remarks}\label{sec:conclusions}

\revb{We introduced the notion of an aggregate order dependency (AOD) and investigated the problem of computing a cardinality-based repair for databases that violate an \aod. This framework serves as a principled approach to analyzing violations of expected monotonic trends in data. We analyzed the computational complexity of the repair problem, proposed a general algorithmic framework (\revb{\dpalg}), and provided tailored optimizations for common aggregate functions. We also presented a heuristic alternative (\revb{\greedyalg}) that is often highly effective (that is, fast and close to optimal quality). Our experimental evaluation demonstrates the practical effectiveness of the proposed methods and highlights, through \revb{analysis of the removed tuples,} 
the utility of our framework in quantifying and interpreting trend violations.}

As for the next steps, a staggering open challenge lies in handling the \emph{average} aggregate function, which remains computationally demanding and requires further optimization techniques; this difficulty is echoed in related contexts such as Shapley value computation~\cite{abramovich2025advancingfactattributionquery}. \revcommon{
Furthermore, \dpalg has theoretical guarantees but it is computationally heavy, while \greedyalg is usually faster but its quality is not guaranteed. 
To close this gap, it would be helpful to find effective algorithms with nontrivial guaranteed approximation.}

Beyond that, several promising directions remain open for future work. First, our techniques can likely be extended to incorporate constraints on the \emph{magnitude of change} between consecutive groups, such as requiring a minimum absolute or relative difference, thereby capturing stronger forms of monotonicity. Second, a natural generalization is to accommodate \emph{multiple} AODs, enabling the assessment of how closely a dataset conforms to multiple monotonic trends simultaneously. 
Third, we can extend the framework to support a richer class of AODs, such as ones involving aggregation over joins of multiple relations, negation, and so on. \revc{Another important extension is to support intervention models beyond tuple deletion, including updates of cell values and insertion of new tuples.} 

\revc{Finally, as an alternative to the number of removed tuples, we can explore different \emph{measures of intervention}, such as a weighted sum of tuples, assuming that different tuples are associated with different weights (e.g., representing confidence scores). Another measure of intervention is the complexity of the description of removed tuples; in this context,} we are currently exploring a variation of the framework where we delete batches of tuples using tuple templates, thereby seeking populations of deleted tuples that are characterized by easy-to-explain database queries (e.g., people of a certain age group), in the vein of previous work on pattern-based explanations~\cite{DBLP:journals/pvldb/AgmonGYZK24,wu2013scorpion, roy2014formal, roy2015explaining, ibrahim2018}; yet, while previous work has focused on \emph{abductive} explanations (what to keep in the database), our focus is on \emph{contrastive} explanations (what to remove).




\bibliographystyle{ACM-Reference-Format}
\bibliography{trends}


\newpage
\clearpage 
\appendix

\section{Omitted Proofs}\label{app:proofs}
 Following are proofs of results omitted from the body of the paper.

\thmnphardness*

\begin{proof}
We begin with $\alpha=\esumagg$.
We will show a reduction from the NP-complete \emph{Subset-Sum} problem in the variant of natural numbers. An instance of this problem consists of a set $X=\set{x_1,\dots,x_n}$ of natural numbers, along with a natural number $s$; the goal is to decide whether $X$ has a subset $Y$ with $\sum_{x\in Y}x=s$. Given $X$ and $s$, we construct the relation $r$ that consists of the following tuples over $(G,A)$:
\begin{enumerate}
\item $(i,s)$ for $i=0,\dots,n-1$;
\item $(n,x_j)$ for $j=1,\dots,n$;
\item $(\ell,s)$ for $\ell=n+1,\dots,2n$.
\end{enumerate}
We claim that $(X,s)$ is a ``yes'' instance (i.e., the above $Y$ exists) if and only if $r$ has a monotonic subset (having the aggregate value $s$ for each group) that requires deleting fewer than $n$ tuples: 
\begin{itemize}
    \item If the subset $Y$ exists, then we remove from $r$ all tuples $(n,x_j)$ with $x_j\in X\setminus Y$, and obtain a monotonic subset. 
    \item Conversely, if $r$ has a monotonic subset achieved by deleting fewer than $n$ tuples, then it must leave at least one tuple of each of the above three types. In particular, the tuples $(n,x_j)$ that remain in $r$ must have their $x_j$s summing up to $s$ for the result to satisfy the \aod.
\end{itemize}
In particular, it is NP-hard to find a \crepair deleting a minimum number of tuples from $r$.

For $\alpha=\eavgagg$, we use a similar reduction, with a slightly different construction of $r$, which now consists of the following tuples:
\begin{enumerate}
\item $(i,0)$ for $i=0,\dots,n-1$;
\item $(n,-s)$ and, additionally, $(n,x_j)$ for $j=1,\dots,n$;
\item $(\ell,0)$ for $\ell=n+1,\dots,2n$.
\end{enumerate}
The same reasoning of \sumagg shows that $(X,s)$ is a ``yes'' instance if and only if $r$ has a monotonic subset that requires deleting fewer than $n$ tuples. In this case, the aggregate value of each group is zero.
\end{proof}

\lemmaDmatrixons*

\begin{proof}
Let $\mathsf{OPT}$ denote the optimal solution: $\mathsf{OPT}[s]$ will be the size of the largest subset of $r_i$ with sum $s$.

We first observe that
$\mathsf{OPT}[s] \geq \mathsf{OPT}[s-v_j]+1$, since a solution with sum $s-v_j$ can be completed into a solution for $s$ by adding another $v_j$ tuple to the subset.
We divide into two cases.

Case 1: $\mathsf{OPT}[s] > \mathsf{OPT}[s-v_j]+1$. 
We will now show that $v_j$ items are not used in the optimal solution for $s$.
Assume by way of contradiction that $D[j,s]\neq 0$. We can remove a $v_j$ tuple from the optimal solution for $s$, and get a solution with sum $s-v_j$ with $\mathsf{OPT}[s]-1$ tuples. Then $\mathsf{OPT}[s-v_j]\geq \mathsf{OPT}[s]-1$. However, by the assumption in case 1, this means: $\mathsf{OPT}[s] > \mathsf{OPT}[s-v_j]+1 \geq \mathsf{OPT}[s]$ (contradiction).

Case 2: $\mathsf{OPT}[s] = \mathsf{OPT}[s-v_j]+1$. Meaning that the solution with sum $s$ contains one more tuple than the solution for $s-v_j$. 
Adding a $v_j$ tuple to the optimal solution for $s-v_j$ would yield a solution for $s$ with $D[j,s-v_j]+1$ tuples of value $v_j$. Since the algorithm searches for an optimal solution with a minimal number of instances of the largest value, we get:
\begin{equation}\label{eq:sum_knapsack_1}
D[j,s]\leq D[j,s-v_j]+1
\end{equation}

Assume by way of contradiction: $D[j,s] \notin \set{0,D[j,s-v_j]+1}$. Then:
\begin{equation}\label{eq:sum_knapsack_2}
0 < D[j,s] < D[j,s-v_j]+1   
\end{equation}
Then removing a $v_j$ tuple from the optimal solution with sum $s$ would yield a solution with sum $s-v_j$, with $OPT[s]-1$ tuples in total, and with $D_j[s]-1$ instances of $v_j$. 
From \Cref{eq:sum_knapsack_2} we get $D[j,s]-1 < D[j,s-v_j]$, in contradiction to the minimality of $D[j,s-v_j]$.
\end{proof}



\thmGreedyRatio*

\begin{proof}

For each case we show a database with schema $(G,A)$, where \greedyalg will reach a repair by removing at least $\Omega(n)$ times more tuples than the optimal solution.

\paragraph{\avgagg.}
Consider the following database with schema $(G,A)$, composed of the following tuples: $\set{(1,1),(1,3),(2,1),(2,3),(3,2)}$ and also $n-5$ tuples $(3,1)$. Initially, the averages of the 3 groups are $\eavgagg(r_1) = \eavgagg(r_2)=2$ and $\eavgagg(r_3)=(2+n-5)/n-4$, meaning $1<\eavgagg(r_3)<2$, and the \aod $G\nearrow\eavgagg(A)$ is not satisfied. A possible monotonic subset is achieved by removing the two tuples $(1,3)$ and $(2,3)$, resulting in an average of $1$ for $r_1, r_2$. \greedyalg would not choose to remove these tuples. Removing $(1,3)$ has impact 0 on the $\text{MVI}$, and removing $(2,3)$ has a negative impact. The only tuples with a positive impact are the $n-5$ copies of $(3,1)$, which \greedyalg will remove one by one.

\paragraph{\sumagg.}
Consider the following database with schema $(G,A)$, composed of the following tuples:
$n-5$ copies of $(1,1)$, as well as $(2,2n)$, 2 copies of $(2,n/3)$ and 2 copies of $(3,n)$.
Initially, the sums of the 3 groups are $\esumagg(r_1) = n-5$, $\esumagg(r_2)=8n/3$ and $\esumagg(r_3)=2n$, and the \aod $G\nearrow\esumagg(A)$ is not satisfied since $\esumagg(r_2)>\esumagg(r_3)$.
Removing the two tuples $(2,n/3)$ will yield a monotonic subset, since $\esumagg(r_2)$ will be $2n$. In the first iteration of \greedyalg, tuples from $r_1$ have impact $0$, and tuples from $r_3$ will have a negative impact. Therefore the algorithm will remove a tuple from $r_2$: specifically, $(2, 2n)$ has the maximal impact of $2n/3 - (n-5-2n/3)=5+n/3$ (while the other two tuples in $r_2$ have impact $n/3$). The new sums are $n-5, 2n/3, 2n$, and the \aod is still not satisfied. Next, \greedyalg will remove only tuples from $r_1$, as the others have impact $0$. Since each tuple contributes $1$ to $\esumagg(r_1)$, $n/3-5$ additional tuples will be removed.

\paragraph{\medianagg.}
Consider the following database with schema $(A,G)$, composed of the following tuples:
$(1,1)$, $3$ copies of $(1,n)$, and $(2,1),(2,2),\dots,(2,n-4)$.
Initially, the medians are $\emedianagg(r_1) = n$, $\emedianagg(r_2)\in \set{1,\dots,n-4}$, and the \aod $G\nearrow\emedianagg(A)$ is not satisfied since $\emedianagg(r_1)>\emedianagg(r_2)$.
Removing the three tuples of the form $(1,n)$ yields a monotonic subset, since $\emedianagg(r_1)$ will be $1$.
However, the impact of the $(1,n)$ tuples is $0$, since it does not change $\emedianagg(r_1)$. Instead, \greedyalg will remove all $n-4$ tuples from $r_2$, starting with the lower $A$ values, since each one will increase $\emedianagg(r_2)$ and have a positive impact.
\end{proof}

\section{Missing Details from \Cref{sec:optimizations}}
In this section, we give additional details that were omitted from \Cref{sec:optimizations}.

\subsection{Pruning the Intermediate Result Dictionary}\label{sec:dp_pruning}

A notable drawback of the \dpalg outlined in \Cref{alg:dp} is that the size of the intermediate result dictionary, $H_i$, can grow to be very large. We next describe two methods of pruning that dictionary, followed by an evaluation of their effects on the size of the dictionary and on the runtime.

\subsubsection{Pruning the Intermediate Result Dictionary using Dominated Values}\label{sec:dp_pruning_dom}
Let $F_i = \set{x \mid O_{\alpha}(r_i,x)\neq \emptyset}$ denote the set of feasible aggregate values of $r_i$. Each feasible value $x\in F_i$ is added to $H_i$ in line~8 of \Cref{alg:dp}. The size of $F_i$ may be exponential in the size of $r_i$ since, potentially, every subset of $r_i$ may lead to a different aggregate value. However, after computing $H_i$ (at the end of the loop of line~3), we can prune $H_i$, so that the number of keys saved in this dictionary is linear in the size of the dataset.

We say that a monotonic subset $r'\subseteq r$ 
\emph{uses} a value $x$ for group $i$ 
if it holds that $\alpha(r'_i\dbr{A})=x$.
An entry $x_1$ of $H_i$ is \emph{dominated} by entry $x_2$ of $H_i$ if $x_1 > x_2$ and for every monotonic subset $r'$ that uses $x_1$ for $r_i$, there exists another monotonic subset $r''$ such that $|r''|\geq|r'|$, 
$r''$ uses $x_2$ for $r_i$, 
and $\forall j\neq i: \alpha(r'_j\dbr{A})=\alpha(r'_j\dbr{A})$. That is, it is possible to use $x_2$ instead of $x_1$ as the aggregate value for group $i$, without compromising the optimality of the solution. 
Recall that $H_i[x]$ holds the largest subset $r'$ of $r_1\cup \dots \cup r_i$ such that $\alpha(r'_i\dbr{A})=x$.

\begin{restatable}{lemma}{lemhprune}
\label{lem:hprune}
Let $x_1$ and $x_2$ be two values in $H_i$ such that $x_1 > x_2$ and $|H_{i-1}[x_1]| \leq |H_{i-1}[x_2]|$. Then, $x_1$ is dominated by $x_2$.
\end{restatable}

\begin{proof}
Let $r'$ be a monotonic subset of $r$ 
w.r.t.~an \aod $G\nearrow \alpha(A)$, that uses $x_1$ for group $i$. 
Then we have $\alpha(r'_i[A])=x_1$ and $\alpha(r'_j[A])\geq x_1$ for $j>i$. 
Since $x_1 > x_2$, $\alpha(r'_j[A])\geq x_2$ for $j>i$. 
Let $r'_{>i} = r'_i\cup \dots r'_m$. Then $H_{i-1}[x_2] \cup r'_{>i}$ is a monotonic subset for the \aod, and
$|H_{i-1}[x_2]\cup r'_{>i}| \geq |H_{i-1}[x_1] \cup r'_{>i}| = |r'|$.
\end{proof}

Given \Cref{lem:hprune}, we can prune $H_i$ in the following manner: we iterate over the keys $x$ in ascending order and keep track of the largest solution size $k$ seen so far. When considering a value of $x$, keep it only if $|H[x]|>k$.
Since the solution sizes $|H_i[x]|$ must be strictly increasing, this means that the size of $H$ after this pruning is bounded by the size of the dataset.
Consider $r_1=\set{t\in r | t.\mbox{edu} = 1}$ from the example in \Cref{tab:DB_example}, with $r_1\dbr{A}=[1,2]$. Then $|H_1[1]| = 1, |H_1[2]| = 1, |H_1[3]| = 2$. In this case, 1 dominates 2, and we prune 2 from $H_1$.

\subsubsection{Pruning the Intermediate Result Dictionary using a Heuristic Upper Bound}\label{sec:dp_pruning_greedy}
To accelerate the search, we can use a lower 
bound on the size of the \crepair (i.e., an upper bound on the number of removed tuples). Such a bound can come from a heuristic 
algorithm that finds a monotonic subset, and we show an example of such an algorithm (\greedyalg) in \Cref{sec:greedy_tuple_del}. 
We first run \greedyalg and reach a (possibly non-maximal) monotonic subset, that requires removing $h$ tuples. The number of tuple removals for the \crepair is at most $h$.
Therefore, we prune the search so as not to consider partial solutions that remove more tuples than \greedyalg. 

In the intermediate result dictionary $H$, we only keep solutions that remove up to $m$ tuples from the entire relation $r$. Specifically, in line~8 of \Cref{alg:dp}, if $|H_{i-1}[x]\cup F_i[x]|\leq |r|-m$, then $H_{i-1}[x]\cup F_i[x]$ cannot be completed into an optimal solution, and therefore we can avoid saving $x$ in $H_i$.

\subsubsection{Experimental Evaluation}\label{sec:exp_dp_pruning}
We next compare the runtimes of variations of \dpalg, over various aggregation functions (\sumagg, \medianagg, \avgagg, \maxagg).
The compared variations are: (1) Naive DP - the naive version described in \Cref{alg:dp}. (2) Pruning the intermediate result dictionary by dominating values (\Cref{sec:dp_pruning_dom}). (3) Pruning the intermediate result dictionary by a heuristic bound (\Cref{sec:dp_pruning_greedy}). (4) Both pruning methods combined. For each aggregation function we use the setting that achieved the fastest runtimes according to \Cref{sec:exp_optimizations}: for \procagg{sum}, we apply the knapsack optimization (\Cref{sec:opt_knapsack_aggpack}). For \procagg{median} and \procagg{avg}, we used the combination of \procaggalpha pruning by a heuristic bound and optimizing \procaggalpha with value-based iteration. 

\begin{figure}[tbp]
    \centering
    \subfloat[\sumagg.\label{fig:agnostic_opt_sum}]{
        \includegraphics[width=0.48\linewidth]{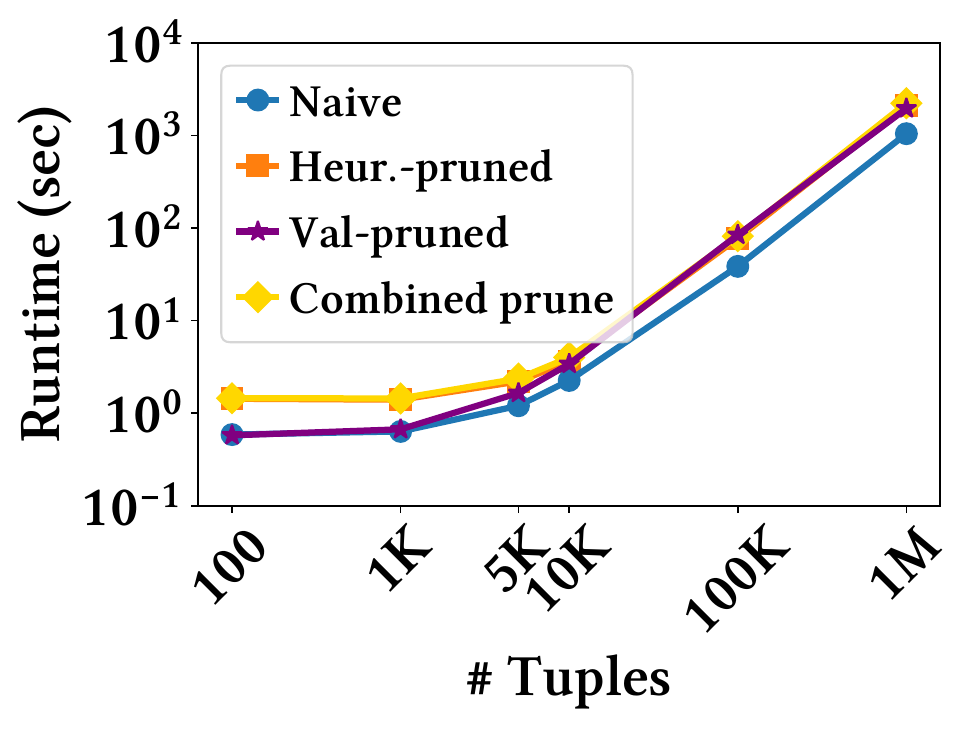}
    }
    \hfill
    \subfloat[\medianagg.\label{fig:agnostic_opt_median}]{
        \includegraphics[width=0.48\linewidth]{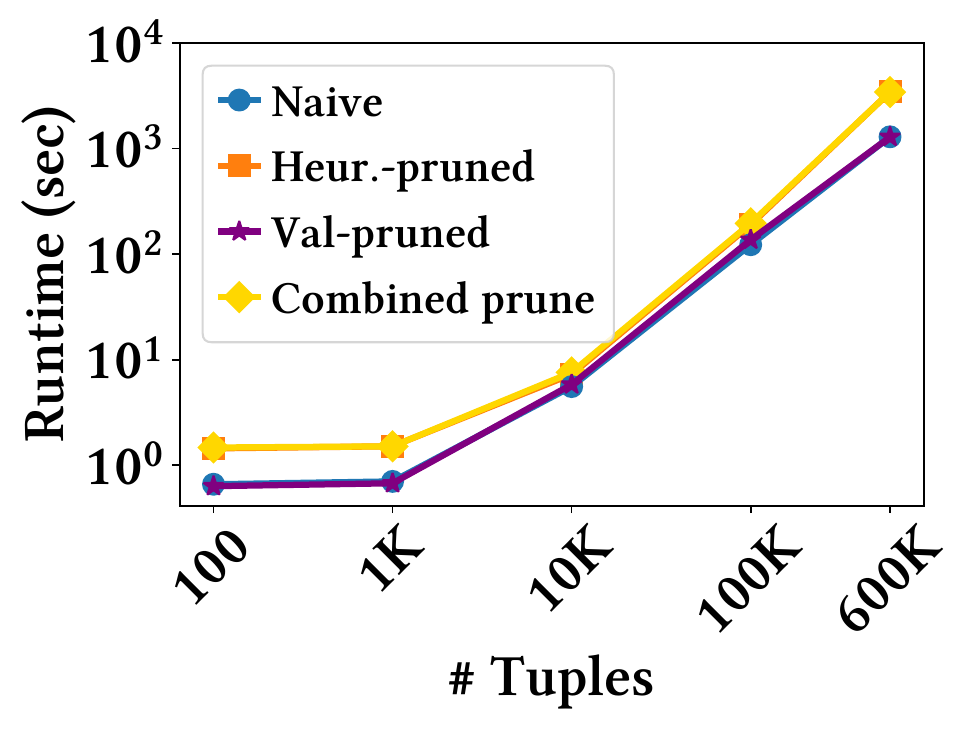}
    }
    \\
    \subfloat[\avgagg.\label{fig:agnostic_opt_avg}]{
        \includegraphics[width=0.48\linewidth]{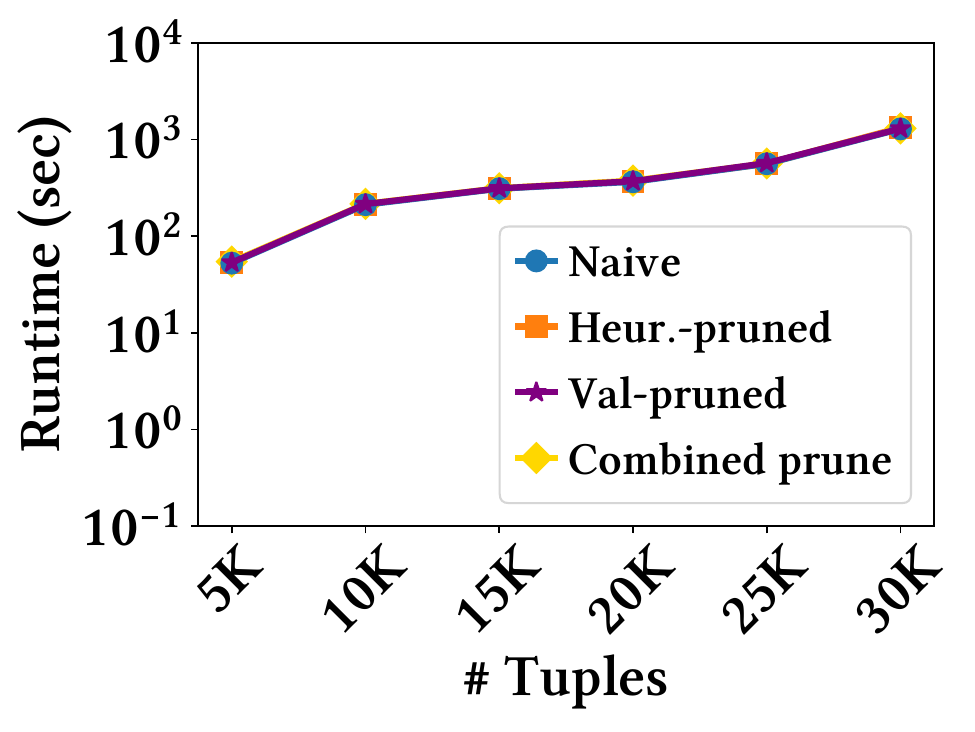}
    }
    \hfill
    \subfloat[\maxagg.\label{fig:agnostic_opt_max}]{
        \includegraphics[width=0.48\linewidth]{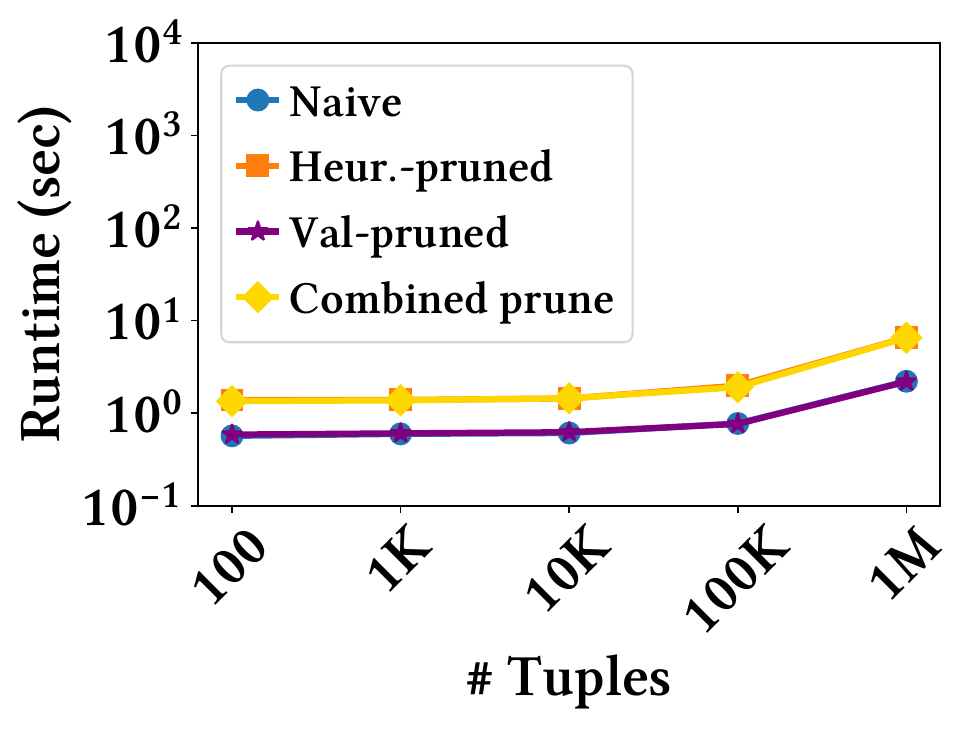}
    }
    \caption{Comparison of runtime performance of different DP aggregation-agnostic optimizations, over \sumagg, \medianagg, \avgagg, and \maxagg across increasing number of tuples, using samples from real datasets.}
    \label{fig:agnostic_opt_runtime}
\end{figure}

\begin{figure}[b]
    \centering
    \subfloat[SO \avgagg.\label{fig:hkeys_prune_so_avg}]{
        \includegraphics[width=0.47\linewidth]{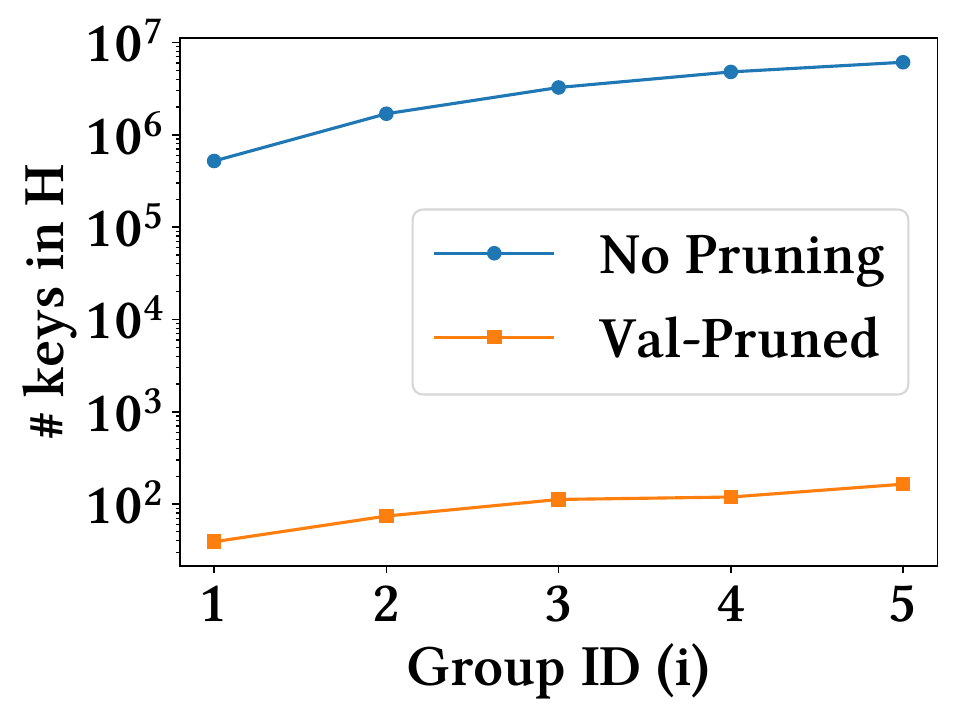}
    }
    \hfill
    \subfloat[H\&M \sumagg.\label{fig:hkeys_prune_hm_sum}]{
        \includegraphics[width=0.47\linewidth]{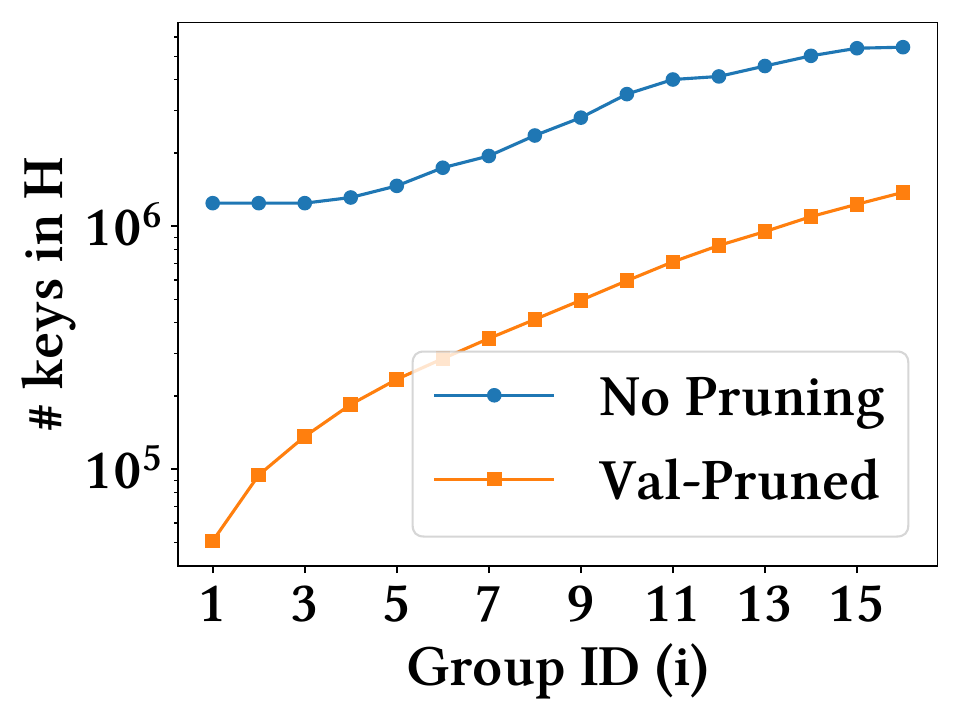}
    }
    \caption{Number of keys stored in the intermediate result dictionary $H$ with and without pruning.}
    \label{fig:hkeys_prune}
\end{figure}

From \Cref{fig:agnostic_opt_runtime}, we can see that both methods of pruning the intermediate result dictionary caused an increase in runtimes. This is due to the overhead of this optimization: pruning requires an additional iteration over the keys in the dictionary. However, these optimizations are useful for reducing memory overhead. \Cref{fig:hkeys_prune} shows the number of keys stored in $H$ over the number of processed groups, for \avgagg and \sumagg, the two harder cases of the problem. For \avgagg, pruning dominated values reduced the number of stored keys by 4 orders of magnitude (6,112,712 versus 164 for the last group). 
For \sumagg, pruning dominated values resulted in 4 times less stored keys for the last group (1.4M vs 5.4M). 
The difference in the size of improvement is related to the number of possible distinct aggregation values of subsets of $r\dbr{A}$. For \avgagg, every different sum and size of a subset possibly lead to a new average, while for \sumagg, the size of the subset does not matter.

We conclude that the pruning optimizations for the intermediate result dictionary substantially reduce $H$, their current overhead diminishes a runtime improvement. An interesting future work direction is to develop the algorithm further to reduce the pruning overhead.

\subsection{Pruning Holistic Packing using a Heuristic Bound}\label{sec:heuristic_pruning}

\subsubsection{\sumagg}
Recall that the dynamic program in \Cref{sec:aggpack_instances_sum_avg} used a data structure $M[j,s]$, that holds the maximal subset of the first $j$ tuples $t_1,\dots,t_j$ with sum $s$ of $A$ values. To be able to prune the computation based on a bound $\bound$ on the number of tuples to remove, we first replace the structure $M$ with a similar structure $P[j,s]$ with a different meaning. $P[j,s]$ holds the minimal subset $r'$ of $t_1,\dots,t_j$ that when removed from $r$, the sum of $A$ values is $s$. I.e., $\esumagg((r\setminus r')\dbr{A}) = s$.

Let $T=\esumagg(r\dbr{A})$.
To populate the table $P[j,s]$, we start by setting $P[j,s]=0$ for $j=0,s=T$ and $\infty$ otherwise (when removing $0$ tuples, the sum of the remaining tuples is $T$).

For $j>0$ and any $s\in V_{\alpha}(r)$, if $s+t_j[A] \in V_{\alpha}(r)$, 
$$P[j,s]=\min\left(P[j-1,s],1+P[j-1,s+t_j[A]]\right)\,.$$  
If $s-t_j[A]\notin V_\alpha(r)$, then we set $P[j,s]= P[j-1,s]$.

To prune the computation based on a bound $\bound$ on the number of tuples to remove, we add another condition to the update of $P[j,s]$: if $1+P[j-1,s+t_j[A]] > \bound$ (there are already too many tuples in the removed subset), we set $P[j,s]=P[j-1,s]$.

\subsubsection{\avgagg}
As for \sumagg, to be able to prune the search for a maximum size subset with an $\eavgagg$ value of $x$, we modify the dynamic programming table described in \Cref{sec:aggpack_instances_sum_avg} 
as follows. Recall that the data structure $M'[j,\ell,s]$ was used to hold a boolean value, being true when $\set{t_1,\dots, t_j}$ has a subset with precisely $\ell$ tuples, where the $A$ values sum up to $s$. 

The modified table $P'[j,\ell,s]$ holds true when $\set{t_1,\dots, t_j}$ has a subset $r'$ of $\ell$ tuples such that $\esumagg((r\setminus r')\dbr{A}) = s$.

To populate $P'$, we initialize $P'[j,\ell,s]$ to true for $\ell=0, s=T$, and every $j$, and false otherwise.

For $j>0$ and $\ell>0$, we check whether $s+t_j[A]\in V_\alpha(r)$; if so:
\[
P'[j,\ell,s]= 
P'[j-1,\ell,s] \lor
P'[j-1,\ell-1,s+t_j[A]]\,,
\]
otherwise, if $s+t_j[A]\notin V_\alpha(r)$, then:
\[
P'[j,\ell,s]=
P'[j-1,\ell,s]\,.
\]

Again, as for \sumagg, if a bound $\bound$ on the number of tuples to remove is given, we can limit the update only to cells of $P'$ that have $\ell\leq \bound$.


\subsection{Holistic Packing for Median: Pseudo-code}
\label{sec:app:holistic-median}

\Cref{alg:aggpack_median_opt_app} gives the pseudo-code of the algorithm of holistic packing for \medianagg.

\begin{algorithm}[t]
\small
\DontPrintSemicolon
\SetKwInOut{Input}{Input}\SetKwInOut{Output}{Output}
\LinesNumbered
\newcommand\mycommfont[1]{\footnotesize\ttfamily\textcolor{blue}{#1}}
\SetCommentSty{mycommfont}
\Input{A bag of values $r_i\dbr{A}$, and an upper bound $\bound$ on the number of tuples to remove.}
\Output{A mapping $M$ from a $\mathsf{median}$ value $x$ to maximum subset size, and an array $D$ mapping $x$ to the $A$ values composing it and the number of tuples on each side to include.}

$T \gets $ a sorted array of $r_i\dbr{A}$ values \;
$\mathsf{Hist} \gets$ a histogram of $T$, in ascending order by values $v_j$ \;
$M \gets$ a mapping from each \medianagg $x$ to $-1$ \; 
$D \gets$ a mapping from $\mathsf{median}$ to (empty list, 0) \;
$n \gets |T|$ \;

\tcp{Define the range of pivot candidates with pruning.}
$\mathsf{start} \gets \max(0, \lfloor n/2 - \bound/2 - 1 \rfloor)$ \;
$\mathsf{end} \gets \min(n, \lfloor n/2 + \bound/2 + 1 \rfloor)$ \;

\tcp{Single pivot medians.}
\For{$i \gets \mathsf{start}$ \KwTo $\mathsf{end} - 1$}{
    $k \gets \min(i, n - i - 1)$ \;
    $\mathsf{count} \gets 2 \cdot k + 1$ \;
    \If{$M[T[i]] < \mathsf{count}$}{
        $M[T[i]] \gets \mathsf{count}$ \;
        $D[T[i]] \gets ([T[i]], k)$ \;
    }
}

\tcp{Two adjacent pivot medians.}
\For{$i \gets \mathsf{start}$ \KwTo $\min(\mathsf{end}, n - 1) - 1$}{
    $k \gets \min(i, n - i - 2)$ \;
    $\mathsf{count} \gets 2 \cdot k + 2$ \;
    $x \gets (T[i] + T[i+1]) / 2$ \;
    \If{$M[x] < \mathsf{count}$}{
        $M[x] \gets \mathsf{count}$ \;
        $D[x] \gets ([T[i], T[i+1]], k)$ \;
    }
}

\tcp{Two nonadjacent pivot medians.}
$k_{\text{left}} \gets 0$ \;
\For{$i \gets 0$ \KwTo $|\mathsf{Hist}| - 1$}{
    $k_{\text{left}} \gets k_{\text{left}} + \mathsf{Hist}[i]$ \;
    $k_{\text{right}} \gets n - k_{\text{left}}$ \;

    \tcp{Pruning - too far to the left}
    \If{$2 \cdot k_{\text{left}} \leq n - \bound$}{
        \textbf{continue} \;
    }
    \For{$j \gets i + 1$ \KwTo $|\mathsf{Hist}| - 1$}{
        \tcp{Pruning - too far to the right}
        \If{$2\cdot k_{\text{right}}\leq n-\bound$}{
            \textbf{continue} \;
        }

        $x \gets (\mathsf{Hist}[i] + \mathsf{Hist}[j]) / 2$ \;
        $\mathsf{count} \gets 2 \cdot \min(k_{\text{left}}, k_{\text{right}})$ \;
        \If{$M[x] < \mathsf{count}$}{
            $M[x] \gets \mathsf{count}$ \;
            $D[x] \gets ([\mathsf{Hist}[i], \mathsf{Hist}[j]], \min(k_{\text{left}}, k_{\text{right}} - 1)$ \; 
        }

        $k_{\text{right}} \gets k_{\text{right}} - \mathsf{Hist}[j]$ \;
    }
}

\Return $(M, D)$ \;

\caption{\procagg{median}} 
\label{alg:aggpack_median_opt_app}
\end{algorithm}

\subsection{Example for Holistic Sum Packing} \label{sec:sum_opt_example_full}

\begin{figure*}[t]
\definecolor{mygreen}{RGB}{209, 238, 205}
\definecolor{myblue}{RGB}{203, 233, 247}
\definecolor{mypeach}{RGB}{251, 222, 201}

\begin{center}
\renewcommand{\arraystretch}{1.3}
\begin{tabular}{|c|*{21}{c|}}
\hline
$s$ & \textbf{0} & \textbf{1} & \textbf{2} & \textbf{3} & \textbf{4} & \textbf{5} & \textbf{6} & \textbf{7} & \textbf{8} & \textbf{9} & \textbf{10} & \textbf{11} & \textbf{12} & \textbf{13} & \textbf{14} & \textbf{15} & \textbf{16} & \textbf{17} & \textbf{18} & \textbf{19} & \textbf{20} \\

\hline\hline
$M$ &
\cellcolor{mygreen}0 & -1 & \cellcolor{mygreen}1 & -1 & \cellcolor{mygreen}2 &
\cellcolor{mypeach}1 & -1 & \cellcolor{mypeach}2 & -1 &
\cellcolor{mypeach}3 & \cellcolor{mypeach}2 & -1 &
\cellcolor{mypeach}3 & -1 &
\cellcolor{mypeach}4 & -1 &
-1 & -1 & -1 & -1 & -1 \\
\hline
\begin{tabular}[c]{@{}c@{}}temp $M$\end{tabular} &
0 & -1 & 1 & -1 & 2 &
1 & \cellcolor{myblue}1 & 2 & \cellcolor{myblue}2 &
3 & \cellcolor{myblue}\st{2} 3 & \cellcolor{myblue}2 &
3 & \cellcolor{myblue}3 &
4 & \cellcolor{myblue}4 &
\cellcolor{myblue}3 & -1 &
\cellcolor{myblue}4 & -1 & \cellcolor{myblue}5 \\
\hline \hline
\multirow{1}{*}{$D[1]:v_1=2$} &  &  & \cellcolor{mygreen}1 &  & \cellcolor{mygreen}2 &  &  &  &  &  &  &  &  &  &  &  &  &  &  &  &  \\
\hline
\multirow{1}{*}{$D[2]: v_2=5$} &  &  &  &  &  & \cellcolor{mypeach}1 &  & \cellcolor{mypeach}1 &  & \cellcolor{mypeach}1 &   \cellcolor{mypeach}2 &  & \cellcolor{mypeach}2 &  & \cellcolor{mypeach}2 &  &  &  &  &  &\\
\hline
\multirow{1}{*}{$D[3]: v_3=6$} &  &  &  &  &  &  & \cellcolor{myblue}1 &  & \cellcolor{myblue}1 &  & \cellcolor{myblue}1 &  \cellcolor{myblue}1 &  & \cellcolor{myblue}1 &  & \cellcolor{myblue}1 &   \cellcolor{myblue}1 &  & \cellcolor{myblue}1 &  & \cellcolor{myblue}1\\
\hline
\end{tabular}
\end{center}

 \caption{Example for \Cref{alg:aggpack_sum_opt}. Empty cells in $D_j$ contain the default value $0$.}\label{fig:sum_opt_example}
\end{figure*}

Consider the group $r_2$ defined by $\mbox{edu}{=}2$ from \Cref{tab:DB_example}. The values of the aggregate attribute are $2,2,5,5,6$, creating the input histogram $\mathsf{Hist}=[(2,2), (5,2), (6,1)]$. The keys are sorted in ascending $\mbox{edu}$ order, so $v_1=2, v_2=5, v_3=6$. The sum of all $r_2[A]$ values is $T=20$; hence, the size of $M, M'$, and each row of $D$ is $21$. The cell $M[0]$ is initialized to $0$, and all other cells to $-1$.

\Cref{fig:sum_opt_example} shows the point in the algorithm where we have already performed the first two iterations of the loop in line~5 of \Cref{alg:aggpack_sum_opt}. That is, we have processed $v_1=2$ (green cells) and $v_2=5$ (orange cells). 
For example, using two instances of $v_2=5$ we reached $s=10$, so $M[10]=2$ and $D[2,10]=2$. Using one instance of $5$ and two instances of $2$ we have reached $s=9$, so $M[9]=3$, $D[2,9]=1$, and $D[1,9-5]$ has already been set to $2$ in the first iteration.
At this point, $Z=2\cdot 2+5\cdot 2=14$.
The last pair in the histogram is $(v_j,c_j)=(v_3,c_3)=(6,1)$. We increase $Z$ to $20$ (line 6). Then, $M$ is copied into $M'$, which is shown in \Cref{fig:sum_opt_example} in the non-colored cells of $M'$. 

We iterate over $s$ values from $v_j=6$ to $Z=20$ (line~8). For example, consider $s=10$. In the previous iteration, we had $M[10]=2$ using two instances of $5$, and no instances of $6$, so $u=D[3, 10-6]=0$. We now consider $M[s_{\text{prev}}]=M[s-(u+1)\cdot v_j]=M[10-6]=M[4]=2$, update $M'[10]=2+0+1$ (line~18), and $D[3,10]=1$ (line~19). 

As another example, consider $s=16$. We have $u=D[j,s-v_j]=D[3,10]=1$ since the instance of $6$ was already used to construct a maximum-sized subset with sum 10. The condition in line 12 holds, and we cannot use $u+1$ instances of $6$.
Therefore we iterate over possible usages of 6 (combined with the previous values) for a way to construct $16$. In line 15 we have $M[s-u'\cdot v_j]=M[16-6]=M[10]=2$. Note that $M[10]$ still holds the solution made up of two instances of 5 (the best solution without using the current $v_j$), since the $M$ data structure only changes in the end of the iteration on $v_j$. In lines 16-17, we update $M'[16]=2+1=3$ and $D[3,16]=1$.

\section{Optimizations of the Heuristic Algorithm}\label{sec:greedy_optimize}

We next describe optimizations to \greedyalg, the heuristic algorithm described in \Cref{sec:greedy_tuple_del}. As explained before, for every aggregation function $\alpha$, the impact of a tuple $t$ (computed in lines~6-8 of \Cref{alg:greedy}) can be calculated by running the query 
``\textsf{SELECT $\alpha($A$)$ GROUP BY G}'' twice: once with $t$ and once without it. Then, $S_{\textsf{MVI}}$ is calculated for each of the queries, and the impact is the difference between them.
We next describe several optimizations of this process. We begin with general optimizations that apply to all aggregations functions $\alpha$, and then describe aggregation-specific optimizations.




\paragraph{Iteration on a reduced set of tuples}
We optimize the algorithm by calculating tuple impacts only for those within violating groups, i.e., any group $r_i$ where $\textsf{MVI}(r_i, r_{i+1})>0$ or $\textsf{MVI}(r_{i-1}, r_{i})>0$. 
Note that this optimization may alter the algorithm's outcome. While the original algorithm could select for removal a zero-impact tuple from anywhere in the relation (if zero is the highest impact, as shown in Example~\ref{example:non-pos-impact}), the optimization will only consider such a tuple if it belongs to a violating group. This heuristic logically focuses the repair effort on tuples that are active participants in a violation.

We can further optimize the process for \maxagg, \countagg, \countdagg and \sumagg. These functions are \emph{monotonic} in the sense that a tuple removal from group $r_i$ can only reduce (or preserve) the value $\alpha(r_i\dbr{A})$. Therefore, if $\textsf{MVI}(r_i, r_{i+1}) > 0$, every removal of a tuple from $r_{i+1}$ would only increase $\textsf{MVI}(r_i, r_{i+1})$. Hence, when selecting a tuple to remove, we iterate over (and compute the impact for) tuples in $r_i$, not those in $r_{i+1}$. In other words, $r_i$ is considered a violating group only if $\textsf{MVI}(r_i, r_{i+1})>0$.


The next optimizations are based on specific properties of each aggregation function.
\paragraph{Optimization for \maxagg and \minagg}


A key observation for \maxagg
is that when there are multiple tuples with the maximum $A$ value in a group $r_i$, removing any single one of them will have no impact on $\alpha(r_{i}\dbr{A})$ (and consequently on the sum of violations). Similarly, removing any other tuple from $r_i$ will have zero impact on $\alpha(r_{i}\dbr{A})$. The only operation that reduces the value of $\alpha(r_{i}\dbr{A})$ is the removal of all the maximum-value tuples together.
Therefore, we optimize the algorithm by computing the impact of removing all maximum-value tuples in each violating group together, instead of removing tuples one at a time. 
A similar observation applies to \minagg.

\paragraph{Optimization for \countagg and \countdagg} 
For \countagg, all tuples of a given group have the same impact. If $r_i$ is a violating group, any tuple removed from $r_i$ will reduce $\alpha(r_i\dbr{A}])$ by one; therefore, all tuples from $r_i$ have the same impact on the sum of violations, and it is sufficient to calculate the impact of one of these tuples.

For \countdagg, similarly to \maxagg, removing any proper subset of tuples with the same $A$ value from $r_i$ has no impact on $\ecountdagg(r_i\dbr{A})$. Therefore, instead of removing a single tuple at each iteration, we remove all the tuples of $r_i$ that share the least common $A$ value in this group together.

\paragraph{Optimizations for \sumagg and \avgagg}
In the case of \sumagg, to compute the impact of a tuple $t\in r_i$, we compute the modified aggregate value based on the stored value $\esumagg(r_i\dbr{A})$. That is, $\esumagg((r_i\setminus\{t\})\dbr{A}) = \esumagg(r_i\dbr{A})-t[A]$. 

For \avgagg,
we track the sum and the count of tuples for each group. Then instead of computing $\eavgagg((r_i \setminus \set{t})\dbr{A})$ from scratch, we can compute it as $(\esumagg(r_i\dbr{A})-t[A])/(\ecountagg(r_i\dbr{A})-1)$.


\paragraph{Optimization for \medianagg}
We maintain the count of the tuples in each group $r_i$, and a sorted list $L$ of $r_i\dbr{A}$ values and their corresponding indices. The list is sorted by $A$ values in ascending order. 
Instead of removing each tuple and calculating $\emedianagg((r_i \setminus \set{t_j})\dbr{A})$, we can compute it in three batches: for all tuples such that $j<m$, all tuples where $j>m$, and the median tuple $j=m$. For each of these batches, every tuple removed will result in the same median.
If the count of tuples after removal ($|L|-1$) is odd, the new median is given by:
        \[
        \emedianagg(r_i\setminus \set{t_j}) = 
        \begin{cases} 
        \mathrm{L}[m] & \text{if } j < m, \\
        \mathrm{L}[m-1] & \text{if } j \geq m
        \end{cases}
        \]        
If $|L|-1$ is even, the new median is the average of two values:
        \[
        \emedianagg(r_i\setminus \set{t_j}) = 
        \begin{cases} 
        (\mathrm{L}[m] + \mathrm{L}[m+1])/2 & \text{if } j < m, \\
        (\mathrm{L}[m-1] + \mathrm{L}[m+1])/2 & \text{if } j = m, \\
        (\mathrm{L}[m-1] + \mathrm{L}[m])/2 & \text{if } j > m.
        \end{cases}
        \]


\section{Omitted Results from Comparison to Outlier Detection Methods}\label{sec:outlier_exp_full}

We bring here the full set of results from three combinations of dataset and \aod. For each such combination, we performed hyper parameter search for the three outlier removal methods described in \Cref{sec:outlier_exp}. 

For each method, we performed hyper parameter tuning on the parameters described next.
\begin{enumerate}[leftmargin=*]
\item \emph{Z-score}~\cite{kaliyaperumal2015outlier} with threshold parameter $\tau$: models $r\dbr{A}$ as samples from a normal distribution and removes points whose distance from the mean is over $\tau$ standard deviations. We experimented with $1\leq \tau \leq 6$.
\item \emph{Local outlier factor}~\cite{breunig2000lof}: samples with substantially lower density compared to the $k$ closest neighbors are considered outliers. We tried $k$ values in $\set{3,5,8,10}$.
\item \emph{Isolation forest}~\cite{liu2008isolation}: random decision trees are trained to isolate observations; easily isolated samples (with shorter path lengths in the decision tree) are considered outliers. A contamination parameter controls the number of expected outliers in the data. We tried contamination values between 0.001 and 0.2.
\end{enumerate}
For each method, we also examined a variation where the outlier removal is done per group $r_i$ versus on the full dataset.

All methods (with the various parameters) were applied on the datasets as a preliminary step. We then selected the ones that were closest to a repair - i.e., the ones with the minimal sum of violations after removal. On the resulting subset of the dataset for those method variations, we applied \dpalg to find a cardinality repair. 

The results are shown in \Cref{tab:outlier_removal_res_so_avg,tab:outlier_removal_res_hm_sum,tab:outlier_removal_res_hm_max}.
For each outlier removal method, we report the number of removed tuples, and the sum of violations after removal. For each set of variations of a method, we highlight (in bold) the variation that was closest to a repair (the minimal $S_{\mathsf{MVI}}$, and minimal number of removed tuples in the case of a tie). For the variations that were closest to satisfying the \aod, we also report the number of additional tuple removals required to reach a monotonic subset.




\begin{table}
\caption{Outlier removal results for stack overflow with $\alpha=\eavgagg$. For each method, the variation with the minimal sum of violations (and minimal number of removed tuples) is in bold. Removed+ is the number of additional tuple removals required to reach a monotonic subset after outlier removal.}
\label{tab:outlier_removal_res_so_avg}
\footnotesize
\begin{tabular}{llllll}
\toprule
\textbf{Method} & \textbf{Param.} & \textbf{Groupwise} & \textbf{\# Removed} & \textbf{$S_\mathsf{MVI}$} & \textbf{Removed+} \\
\midrule

Z score & 1 & No & 1480 & 6304 & \\
Z score & 1.5 & No & 1235 & 3680 & \\
Z score & 2 & No & 1037 & 0 & \\
Z score & 2.5 & No & 865 & 0 & \\
Z score & 3 & No & 724 & 0 & \\
Z score & 3.5 & No & 630 & 0 &\\
Z score & 4 & No & 544 & 0 &\\
Z score & 5 & No & 425 & 0 &\\
Z score & 6 & No & \textbf{369} & \textbf{0}  & 0\\
\hline
LOF & 3 & No & 318 & \textbf{25667} & 38\\
LOF & 5 & No & 412 & 32736  & \\
LOF & 8 & No & 417 & 35204 &\\
LOF & 10 & No & 465 & 31542 & \\
\hline
Iso.Forest & 0.001 & No & 23 & 17182 & \\
Iso.Forest & 0.005 & No & 23 & 17182 & \\
Iso.Forest & 0.01 & No & 23 & 17182 & \\
Iso.Forest & 0.05 & No & 1519 & \textbf{6515} & 7\\
Iso.Forest & 0.1 & No & 3125 & 8542 & \\
Iso.Forest & 0.2 & No & 6252 & 10569 & \\
\hline
Z score & 1 & Yes & 1509 & 11200 & \\
Z score & 1.5 & Yes & 1244 & 7073 & \\
Z score & 2 & Yes & 1062 & 2787 & \\
Z score & 2.5 & Yes & 888 & 6252 & \\
Z score & 3 & Yes & 733 & 2166 & \\
Z score & 3.5 & Yes & 638 & 4229 & \\
Z score & 4 & Yes & 558 & 2119 & \\
Z score & 5 & Yes & 441 & 3764 & \\
Z score & 6 & Yes & 373 & \textbf{0} & 0\\
\hline
LOF & 3 & Yes & 600 & 29850 & \\
LOF & 5 & Yes & 766 & 23598 & \\
LOF & 8 & Yes & 972 & \textbf{2142} & 18\\
LOF & 10 & Yes & 1055 & 3609 & \\
\hline
Iso.Forest & 0.001 & Yes & 23 & 17182 & \\
Iso.Forest & 0.005 & Yes & 83 & 15653 & \\
Iso.Forest & 0.01 & Yes & 179 & 36211 & \\
Iso.Forest & 0.05 & Yes & 1556 & \textbf{4424} & 5\\
Iso.Forest & 0.1 & Yes & 3112 & 5053 & \\
Iso.Forest & 0.2 & Yes & 6126 & 9060 & \\
\bottomrule
\end{tabular}
\end{table}

\begin{table}
\caption{Outlier removal results for H\&M with $\alpha=\emaxagg$. For each method, the variation with the minimal sum of violations (and minimal number of removed tuples) is in bold.}
\label{tab:outlier_removal_res_hm_max}
\footnotesize
\begin{tabular}{llllll}
\toprule
\textbf{Method} & \textbf{Param.} & \textbf{Groupwise} & \textbf{\# Removed} & \textbf{$S_\mathsf{MVI}$} & \textbf{Removed+} \\
\midrule
Z score & 1   & No & 462973 & 0 \\
Z score & 1.5 & No & 179157 & 0 \\
Z score & 2   & No & 84041 & 0 \\
Z score & 2.5 & No & 54412 & 0 \\
Z score & 3   & No & 21512 & 0 \\
Z score & 3.5 & No & 18258 & 1 \\
Z score & 4   & No & 8378  & 4 \\
Z score & 5   & No & 3158  & 4 \\
Z score & 6   & No & \textbf{1356}  & \textbf{0} & 0\\
\hline
Iso.Forest & 0.001 & No & 1262  & 4 \\
Iso.Forest & 0.005 & No & \textbf{8489}  & \textbf{0} & 0 \\
Iso.Forest & 0.01  & No & 18540 & 0 \\
Iso.Forest & 0.05  & No & 92774 & 0 \\
Iso.Forest & 0.1   & No & 181400 & 0 \\
Iso.Forest & 0.2   & No & 367546 & 0 \\
\hline
Z score & 1   & Yes  & 117349 & 458 \\
Z score & 1.5 & Yes  & 117349 & 577 \\
Z score & 2   & Yes  & 83664  & \textbf{4} & 1270 \\
Z score & 2.5 & Yes  & 54489  & 5 \\
Z score & 3   & Yes  & 21335  & 6 \\
Z score & 3.5 & Yes  & 18105  & 7 \\
Z score & 4   & Yes  & 9019   & 10 \\
Z score & 5   & Yes  & 3970   & 17 \\
Z score & 6   & Yes  & 1712   & 13 \\
\hline
Iso.Forest & 0.001 & Yes & 1385  & 47 \\
Iso.Forest & 0.005 & Yes & 8348  & 26 \\
Iso.Forest & 0.01  & Yes & 16657 & \textbf{13} & 4601\\
Iso.Forest & 0.05  & Yes & 83907 & 16 \\
Iso.Forest & 0.1   & Yes & 117349 & 577 \\
Iso.Forest & 0.2   & Yes & 117349 & 458 \\
\bottomrule
\end{tabular}
\end{table}

\begin{table}
\caption{Outlier removal results for H\&M with $\alpha=\esumagg$. For each method, the variation with the minimal sum of violations (and minimal number of removed tuples) is in bold.}
\label{tab:outlier_removal_res_hm_sum}
\footnotesize
\begin{tabular}{llllll}
\toprule
\textbf{Method} & \textbf{Param.} & \textbf{Groupwise} & \textbf{\# Removed} & \textbf{$S_\mathsf{MVI}$} & \textbf{Removed+} \\
\midrule

Z score & 1   & No  & 462973 & \textbf{88112} & 2797\\
Z score & 1.5 & No  & 179157 & 94748 \\
Z score & 2   & No  & 84041  & 106035 \\
Z score & 2.5 & No  & 54412  & 106576 \\
Z score & 3   & No  & 21512  & 106172 \\
Z score & 3.5 & No  & 18258  & 105808 \\
Z score & 4   & No  & 8378   & 106379 \\
Z score & 5   & No  & 3158   & 105009 \\
Z score & 6   & No  & 1356   & 105355 \\
\hline
Iso.Forest & 0.001 & No & 1262   & 105841 \\
Iso.Forest & 0.005 & No & 8489   & 105950 \\
Iso.Forest & 0.01  & No & 18540  & 106113 \\
Iso.Forest & 0.05  & No & 92774  & 107011 \\
Iso.Forest & 0.1   & No & 181400 & 107873 \\
Iso.Forest & 0.2   & No & 367546 & \textbf{99800} & 2270 \\
\midrule
Z score & 1   & Yes & 117349 & 94262 \\
Z score & 1.5 & Yes & 117349 & \textbf{89214} & 236279 \\
Z score & 2   & Yes & 83664  & 105083 \\
Z score & 2.5 & Yes & 54489  & 104848 \\
Z score & 3   & Yes & 21335  & 105236 \\
Z score & 3.5 & Yes & 18105  & 105568 \\
Z score & 4   & Yes & 9019   & 113748 \\
Z score & 5   & Yes & 3970   & 105007 \\
Z score & 6   & Yes & 1712   & 105355 \\
\hline
Iso.Forest & 0.001 & Yes & 1385   & 105355 \\
Iso.Forest & 0.005 & Yes & 8348   & 105950 \\
Iso.Forest & 0.01  & Yes & 16657  & \textbf{83644} & 1426 \\
Iso.Forest & 0.05  & Yes & 83907  & 111784 \\
Iso.Forest & 0.1   & Yes & 117349 & 95054 \\
Iso.Forest & 0.2   & Yes & 117349 & 151932 \\

\bottomrule
\end{tabular}
\end{table}

Most variations of outlier removal methods did not yield a monotonic subset (namely, a valid solution). 
For the \sumagg scenario, none of the methods yielded a monotonic subset, all of them removed over 30 times more tuples than necessary, and \dpalg had to remove a similar number of tuples as without the preliminary outlier detection.
For the \avgagg and \maxagg scenarios, some of the outlier removal methods did return a monotonic subset satisfying the \aod, but they required removing 10 times more tuples than the \crepair.

Still, outlier removal can occasionally accelerate \dpalg. For example, for SO with \avgagg, Isolation forest with 0.05 contamination (in both the group-wise and the non group-wise settings), there were only a few remaining tuples to remove. In both cases, \greedyalg
yielded tight bounds, thus greatly shortening the runtimes of \dpalg (60.5 seconds and 115.9 seconds, respectively, versus 1082 seconds).

Overall, we conclude that \emph{repairing for an \aod 
typically requires more than just applying outlier detection, 
as the tuples that obscure the trend are not necessarily statistical outliers}. They may appear structurally standard, but still interfere with the expected trend.

\section{Additional Case Studies}\label{sec:case_studies_plus}
\paragraph{H\&M}
For H\&M we considered the sum of purchases grouped by age between the ages of 25 and 40, which exhibited an almost perfect decrease. Both the heuristic and the exact algorithm found a \crepair for the \aod $age \searrow \esumagg(\mbox{price})$ by removing 1464 tuples (0.07\%). A \crepair for the reversed \aod $age \nearrow \esumagg(\mbox{price})$ would require removing 741,302 tuples (39\%), meaning that indeed the dataset is much closer to supporting $age \searrow \esumagg(\mbox{price})$.

\paragraph{German credit}
We focused on the average credit risk score over time that a person has been employed in their place of work. We would expect that as the employment time is longer, the person is more likely to be labeled as a good credit risk. However, the rate for unemployed (63\%) was higher than for people employed under 1 year (59\%), and the rate for people with over 7 years of employment in their place of work (75\%) was lower than for people with 4-7 years of employment (78\%). Both \greedyalg and \dpalg found a \crepair by removing 16 (1.6\%) tuples, removing people with over 7 years of employment, and unemployed individuals.
A \crepair for the opposite direction \aod required removing 90 (9\%) tuples.

\paragraph{Stack Overflow}
\begin{figure}[b]
    \centering

    \includegraphics[width=\linewidth]{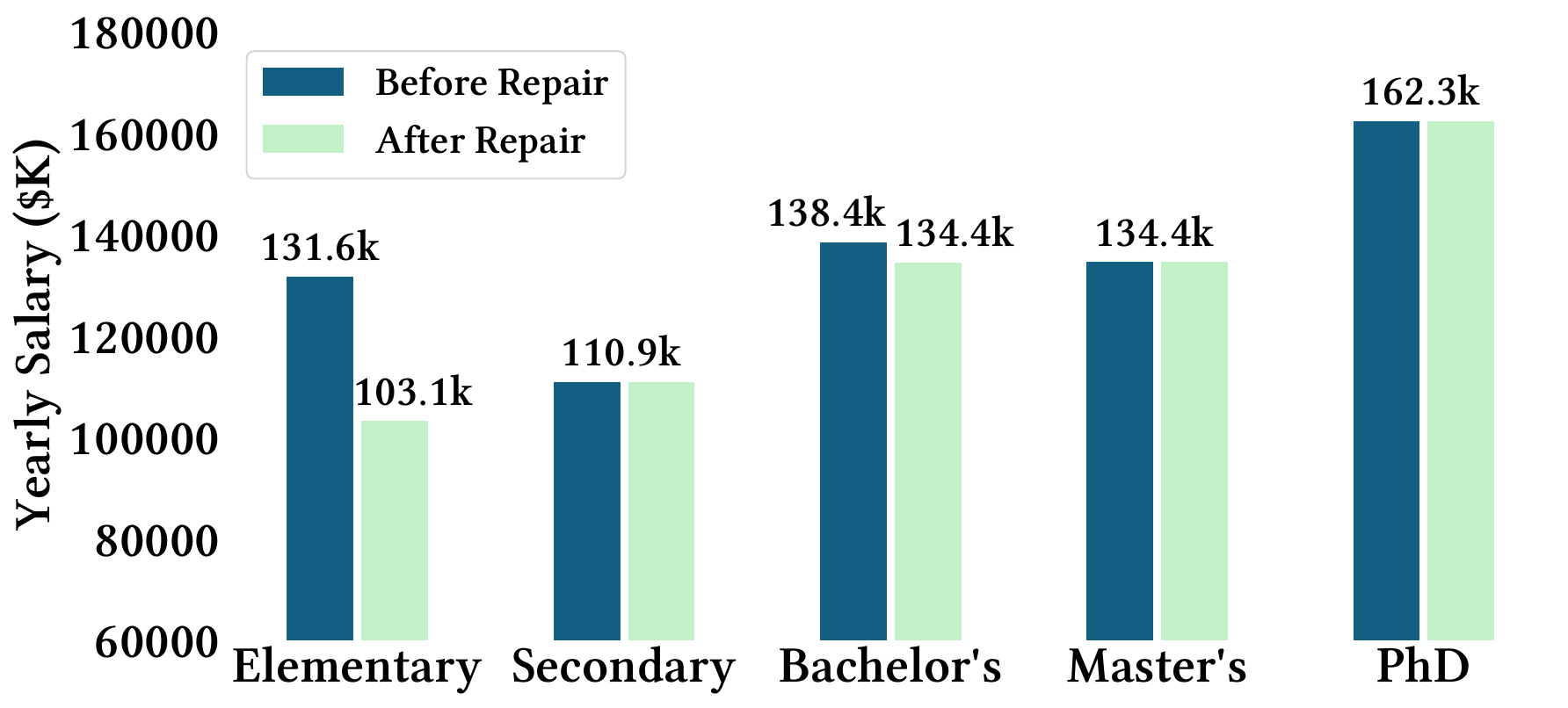}
    \caption{Average yearly salary over education level, according to the Stack Overflow dataset.}
    \label{fig:so_avg_case_study}
\end{figure}

For the \aod $\mbox{edu}\nearrow\eavgagg(\mbox{income})$ (\Cref{fig:so_avg_case_study}), we considered the categories primary school, secondary school, Bachelor's degree, Master's degree, and PhD (excluding professional degrees like MD, JD). The original dataset does not satisfy the \aod, as the average salary for primary school is too high (\$131.6K versus \$110.9K for secondary school) and the salary for Bachelor's degree (\$138.4K) is higher than for Master's (\$134.4K). \dpalg and \greedyalg removed the exact same subset of 37 tuples (0.12\% of the data), but \greedyalg's runtime was much faster (0.61 seconds versus 1082.42 seconds for the most optimized version of \dpalg). For the opposite \aod, the \greedyalg removed 403 tuples (1.3\% of the dataset), but we were unable to run \dpalg, since even with pruning to keep only repairs that remove up to 403 tuples, the algorithm would take days to run.


\hide{
\subsubsection{H\&M age}
Consider the sum of H\&M purchase prices grouped by age. As mentioned previously, we notice an interesting phenomenon (\Cref{fig:hm_gb_result}) where there is a peak at the age of 25, and then a clear decline until the age of 40, followed by another, lower peak at 50. 
\begin{figure}
    \centering
    \includegraphics[width=0.5\linewidth]{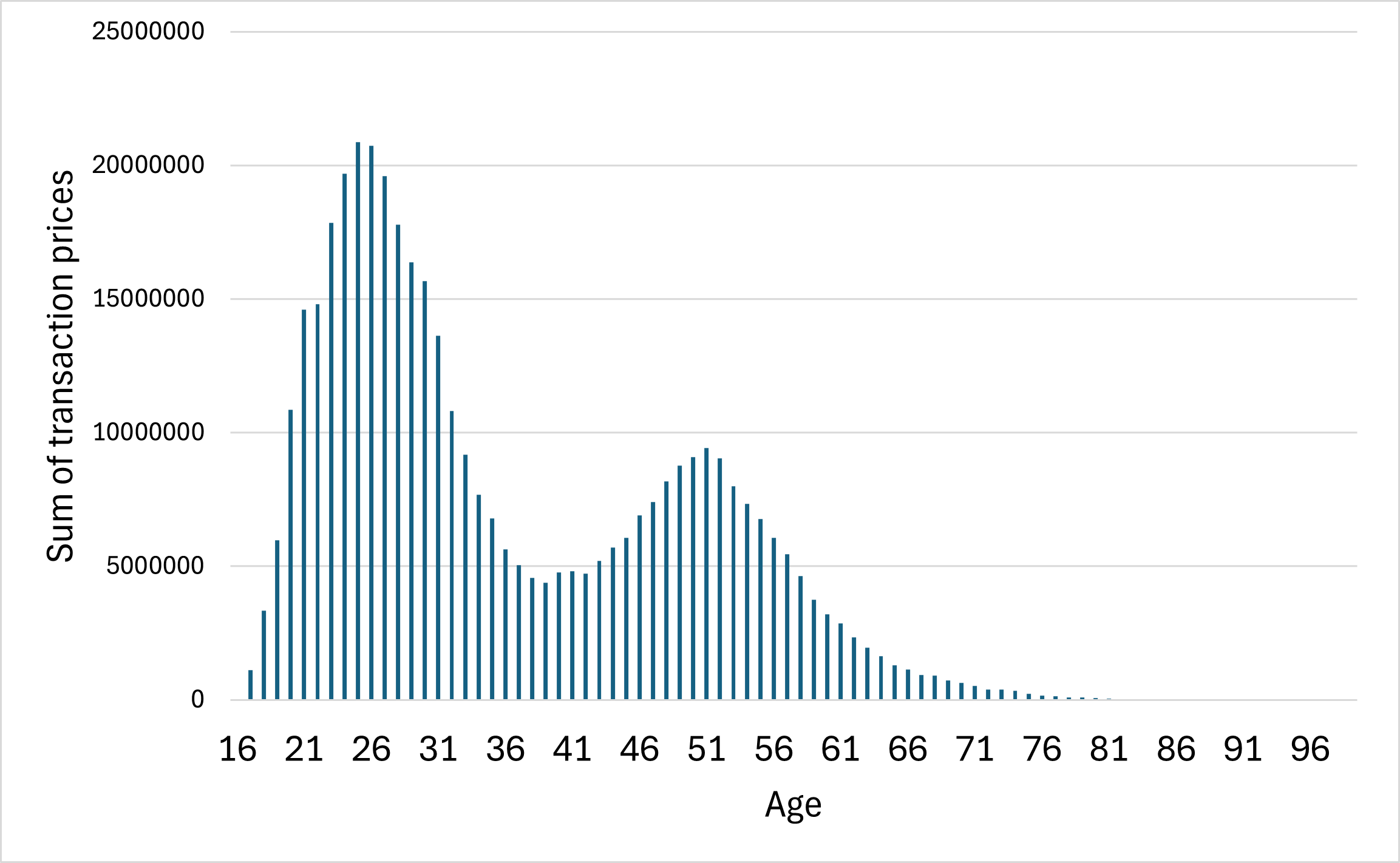}
    \caption{H\&M sum of transaction prices grouped by age.}
    \label{fig:hm_gb_result}
\end{figure}

However, in the time period of May to July that we focused on, the trend is slightly violated by an early increase before 40 (at 38 \red{TODO verify}). We can use the AOD repair to quantify the distance from a database where the AOD is satisfied. We ran the DP algorithm and found that removing 1464 tuples (which are 0.07\% of the dataset) is enough to satisfy the AOD. In comparison, to repair for a reversed AOD that requires that the sum of purchases increases with age between ages 25 and 40, would require to remove 741,302 tuples (39\%). \red{TODO what is special about the removed tuples?}

\subsubsection{H\&M price}
Question: where does most of the revenue come from? Cheap or expensive items? Does revenue increase as the price increases or decreases? Consider the total revenue (sum of transaction prices) from each price bucket \Cref{fig:hm_price_bin_gb_result}. It seems that none of the hypothesis is correct, but which one is closer to reality?

\begin{figure}
    \centering
    \includegraphics[width=0.5\linewidth]{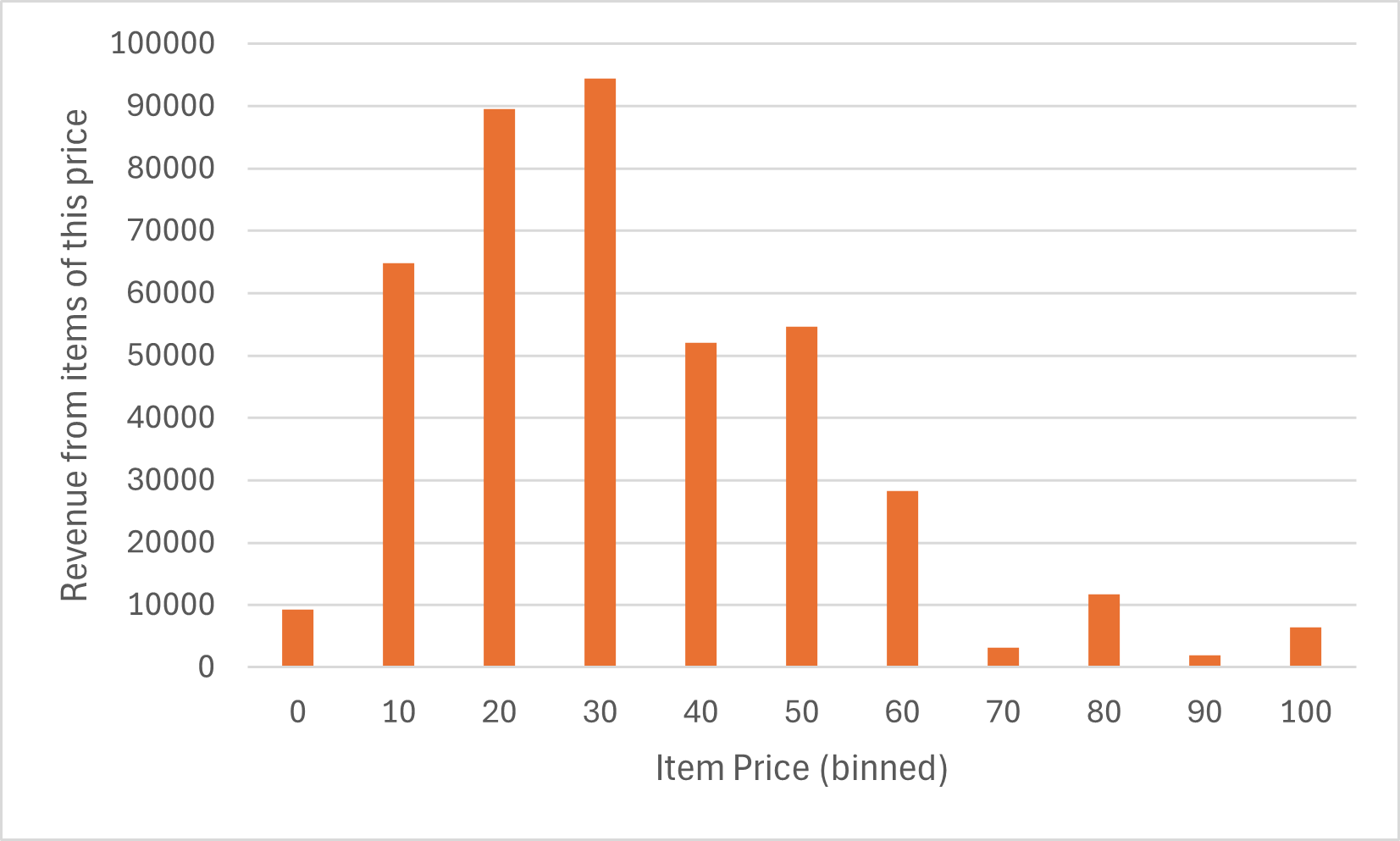}
    \caption{H\&M sum of transaction prices grouped by the item price.}
    \label{fig:hm_price_bin_gb_result}
\end{figure}

When considering the \aod requiring an increase in revenue as the item price grows, the cardinality repair size is 2,869,924 (out of 15,039,894). 
The removed tuples were the full groups 40-100. 
}


\section{Pitfalls of the Heuristic Algorithm for Median}\label{sec:median_difficulty}



One of the key challenges with \greedyalg for \medianagg is its inefficiency in removing tuples. 
The main issue is that for \medianagg, the impact of the removed tuple depends on the other tuples in the group (specifically those in proximity of the current median). In these cases, the greedy approach that only sees a single step ahead is likely to make suboptimal choices.
    


\begin{figure}
  \begin{center}
    \centering
    \begin{tabular}{|c|c|}
        \hline
        \rowcolor{gray!30} \textbf{$r_1$} & \textbf{$r_2$} \\ \hline
        \cellcolor{red!30} 11 & 10 \\ \hline
        \cellcolor{red!30} 10 & 5 \\ \hline
        \cellcolor{red!30} \textbf{9}  & \cellcolor{green!30} \textbf{5} \\ \hline
        \cellcolor{red!30} 8  & \cellcolor{green!30} 5 \\ \hline
        \cellcolor{red!30} 7  & \cellcolor{green!30} 1 \\ \hline
    \end{tabular}
    \end{center}
     \caption{ \label{tab:example_greedy_median}Heuristic algorithm example for \medianagg aggregation.}
\end{figure}

    \begin{example}
    Consider a relation $r$ consisting of two groups, $r_1$ with $G$ value $1$ and $r_2$ with $G$ value $2$. The two columns $r_1,r_2$ in Table~\ref{tab:example_greedy_median} represent the bag of values in each group (e.g.,~the value $11$ in $r_1$ represents the tuple $(1,11)$). We consider the \aod: $G\nearrow \emedianagg(A)$. 
    Initially, $\emedianagg(r_1\dbr{A})=9$ and $\emedianagg(r_2\dbr{A})=5$. The impact of removing each single tuple from $r_2$ is $0$, since there are multiple instances of the median value. 
    The algorithm then chooses to remove the tuple $(1,11)$ that has impact $0.5$ on $S_{\textsf{MVI}}$, as after the removal, $\emedianagg(r_1\dbr{A})=9$.
    The situation is repeated in each iteration of the algorithm until all five tuples in $r_1$ are removed (marked in red), where an optimal solution is to remove the three lowest-value tuples from $r_2$ (marked in green). \qed 
    \end{example}


A possible improvement to \greedyalg could consider one group $r_i$ at a time (instead of one tuple at a time) and try to reduce the violations of this group to 0. For median, given $\mathrm{LB} = \emedianagg(r_{i-1}\dbr{A})$ and $\mathrm{UB}=\emedianagg(r_{i+1}\dbr{A})$, it is possible to devise a closed formula to find the maximum-sized subset $r'_i\subseteq r_i$ such that $\mathrm{LB} \leq \emedianagg(r_i\dbr{A})\leq \mathrm{UB}$.

\section{Graphs with Additional Information}\label{sec:graphs_with_CIs}
In \Cref{sec:experiments} we have omitted the confidence intervals from \Cref{fig:runtime_per_aggregation_synth}, to make them more accessible. We include them here in \Cref{fig:runtime_per_aggregation_synth_plus_CI}.

For the comparison between aggregations (\Cref{fig:runtime_per_aggregation_synth}) we have included \Cref{fig:runtime_per_aggregation_synth_plus_CI}.

\begin{figure}[b]
    \centering
    \subfloat[Increasing \# tuples.\label{fig:scale_rows_CI}]{
        \includegraphics[width=0.54\linewidth]{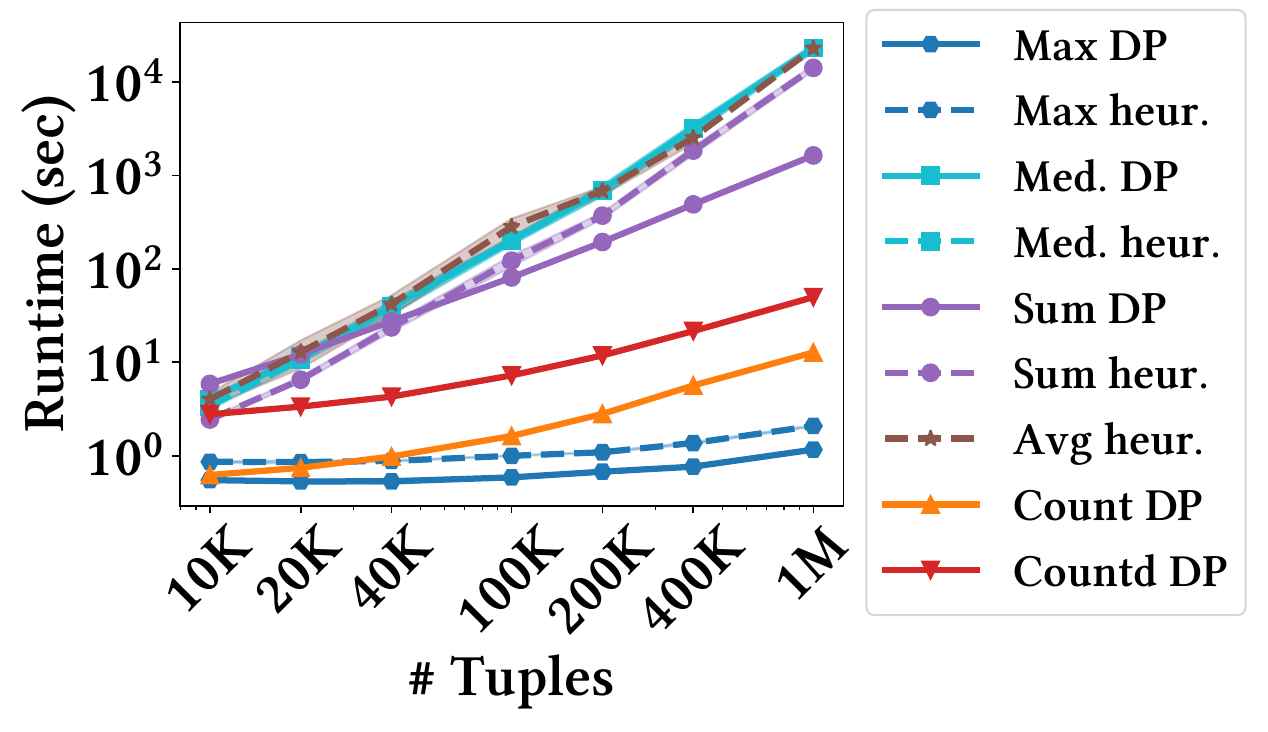}
    }
    \subfloat[Increasing \# \addtuples.\label{fig:scale_violations_CI}]{
        \includegraphics[width=0.42\linewidth]{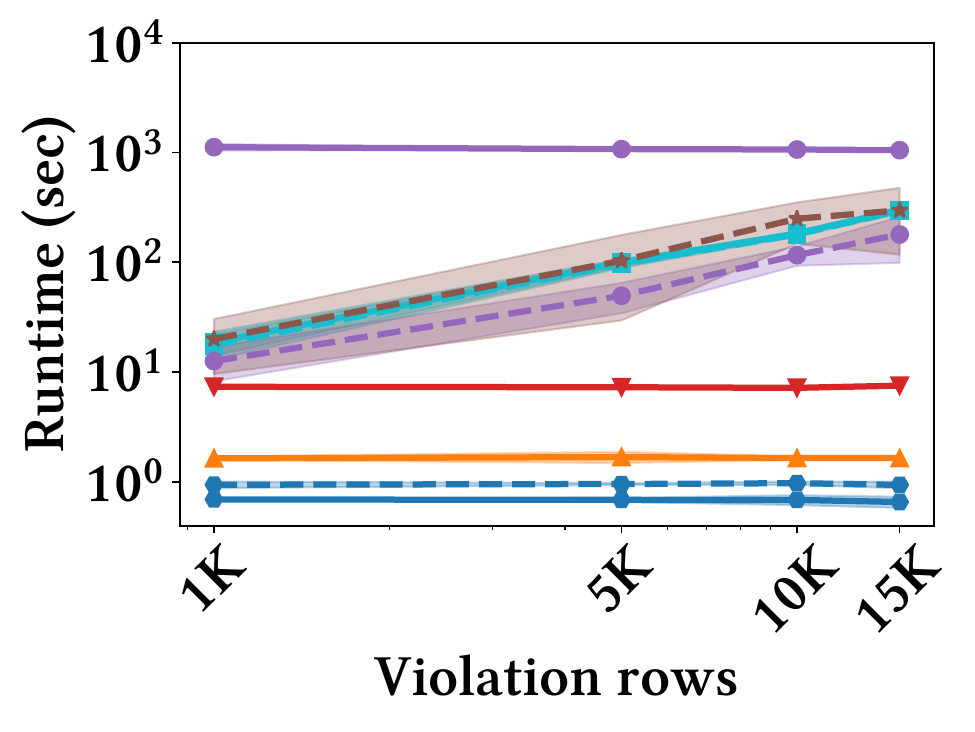}
    }
    \caption{Run times of \dpalg and \greedyalg with various aggregations over synthetic data, with 95\% confidence intervals. This is a fuller version of \Cref{fig:runtime_per_aggregation_synth}.}
    \label{fig:runtime_per_aggregation_synth_plus_CI}
\end{figure}



\end{document}